\newcommand{\ac}{\cdot}
\newcommand{\Ad}{\mathop{\mathrm{Conj}}\nolimits}
\newcommand{\ad}{\mathop{\mathrm{Ad}}\nolimits}
\newcommand{\ads}{\mathop{\mathrm{ad}}\nolimits}
\newcommand{\Aset}{\mathcal{A}}
\newcommand{\Adm}{\mathcal{AV}}
\newcommand{\dd}{\mathrm{d}}
\newcommand{\dist}{\mathop{\mathrm{dist}}\nolimits}
\newcommand{\dlog}{\mathop{\mathrm{dlog}}\nolimits}
\newcommand{\EL}{\mathcal{EL}}
\newcommand{\expp}[1]{{\mathop{\mathrm{exp}}#1}}
\newcommand{\Fr}{\mathop{\mathrm{fr}}\nolimits}
\newcommand{\Hom}{\mathop{\mathrm{Hom}}\nolimits}
\newcommand{\Id}{\mathop{\mathrm{Id}}\nolimits}
\newcommand{\Int}{\mathop{\mathrm{int}}\nolimits}
\newcommand{\inv}{\mathop{\mathrm{inv}}\nolimits}
\newcommand{\Ext}{\mathop{\mathrm{ext}}\nolimits}
\newcommand{\La}{\mathcal{L}}
\newcommand{\LL}{\mathbb{L}}
\newcommand{\Map}{\mathop{\mathrm{Map}}\nolimits}
\newcommand{\odd}{\mathrm{odd}}
\newcommand{\oo}{\mathrm{o}}
\newcommand{\proj}{\mathop{\mathrm{proj}}\nolimits}
\newcommand{\RR}{\mathbb{R}}
\newcommand{\sgn}{\mathrm{sgn}}
\newcommand{\Trace}{\mathop{\mathrm{trz}}\nolimits}
\newcommand{\vol}{\mathrm{vol}}
\newcommand{\X}{\mathfrak{X}}
\newcommand{\ZZ}{\mathbb{Z}}
\newtheorem{theorem}{Theorem }
\numberwithin{theorem}{section}
\newdefinition{define}[theorem]{Definition }
\newproof{proof}{Proof }
\numberwithin{equation}{section}
\newtheorem{corollary}[theorem]{Corollary}
\newtheorem{lemma}[theorem]{Lemma}
\newtheorem{prop}[theorem]{Proposition}
\newdefinition{example}[theorem]{Example}
\newdefinition{remark}[theorem]{Remark }
\title{Symmetry-preserving discretization of variational field theories}
\author[UNL]{Ana~Cristina~Casimiro\corref{correspondiente} }
\ead{amc@unl.fct.pt}
\author[Lisboa]{C\'{e}sar~Rodrigo}
\ead{crodrigo@geomat-pt.com}
\address[UNL]{CMA - Centro de Matem\'{a}tica e Aplica\c{c}\~{o}es, Portugal. \\
Departmento de Matem\'{a}tica,
Faculdade de Ci\^{e}ncias e Tecnologia, Universidade Nova de Lisboa,\\ Quinta da Torre 2829-516 Caparica, Portugal}
\address[Lisboa]{CMAF-CIO - Centro de Matem\'{a}tica, Aplica\c{c}\~{o}es Fundament\'{a}is e Investiga\c{c}\~ao Operacional \\ CINAMIL, Academia Militar\\ Av.~Conde Castro Guimar\~{a}es, 2720-113 Amadora, Portugal}
\begin{document}
\begin{abstract}
The present paper develops a variational theory of discrete fields defined on abstract cellular complexes. The discrete formulation is derived solely from a variational principle associated to a discrete Lagrangian density on a discrete bundle, and developed up to the notion of symmetries and conservation laws for solutions of the discrete field equations. The notion of variational integrator for a Cauchy problem associated to this variational principle is also studied. The theory is then  connected with the classical (smooth) formulation of variational field theories, describing a functorial method to derive a discrete Lagrangian density from a smooth Lagrangian density on a Riemannian fibered manifold, so that all symmetries of the Lagrangian turn into symmetries of the corresponding discrete Lagrangian. Elements of the discrete and smooth theories are compared and all sources of error between them are identified. Finally the whole theory is illustrated with the discretization of the classical variational formulation of the kinematics of a Cosserat rod.
\end{abstract}

\begin{keyword}
Discrete Variational Principles \sep Discrete field theory \sep Conserved currents \sep Variational Integrators  \sep Abstract cellular
complex \sep Covariant discretization \sep Simplicial geodesic interpolation.

\MSC[2010]
39A12 \sep 
49M25 \sep 
65D05 \sep 
65M08 \sep 
65M22 \sep 
74H15 \sep 
76M10 
\end{keyword}

\maketitle

\section{Introduction}
Most field equations arising in Physics can be derived from some variational principle. The group of symmetries of the action functional is a relevant aspect from a physical point of view, leading to new formulations. In the presence of symmetries one may derive equivalent equations on some reduced space, associated multisymplectic forms, or Poisson brackets that describe the field from a different perspective. In order to take advantage of these symmetries (reflecting the geometrical properties of the field), the most convenient formulation (see for example \cite{GarcGarcRodr06,GoldStern,Gotay98}) is to identify the field as a mapping $y(x)\colon X\rightarrow Y$, section of some fibered manifold $\pi\colon Y\rightarrow X$, and to characterize the field equations as a system of Euler-Lagrange equations, that is, some second order partial differential equations on the set $\Gamma(X,Y)$ of all possible sections, representing necessary conditions for the section to minimize the action with respect to some set of admissible variations. We refer to the previous references for technical details. In the same manner as for the references, for this work all objects are assumed to be infinitely differentiable.

The simplest case, mechanics, has as base space $X=\mathbb{R}$, the time line, and as bundle a product $Y=\mathbb{R}\times Q$ where $Q$ is a finite dimensional manifold, the configuration space. A given Lagrangian function $L\colon \mathbb{R}\times TQ\rightarrow \mathbb{R}$ determines an action functional $\mathbb{L}_{[0,s]}$ defined by:
	\begin{equation*}
	q\in \Map([0,s],Q)\mapsto \mathbb{L}_{[0,s]}(q)=\int_0^s L(t,q(t),\dot q(t))\,\dd t
	\end{equation*}
A variational principle seeks for trajectories $q(t)$ that are stationary for $\mathbb{L}_{[0,s]}$, with respect to certain admissible infinitesimal variations. Different choices of admissible variations lead to different equations that appear in classical mechanics, control theory, and constrained mechanics. For the choice of variations $\delta q(t)\in T_{q(t)}Q$ along the trajectory $q(t)$, whose support is contained in the interior $(0,s)$ of the integration domain $[0,s]$, criticality is characterized by Euler equations 
	\begin{equation}\label{Eulersimples}
	\frac{\dd}{\dd t}\left(\frac{\partial L}{\partial \dot q^k}(t,q(t),\dot q(t))\right)=\frac{\partial L}{\partial q^k}(t,q(t),\dot q(t)),\qquad \forall k
	\end{equation}

Moreover, any time-dependent vector field $D=\xi^k(t,q)\frac{\partial}{\partial q^k}\in \mathfrak{X}(Q)$ that is infinitesimal symmetry of the Lagrangian function (that is, $\xi^k\frac{\partial L}{\partial q^k}+(\frac{\partial\xi^k}{\partial t}+\dot q^s\frac{\partial\xi^k}{\partial q^s} )\frac{\partial L}{\partial \dot q^k}=0$) induces a corresponding conserved quantity  \cite{Noether18}, a function $\mu_D=\xi^k \frac{\partial L}{\partial \dot q^k}$ on $\mathbb{R}\times TQ$, such that its restriction to $(t,q(t),\dot q(t))$ for any trajectory satisfying Euler equations turns out to be constant:
	\begin{equation*}
	\mu_D(t,q(t),\dot q(t))-\mu_D(0,q(0),\dot q(0))=0,\qquad \forall t\in [0,s],\quad \text{ if } q\in \Map([0,s],Q)\text{ critical}
	\end{equation*}

The general case (field theories, in a proper sense) is determined by some fibered manifold $Y\rightarrow X$ over some $n$-dimensional manifold $X$, a fixed volume element $\vol^X\in \Omega^n(X)$, and a function $\La\colon J^1Y\rightarrow \mathbb{R}$ on the first jet bundle $J^1Y$ associated to the fibered manifold. Both of them determine $\La\cdot\vol^X$, the Lagrangian density, which leads to functionals (depending on the choice of domain of integration $\Aset\subseteq X$):
	\begin{equation*}
	\mathbb{L}_{\Aset}(y)=\int_{\Aset} (j^1y)^*\La\cdot\vol^X
	\end{equation*}
where $j^1y\colon X\rightarrow J^1Y$ is the 1-jet extension of $y(x)$. Two sections $y(x),\bar y(x)$ determine on a point $x\in X$ the same jet $j^1_x y=j^1_x\bar y$ if their values and directional derivatives coincide on $x$. If we consider fibered local coordinates $(x^i,y^k)$ on the bundle $Y$, there exist induced fibered local coordinates $(x^i,y^k,\partial_iy^k)$ on $J^1Y$, so that $(\partial_iy^k)(j^1_xy)=\partial y^k(x)/\partial x^i$.

When one considers as admissible variations those with compact support contained in the interior of $\Aset$, a necessary condition for a section $y(x)$ to be a minimum of the action functional is that at the interior points of $\Aset$ the section $y(x)$ satisfies a system of second order partial differential equations known as Euler-Lagrange equations. In a system of fibered local coordinates $(x^i,y^k)$, if $\vol^X=\dd x^1\wedge\ldots \wedge \dd x^n$ and $\La=\La(x^i,y^k,\partial_iy^k)$, this system of equations is:
	\begin{equation}\label{EL}
	\sum_i\frac{\dd}{\dd x^i}\left(\frac{\partial \La}{\partial (\partial_iy^k)}(x,y(x),(\partial y/\partial x)(x))\right)=\frac{\partial \La}{\partial y^k}(x,y(x),(\partial y/\partial x)(x)),\qquad \forall k
	\end{equation}
Again, for any vertical vector field $D=\xi^k(x,y)\cdot \frac{\partial}{\partial y^k}\in \mathfrak{X}(Y)$, such that
	\begin{equation*}
	\xi^k\frac{\partial\La}{\partial y^k}+\left(\frac{\partial \xi^k}{\partial x^i}+(\partial_i y^s)\frac{\partial \xi^k}{\partial y^s}\right)\frac{\partial \La}{\partial(\partial_i y^k)}=0
	\end{equation*}
(the 1-jet extension $j^1D\in\mathfrak{X}(J^1Y)$ is infinitessimal symmetry for the Lagrangian function $\La$), there exists a corresponding Noether current, a differential form $\mu_D=\xi^k\frac{\partial \La}{\partial (\partial_iy^k)} \cdot i_{\partial/\partial x^i}\vol^X$ on $J^1Y$ that, restricted to any solution of Euler-Lagrange equations turns out to be closed, so that its integration on any regular domain vanishes.
	\begin{equation}\label{NoetSmooth}
		\int_{\partial \Aset} \mu_D\circ (j^1y)=0,\qquad \forall \Aset\subset X\text{ if }y\in\Gamma(X,Y)\text{ critical}
	\end{equation}
These currents have the same role in field theories as conserved quantities do in mechanics. The reader is referred to \cite{GoldStern} for more details. For the more general theory of variational problems with constraints on fibered manifolds we refer to \cite{GarcGarcRodr06} and references therein

The existence of symmetries and conserved quantities becomes a central tool for solving Euler-Lagrange equations, reducing them to a smaller configuration space, and leading to equivalent formulations and additional structures.

Recent developments \cite{BouMarsden09,FerGarRod12,HaiLubWan06,LeokShin12,McLaQuis06,MoseVese91,WendMars97,West04} in numerical methods for Ordinary Differential Equations show that a strong tool to obtain integrators for equations (\ref{Eulersimples}) with good long-term properties is to exploit the variational origin of these equations and to develop a discrete analogue of this formalism. In this way one obtains the whole discrete variational theory, including Noether conservation laws. The discrete analogue of Euler equations determines integrators for these equations \cite{FernBlocOlve12,FerGarRod12,LeokShin12,McLaQuis06,MoseVese91,WendMars97,West04}, whose solutions can be compared to the solutions of the original smooth theory.

This kind of ideas has been applied in the case of Partial Differential Equations (\ref{EL}) derived from a variational principle in field theories \cite{Bahr,CasiRodr12,CasiRodr12b,Guo,LeonMarrDieg08,LewMarsOrtiWest03,MarsPatrShko98,Vank07}. The authors have given in \cite{CasiRodr12} a variational formalism for discrete fields defined on a discrete space that has the structure of cellular complex. That work led to applications \cite{CasiRodr12b} where integrators with energy-preserving properties are obtained.

There already exists a certain amount of works dealing with discretization of dynamics of different continuous materials, from a variational point of view \cite{Demoures15,LeonMarrDieg08,LewMarsOrtiWest03}. In most of these cases the continuous material in not discretized at all \cite{LeonMarrDieg08} or is discretized in finitely many elements $a\in A$, which ensures the existence of a large-dimensional configuration space, and the evolution on this configuration space is then discretized on the time variable. The introduction of discrete Lagrangian densities in these theories is performed using known techniques from discrete mechanics (that discretize certain ODEs with several unknowns), adapted to this situation. In these formalisms, the discrete Lagrangians that arise are expressed in some form:
	\begin{equation*}
	L(q_a(t_{k-1}),q_a(t_k)) \qquad\text{ several values for index }a\in A\text{; fixed value }k\in\mathbb{Z}
	\end{equation*}
which is a function defined in a large manifold $(\prod_{a\in A}Q)\times (\prod_{a\in A}Q)$. Dealing in appropriate manner with this mechanical problem leads to Euler equations and conservation laws in the sense of mechanics that, for this particular situation, can be expressed locally in terms of the discretized material elements.

In other works there exist a discrete Lagrangian defined on a low-dimensional space \cite{Guo,MarsPatrShko98,Vank07}, for a particular discrete model of the plane. In particular, the widely used lattice field theories \cite{Schwalm,Wise06}, lattice gravity \cite{Hamber97} and Regge calculus \cite{Bahr} display discrete analogues of gauge theories using lattices to describe the physical gauge field. Finite element theory is another domain where discrete fields appear as elements of a finite-dimensional space, that of finite elements \cite{ArnFalWin06,ChriMuntOwre11,Deylen15}. However, this approach focuses mainly on error bounds, leaving a too narrow margin for the introduction of geometrical tools that may appear in the presence of symmetries. In these theories one may find so-called approximate conservation laws, which represent a discrete version of the smooth conserved quantities given by Noether's theorem, together with error bounds for these objects with respect to Noether currents appearing in the smooth formulation.

Therefore, both numerical and geometrical concerns appear when we aim to discretize some field theory arising from a variational principle. This task demands the consideration of two usually incompatible aspects: the study of symmetries, and the study of errors. For the first one, relevant aspects are geometrical transformations that respect all objects with geometrical and physical interest in the theory. For the second one, the relevant aspect is the behavior of error and how it can be bounded, an error that is usually measured with respect to a norm without geometrical interpretation, whose interest lies on floating point arithmetic properties.

In the present work we provide a general geometrical variational formulation of discrete field theory on abstract cellular complexes with an extensive exploration of its analogies with smooth variational theories. Though it is not always explicitly stated, the model of discrete space most suited for geometrical considerations seems to be that of abstract cellular complex (see the case of triangulations or quad-graphs in \cite{AdlBobSur04,ArnFalWin06,CasiRodr12,DesbKansTong06,Deylen15,Schwalm,Sengupta04}). See \cite{ArnFalWin06,ChriMuntOwre11} for approaches to triangulations, cellular complexes, its refinements, and how to associate finite element systems on such a structure. 
In this work we shall present a variational problem on a discrete space with a structure of abstract cellular complex.

\section{Abstract Cellular Complexes}

We shall follow \cite{CasiRodr12} (see also \cite{DesbKansTong06} for the case of simplicial cells):

\begin{define}\label{defineACC}
We call $n$-dimensional abstract cellular complex a set $V$ (the cells) together with two mappings $\dim\colon V\rightarrow \{0,1,\ldots,n\}$ and $[\,\colon]\colon V\times V\rightarrow \{-1,0,+1\}$ (the dimension and incidence mappings, respectively) such that:
    \begin{enumerate}
    \item\label{cond1} There exists $\beta\in X$ with $\dim\beta=n$.
    \item\label{cond2} If $\alpha,\beta\in V$ satisfy $[\beta\colon\alpha]\neq 0$ then $\dim\alpha=\dim\beta-1$.
    \item\label{cond3} For each cell $\alpha\in V$ there exists only a finite number of cells $\gamma\in V$ such that $[\alpha\colon\gamma]\neq 0$, and a finite number of cells $\beta\in V$ such that $[\beta\colon\alpha]\neq 0$.
    \item\label{cond4} For each pair of cells $\beta,\gamma\in V$ there holds
        \begin{equation}\label{complex}
        \sum_{\alpha\in V} [\beta\colon \alpha]\cdot [\alpha\colon \gamma]=0
				\end{equation}
    \end{enumerate}
\end{define}
As we shall see, the incidence mapping determines the boundary and coboundary operators. Condition \ref{cond1} defines the dimension of a complex as the highest dimension of its cells. Condition \ref{cond2} establishes that boundaries (defined below in (\ref{discstokes})) have codimension 1. Condition \ref{cond3} is a finiteness hypothesis in order to avoid infinite sums, and condition \ref{cond4} states that the boundary of a boundary vanishes.

The dimension mapping allows to define $V_k=\{\beta\in V\colon \dim\beta=k\}$ so that $V=\bigsqcup_{k=0}^n V_k$ ($V_0$ is the set of $0$-cells
or vertices, $V_1$ is the set of oriented $1$-cells or edges, and $V_k$ is the set of oriented $k$-cells). We say a cell $\alpha\in V$ is incident to a cell $\beta\in V$ with compatible orientation if $[\beta\colon\alpha]=1$,
incident with opposite orientations if $[\beta\colon\alpha]=-1$ and non-incident if $[\beta\colon \alpha]=0$.

The incidence mapping introduces a topology on $V$, where a cell $\alpha\in V_{k-l}$ is said to be adherent to other cell $\beta\in V_k$ (and we write $\alpha\prec\beta$) if $\alpha=\beta$ or if there exists a sequence of cells $\alpha=\alpha_{k-l},\alpha_{k-l+1},\alpha_{k-l+2},\ldots,
\alpha_{k}=\beta$ each incident to the next one: $0\neq[\alpha_{k-i+1}\colon \alpha_{k-i}]$, for $i=1,\ldots,l$. Moreover, we may talk of  discrete oriented domains of integration $c_k\in C_k(V,\ZZ)$ (or $k$-chains) and discrete $k$-forms $\omega_k\in \Omega^k(V)$ (or $k$-cochains)
    \begin{equation*}
    C_k(V,\ZZ)=\bigoplus_{\alpha\in V_k}\ZZ,\quad \text{ free abelian group generated by }V_k
    \end{equation*}
    \begin{equation*}
    \Omega^k(V)=\Map(V_k,\RR)=\prod\limits_{\alpha\in V_k}\RR,\quad \text{ real functions defined on }V_k
    \end{equation*}
 between these spaces there exists a natural bilinear product (discrete integration operator) and the incidence mapping generates two linear operators: the differential $\dd_k$ of discrete $k$-forms and the boundary $\partial_{k+1}$ of discrete $(k+1)$-chains (domains of integration):
    \begin{equation*}
    \langle c,\omega\rangle=\sum_{\alpha\in V_k} c(\alpha)\cdot \omega(\alpha),\qquad c\in C_k(V,\ZZ),\, \omega\in \Omega^k(V)
    \end{equation*}
    \begin{equation*}
    (\dd_k \omega)(\beta)=\sum_{\alpha\prec\beta} [\beta\colon\alpha]\cdot\omega(\alpha),\quad \omega\in\Omega^k(V),\,\beta\in V_{k+1}
    \end{equation*}
    \begin{equation*}
    (\partial_{k+1} c)(\alpha)= \sum_{\alpha\prec\beta} [\beta\colon\alpha]\cdot c(\beta),\quad c\in C_{k+1}(V,\ZZ),\,\alpha\in V_k
    \end{equation*}
Here $\dd_k$ is well defined because of first part in condition \ref{cond3} in definition \ref{defineACC}. 
As one can readily see, these objects satisfy a discrete analogue of Stokes' formula:
    \begin{equation}\label{discstokes}
    \langle c,\dd \omega\rangle=\langle \partial c,\omega\rangle, \quad c\in C_{k+1}(V,\ZZ), \, \omega\in \Omega^k(V)
    \end{equation}
(when there is no possibility of misunderstanding, we will suppress the index $k$ for the boundary $\partial_k$ and the differential $d_k$ operators)

Any $k$-cell $\beta\in V_k$ can be seen as a $k$-chain $c_\beta$ (taking value $1$ on $\beta$ and $0$ on any other cell). In fact, $k$-cells generate the free $\mathbb{Z}$-module $C_k(V,\mathbb{Z})$. The boundary of $\beta$ (that is, of $c_\beta$) may be considered as the set of its incident cells $\beta\in V_{k-1}$ each with
positive or negative weight depending on the compatibility of its orientation and that of $\gamma$. Condition (\ref{complex}) imposed for the incidence
mapping is equivalent to $\partial_{k}\circ\partial_{k+1}(c_\beta)=0$, so $\partial_k\circ\partial_{k+1}=0$ on any chain.

Following $\partial_{k}\circ\partial_{k+1}=0$ and discrete Stokes' formula (\ref{discstokes}), we have $\dd_{k+1}\circ\dd_k=0$.  Therefore we have a chain complex $\partial_k\colon C_k(V,\mathbb{Z})\rightarrow C_{k-1}(V,\mathbb{Z})$ and a co-chain complex $\dd_k\colon \Omega^k(V)\rightarrow \Omega^{k+1}(V)$ that
define homology and co-homology groups $\ker\partial_k/\mathop{\mathrm{Im}}\partial_{k+1}$, $\ker\dd_{k+1}/\mathop{\mathrm{Im}}\dd_{k}$.

This is the minimal machinery required to introduce the notion of Lagrangian density and the variational problem
associated to it (see \cite{CasiRodr12}). In these problems we shall restrict ourselves to integration domains given as a characteristic chain of some finite subset of $n$-cells:
\begin{define}
Any finite subset $\Aset\subset V_n$ of $n$-cells defines a corresponding characteristic chain $c_{\Aset}\in C_n(V,\mathbb{Z})$, whose value on some $n$-cell $\beta\in V_n$ will be $1$ if $\beta\in \Aset$ or else $0$. We shall refer to $\langle c_{\Aset},\omega\rangle=\sum_{\beta\in\Aset} \omega(\beta)$ as the integration over $\Aset$ of some $n$-cochain $\omega\in \Omega^n(V)$.
\end{define}
A particularly interesting case of set and characteristic chain is the star or sphere associated to a vertex:
\begin{define}
We shall call spherical chain associated to some vertex $v\in V_0$ the chain $sc_v\in C_n(V,\mathbb{Z})$ whose value at some $n$-cell $\beta\in V_n$ is $1$ if $v$ is adherent to $\beta$, and $0$ else. This spherical chain is the characteristic chain of a finite subset $S_v\subset V_n$ (finite because of second part of condition \ref{cond3} in definition \ref{defineACC}), called the sphere with center $v$ (in the literature this set $S_v$ is also called the star associated to $v$).
\end{define}
This notion of sphere allows to distinguish interior, exterior and frontier vertices of any set $\Aset\subset V_n$:
\begin{define}
We say a vertex $v\in V_0$ is interior to $\Aset\subseteq V_n$ if $S_v\subset \Aset$. We say a vertex $v\in V_0$ is exterior to $\Aset$ if $S_v\subset V_n\setminus \Aset$. We say a vertex is frontier to $\Aset$ if it is not interior nor exterior to $\Aset$. The sets of interior, exterior and frontier vertices of $\Aset$ shall be denoted by $\Int \Aset$, $\Ext \Aset$, $\Fr \Aset$, respectively.
\end{define}

Let us now focus our attention on the study of certain particularly simple cellular complexes: 

\subsection*{Particular case: Simplicial Complexes}

Let $X$ be a set, whose elements we call vertices. For any $k\in\mathbb{N}=\{0,1,\ldots\}$ we call abstract $k$-simplex on $X$ any (non-ordered) subset $\alpha\subset X$ of $k+1$ vertices ($\sharp \alpha=k+1$). By ``abstract'' simplices we mean that we don't want to consider $X$ to be any affine space nor a simplex to be the convex hull of its vertices. However, this generally employed affine model of a simplex is perfectly compatible with our presentation of abstract simplicial complexes and might be helpful for the visualization of the different notions. With this affine model in mind, on affine spaces a 0-simplex would be a point, a 1-simplex a segment, a 2-simplex a (possibly degenerate) triangle, a 3-simplex a tetrahedron, and $n$-simplices, the higher dimensional generalization of these objects.

\begin{define}\label{def24}
We call abstract simplicial complex on a set $X$ any abstract cellular complex $(V,\dim,[\,\colon])$, in the sense of definition \ref{defineACC}, with the following particular characteristics:
    \begin{itemize}
    \item Elements $\alpha\in V_k$  are $k$-dimensional abstract simplices on $X$, i.e., subsets $\alpha\subseteq X$ containing exactly $k+1$ elements of $X$.
    \item Given any $\beta\in V$, its adherent cells are precisely all the nonempty subsets of $\beta$:
				\begin{equation}\label{fechado}
				\emptyset\neq \alpha\subseteq \beta\in V \Rightarrow  \alpha\in V
				\end{equation}
        \begin{equation}\label{incsim}
        [\beta\colon\alpha]\neq 0\Leftrightarrow \alpha=\beta\setminus\{v\}\neq \emptyset, \text{ for some }v\in \beta
        \end{equation}
	\item For any 1-cell $\alpha=\{v,w\}\in V_1$, its adherent vertices can be distinguished by the incidence mapping:
			\begin{equation}\label{initialfinalvertex}
			[\alpha\colon v]\neq [\alpha\colon w],\qquad \forall \alpha=\{v,w\}\in V_1
			\end{equation}
    \end{itemize}
\end{define}

These conditions indicate that any cell of an abstract simplicial complex can be seen as an abstract simplex, in all aspects regarding the dimension notion.
Following (\ref{incsim}) we conclude that $\alpha\prec\beta\Leftrightarrow \alpha\subseteq\beta$, for any cells $\alpha,\beta\in V$. Condition (\ref{initialfinalvertex}) shows that any edge has two adherent vertices, one of them with $[\alpha,w]=1$ and the other with $[\alpha,v]=-1$.

\begin{define}\label{inicialfinal}
For any simplicial edge $\alpha\in V_1$ with adherent vertices $v,w$, if $[\alpha\colon w]=1$ and $[\alpha\colon v]=-1$ we shall call $v$ the initial vertex of the edge, and $w$ the final vertex of the edge.
\end{define}
\begin{define}
We call ordering of a $k$-simplex $\alpha\subset X$ any bijective mapping $v \colon i\in \{0,\ldots,k\}\mapsto v_i\in \alpha$. We denote the set of orderings of $\alpha$ by $\mathrm{Ord}(\alpha)$.
\end{define}
Giving a $k$-simplex is giving $k+1$ vertices, but not in a particular sequence (we don't fix the ordering). However we shall see that the cellular complex structure determines an orientation for each simplex.
\begin{define}
We call orientation on a $k$-simplex $\alpha$ any mapping $\oo_\alpha\colon\mathrm{Ord}(\alpha)\rightarrow \{\pm1\}$ with the following property
    \begin{equation*}
    \oo_\alpha(v\circ\varphi)=\oo_\alpha(v)\cdot \sgn(\varphi),\quad \forall v\in \mathrm{Ord}(\alpha),\,\forall \varphi\in\mathrm{Bij}(\{0,1,\ldots, k\})
    \end{equation*}
\end{define}
Giving an orientation $\oo_\alpha$ allows to split the set of orderings $v\in\mathrm{Ord}(\alpha)$ into two classes: those for which $\oo_\alpha(v)=+1$, the ordering is {\em compatible} with the orientation, and those that are not. More precisely (we leave the proof to the reader), we may derive a particular orientation $\oo_\alpha$ for every cell, from the incidence mapping $[\cdot,\cdot]$, and conversely:
%
		\begin{equation*}
		\alpha=\{v\}\in V_0\Rightarrow o_\alpha(v)=1
		\end{equation*}
    \begin{equation}\label{orderincid}
     \alpha=\in V_k,\, k\geq 1,\, v=(v_0,\ldots, v_k)\in \mathrm{Ord}(\alpha)\Rightarrow \oo_\alpha(v)=\prod\limits_{j=1}^k [\{v_{j-1},\ldots, v_k\}\colon\{v_j,\ldots,v_{k}\}]
    \end{equation}
		\begin{equation}\label{incidb}
	\beta=\{v_0,\ldots, v_k\}, \alpha= \beta\setminus\{v_j\}\Rightarrow [\beta\colon\alpha]=(-1)^j\cdot \oo_\beta(v_0,\ldots, v_k)\cdot \oo_\alpha(v_0,\ldots, v_{j-1},v_{j+1},\ldots, v_k)
	\end{equation}
Formula (\ref{incidb}) represents the classical notion of boundary operator on simplicial complexes, assuming that we choose for each simplex contained in $\beta$ the  orientation determined by the given ordering $(v_0,\ldots,v_k)$. It is common in some mathematical areas and in the literature (see \cite{DesbKansTong06} for example) to fix some ordering or orientation on each simplex and to derive a notion of boundary operator taking these orientations as given. In this article we shall avoid talking about orderings or orientations of any simplex, and will consider just the topological object of our interest: the incidence mapping or, equivalently, the boundary operator on the simplicial complex. The equivalence of both approaches is derived from formulas (\ref{orderincid}),(\ref{incidb}).

Giving some $k$-simplex is giving $k+1$ vertices, but not in a particular sequence (we don't fix the ordering). The incidence morphism determines an orientation of each cell (that is, determines some preferred ordering on each cell, up to positive permutations of the vertices).

The incidence morphism generates a family of orientations $(\oo_\alpha)_{\alpha\in V}$, one for each cell. For any cell of an abstract simplicial complex (in the sense of Definition \ref{def24}) there exists a preferred ordering of its adherent vertices, up to a positive permutation. If $\alpha$ is a simplicial edge and $(v,w)$ is a positive ordering of its vertices, then the incidence mapping defined by (\ref{orderincid}) is $[\alpha\colon w]=1$, $[\alpha\colon v]=-1$, hence $v$ is the initial vertex and $w$ the final vertex, in the sense given in definition \ref{inicialfinal}.

The following two particular models of cellular complex shall be used in the applications:
\subsection*{1. Cubic cellular complex on $\RR^n$}
In \cite{CasiRodr12}, a discrete model of Euclidean space was introduced using a Cartesian lattice of points, segments, squares, cubes, and so on, on $\RR^n$, also used by many other works on discrete field theories \cite{AdlBobSur04,AdlVes04,Demoures15,FadVol94,Guo,MarsPatrShko98,Vank07}. This complex arises when we consider on $\mathbb{R}^n$ the hyperplanes $x_i=a_i\in\mathbb{Z}$, and all possible connected domains bounded by these hyperplanes. A convenient way to describe the resulting cells is to give its barycenter, which is a vector formed by integer, and half-integer components. Our presentation here differs only in notation from \cite{CasiRodr12}, where coordinates were doubled to avoid working with half-integers. We consider $\alpha\in (\frac12\ZZ)^n$ to  represent the center of the following convex set:
	\begin{equation*}
	\alpha\in(\frac12\ZZ)^n\Rightarrow K_\alpha=\left\{ (x_1,\ldots,x_n)\in\RR^n\,\colon \begin{aligned} x_i&=\alpha_i \text{ if }\alpha_i\in \mathbb{Z}\\ |x_i-\alpha_i|&<1/2\text{ if }\alpha_i\in \mathbb{Z}+\frac12 \end{aligned}\right\}\subset\RR^n
	\end{equation*}
If $\alpha$ has $n-k$ integer components, then $K_\alpha$ is contained in exactly $n-k$ hyperplanes of the form $x_i=\alpha_i\in\mathbb{Z}$, and $K_\alpha$ is equivalent (as affine semi-space) to the $k$-dimensional open cube $]0,1[^k$. We shall say $\dim\alpha=k$ or $\alpha\in V_k$. The space $\mathbb{R}^{n}$ can be seen as a disjoint union of these cells $K_\alpha$. In particular, any $\alpha\in V_0=\mathbb{Z}^n$ is associated to a ``node'' $K_\alpha=\{x\}\subset\mathbb{R}^n$ given by $x=\alpha\in\ZZ^n$; any segment $\{x+s\cdot e\}_{0<s<1}$ joining two nodes $x\in\mathbb{Z}^n$ and $x+e$ (with fixed $x,e\in\mathbb{Z}^n$, being $e$ unit vector) is the convex set $K_\alpha$ represented by $\alpha=x+\frac12 e\in(\frac12\mathbb{Z})^n$; any square $\{x+s_1\cdot e_1+s_2\cdot e_2\}_{0<s_1,s_2<1}$ (with fixed $x,e_1,e_2\in\mathbb{Z}^n$, and $e_1,e_2$ linearly independent unit vectors) is the convex set $K_\alpha$ represented by $\alpha=x+\frac12 e_1+\frac12 e_2\in(\frac12\mathbb{Z})^n$, and so on.

If $\alpha\in(\frac12\mathbb{Z})^n$ has a set of half-integer coordinates $\alpha_i\notin \mathbb{Z}$ at positions $\alpha^{half}=(i_1<i_2<\ldots<i_k)$, the set $K_\alpha$ is an open convex set contained in the supporting $k$-dimensional affine subspace $H_\alpha\subset\RR^n$ defined by equations $x_j=\alpha_j$ $\forall j\notin \alpha^{half}$. On the affine space $H_\alpha$  we choose, as convention, the orientation $\vol_\alpha=\dd x^{i_1}\wedge\ldots\wedge \dd x^{i_k}$. The incidence mapping between cells may be defined using this convention for orienting $K_\alpha$ and using the outward pointing vector associated to a pair of adherent cells, leading to the following cellular complex:
\begin{define}[See \cite{CasiRodr12}]
We call cubic cellular complex on $\mathbb{R}^n$ (or $n$-D cubic cellular complex) the abstract cellular complex whose cells are a set $V=(\frac12\mathbb{Z})^n$. The dimension mapping is $\dim \alpha=\sharp\alpha^{half}$, where $\alpha^{half}$ is the set of indices $i\in\{1,\ldots, n\}$ with half-integer coordinate. The incidence mapping $[\beta\colon\alpha]\in \{0,\pm 1\}$ is given by:
    \begin{equation*}
		\begin{aligned}
    &\alpha\in V_k,\, \beta\in V_{k+1},\, \|\alpha-\beta\|&=1 \Rightarrow
    [\beta\colon\alpha]\cdot \vol_\alpha\sim
     i_{(\alpha-\beta)}\vol_\beta \\
    &\text{ else } [\beta\colon\alpha]=0
    \end{aligned}
    \end{equation*}
where the equivalence $\sim$ holds if both affine forms determine the same orientation on $H_\alpha$, where $\vol_\alpha$, $\vol_\beta$ represent the affine conventional orientation forms, as describe above.
\end{define}

For the $(k+1)$-cell $\beta$, its boundary may be computed as:
    \begin{equation*}
    \partial c_\beta=\sum_{j=1}^{k+1} \sum_{s=\pm 1} (-1)^{j-1}\cdot s\cdot c_{\beta+s\cdot e_{i_j}},\qquad
    \beta^\odd=(i_1<i_2<\ldots<i_{k+1})
    \end{equation*}
where $e_1,\ldots,e_n$ is the canonical basis on $\ZZ^n$

Furthermore, for any subset $S\subseteq[n]=\{1,2,\ldots,n\}$, consider the vector:
	\begin{equation}\label{informula}
	e_S\in\ZZ^n,\quad \text{ with }x_i(e_S)=\left\{\begin{aligned} &1\text{ if }i\in S \\ &0 \text{ if }i\notin S\end{aligned} \right.
	\end{equation}
The set $\{e_S\}_{S\subseteq [n]}$ coincides with $\{0,1\}^n\subseteq\RR^n$. For the cubic cellular complex, the sphere $S_v$ centered at $v\in V_0=\mathbb{Z}^n$ is the set of n-cells $\beta=v+\frac12e_S-\frac12 e_{\bar{S}}\in V_n$ (where $S\sqcup \bar{S}=[n]$), there exist $2^n$ different $n$-cells on this sphere, whose adherent vertices have the form $v+e$ (where $e\in \{-1,0,1\}^n\subset \ZZ^n$). For each $v\in \mathbb{Z}^n=V_0$, the sphere $S_v$ has $3^n$ different adherent vertices. Vertices can be seen as points with integer coordinates on $\mathbb{R}^n$ and the corresponding spheres as hypercubes with diameter 2.

\begin{remark}
In particular, for any $\alpha\in V_k$, the set $K_\alpha$ is the interior on $H_\alpha\subseteq \RR^n$ of the convex hull of all nodes $x$ associated to vertices $v\prec\alpha$. We shall call $K_\alpha$ the open convex hull associated to $\alpha\in V_k$. Points on $\mathbb{R}^n$ that don't lay on the open convex hull associated to any $n$-cell are points whose coordinates have some integer entry. The disjoint union of the open convex hulls of all $n$-cells may be seen as $\mathbb{R}^n\setminus H$, where $H$ is the closed set obtained by the union of all supporting hyperplanes $x_i=m\in\mathbb{Z}$:
	\begin{equation*}
	\bigcup\limits_{\beta\in V_n} K_\beta=\mathbb{R}^n\setminus H,\qquad H=\bigcup\limits_{i\in[n],m\in\mathbb{Z}} \{x_i=m\}
	\end{equation*}
The $n$-D cubic cellular complex is obtained partitioning $\mathbb{R}^n$ using these hyperplanes.
\end{remark}

\subsection*{2. Coxeter-Freudenthal-Kuhn simplicial complex on $\RR^n$}

Simplicial decomposition of space is also used in many works \cite{ArnFalWin06,DesbKansTong06,Deylen15,Hamber97} to deal with discrete field theories. For its combinatorial simplicity and relation to the cubic complex we introduce here a particular partition of the hypercube into simplices, employed in different areas of topology, computing sciences, and numerical methods, known as Freudenthal's triangulation, Kuhn's partition, or Coxeter-Freudenthal-Kuhn (CFK) triangulation. We focus on the simplicial cellular complex structure generated by this mechanism, denoting it as CFK simplicial complex.

Consider $X=\mathbb{Z}^n$ (the set of vertices). For any $v\in\mathbb{Z}^n$, call weight of this vertex the value $\omega(v)=x_1(v)+\ldots+x_n(v)$. For arbitrary $S\subseteq[n]$ the weight function applied to $e_S$ (defined in (\ref{informula})) is $\omega(e_S)=\sharp S$, the cardinal of the set $S$.

\begin{define}\label{asc}
We call Coxeter-Freudenthal-Kuhn simplicial complex on $X=\ZZ^n$ (or $n$-D CFK simplicial complex) the following:
	\begin{equation}\label{CFK}
	V=\left\{\alpha\subset X\,\colon\, \alpha\neq \emptyset, \,(v-w)\text{ or }(w-v)\in\{0,1\}^n,\,\forall v,w\in\alpha\right\}
	\end{equation}
The dimension of $\alpha\in V$ is given as $\dim \alpha=\sharp\alpha-1$, and the incidence mapping will be determined in definition \ref{incidenceCFK} using certain conventional orientation on affine subspaces.
\end{define}
Before giving the convention to fix the incidence mapping, we study the set $V$ of cells and model each of them as a convex subset on $\RR^n$.

The following lemma shows that any element $\alpha\in V$ is determined as a nonempty collection of at most $n+1$ vertices $v\in X=\ZZ^n$ hence, following definition \ref{def24}, $V$ is an $n$-dimensional simplicial complex modelled on $X=V_0=\mathbb{Z}^n$.
 
\begin{lemma}
Let $\alpha\in V$ be a simplex with dimension $k$ from the CFK complex given definition \ref{asc}. Let $v_0,v_1,\ldots, v_k$ be its vertices, ordered by increasing weights. Then there exists a unique ordered sequence of $k$ disjoint nonempty sets $S_a\subseteq \{1,2,\ldots, n\}$ (where $a=1,\ldots,k$) and corresponding vectors $e_{S_1},\ldots, e_{S_k}$ defined by (\ref{informula}) such that:
	\begin{equation*}
	v_{a}=v_{a-1}+e_{S_{a}},\qquad \forall a=1,\ldots, k
	\end{equation*}
hence $v_i=v_0+e_{S_1}+\ldots +e_{S_i}$
\end{lemma}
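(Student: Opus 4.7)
The plan is to use the condition defining $V$, namely that for any two vertices $v,w\in\alpha$ either $v-w$ or $w-v$ lies in $\{0,1\}^n$, together with the weight function to extract the $S_a$ explicitly. Throughout, if $u\in\{0,1\}^n$, then $\omega(u)=\#\{i\colon x_i(u)=1\}\geq 0$, with equality iff $u=0$.

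First I would establish that the weights of distinct vertices of $\alpha$ are themselves distinct, so that the ordering $v_0,\ldots,v_k$ by increasing weight is strict and unique. Indeed, if $v,w\in\alpha$ satisfied $\omega(v)=\omega(w)$, then $v-w$ (or $w-v$) lies in $\{0,1\}^n$ with weight $0$, hence equals $0$, forcing $v=w$. Once the ordering is strict, for consecutive vertices one of $v_a-v_{a-1}$, $v_{a-1}-v_a$ is in $\{0,1\}^n$, and since $\omega(v_a)>\omega(v_{a-1})$, only the first possibility survives. I then define
\begin{equation*}
S_a=\{i\in\{1,\ldots,n\}\colon x_i(v_a-v_{a-1})=1\},\qquad a=1,\ldots,k,
\end{equation*}
so that $v_a-v_{a-1}=e_{S_a}$, and $S_a\neq\emptyset$ because $v_a\neq v_{a-1}$.

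Next I would prove pairwise disjointness of the $S_a$. For any $0\leq a<b\leq k$ the telescoping identity gives
\begin{equation*}
v_b-v_a=\sum_{j=a+1}^{b} e_{S_j}.
\end{equation*}
The left-hand side lies in $\{0,1\}^n$ (by the same weight argument applied to the pair $v_a,v_b$), while the right-hand side is a sum of vectors with non-negative integer entries. In coordinate $i$, the sum equals the number of indices $j\in\{a+1,\ldots,b\}$ with $i\in S_j$, and this number must be at most $1$. Taking $a=0$, $b=k$ yields that $S_1,\ldots,S_k$ are pairwise disjoint, and the telescoping identity gives $v_i=v_0+e_{S_1}+\cdots+e_{S_i}$.

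Uniqueness is then immediate: once the ordering by weights is fixed (which, as shown, it is), the relation $v_a=v_{a-1}+e_{S_a}$ forces $e_{S_a}=v_a-v_{a-1}$, and hence $S_a$ is determined. The only step that requires care is checking that $v_b-v_a\in\{0,1\}^n$ (rather than its negative) for non-consecutive $a<b$; but this is exactly the weight-monotonicity argument used before, so no genuine obstacle arises.
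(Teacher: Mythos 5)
Your proof is correct and follows essentially the same route as the paper's: weight monotonicity forces $v_a-v_{a-1}=e_{S_a}$ for a nonempty $S_a$, and the telescoping sum $v_k-v_0=e_{S_1}+\cdots+e_{S_k}$ lying in $\{0,1\}^n$ forces the $S_a$ to be disjoint. The only (harmless) additions are your explicit check that distinct vertices have distinct weights and your slightly more general treatment of arbitrary pairs $a<b$ before specializing to $a=0$, $b=k$.
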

\begin{proof}
From (\ref{CFK}) we see that for any pair of vertices $v\neq w\prec\alpha$ there holds $v-w=\pm e_S$, for some nonempty subset $S\subseteq [n]$. As the weight function is linear and $\omega(e_S)> 0$ for any nonempty subset $S\subseteq [n]$, and we chose $\omega(v_a)\geq \omega(v_{a-1})$,  we may conclude that for each $a=1,\ldots, k$ there exists a unique nonempty set $S_a$ such that $v_a=v_{a-1}+e_{S_a}$. Adding all these equalities we get $v_k=v_0+e_{S_1}+\ldots+e_{S_k}$. Also $v_k$ has higher weight than $v_0$ and both of them belong to a common simplex $\alpha$. Hence $v_k-v_0=e_{\bar{S}}\in\{0,1\}^n$, for some set $\bar{S}$. We conclude then that $e_{\bar{S}}=e_{S_1}+\ldots+e_{S_k}$ and consequently no index $i\in\{1,\ldots, n\}$ is repeated in the sets $S_1,\ldots, S_k$. These are disjoint sets.\qed
\end{proof}
\begin{corollary}
Consider the set of pairs $(v,\mathcal{S})$ formed by an element $v\in\ZZ^n$ and an ordered sequence $\mathcal{S}=(S_1,\ldots, S_k)$ of disjoint nonempty subsets  of $[n]=\{1,\ldots,n\}$ (where we admit the case $k=0$, and the empty sequence).

The mapping that takes any such pair $(v,\mathcal{S})$ to the abstract simplex $\alpha(v,\mathcal{S})=\{v_0,\ldots, v_k\}$ defined by:
	\begin{equation}\label{nomesimpl}
	(v,\mathcal{S})\mapsto \alpha(v,\mathcal{S})=\{v_0,\ldots, v_k\},\qquad v_0=v,\, v_i=v+e_{S_1}+\ldots+e_{S_i},\quad \forall i=1,\ldots, k
	\end{equation}
gives a one-to-one mapping between this set of pairs and the $n$-D CFK simplicial complex (\ref{CFK}).
\end{corollary}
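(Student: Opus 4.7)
The plan is to verify well-definedness, injectivity, and surjectivity of the map $(v,\mathcal{S})\mapsto \alpha(v,\mathcal{S})$. The previous lemma already does almost all the real work; the corollary is essentially a repackaging of it.

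First I would check that $\alpha(v,\mathcal{S})$ actually lies in $V$, i.e.\ satisfies the defining condition of (\ref{CFK}). Given two indices $0\leq i<j\leq k$, the difference of the corresponding vertices is $v_j-v_i = e_{S_{i+1}}+\ldots+e_{S_j}$. Because the sets $S_{i+1},\ldots,S_j$ are pairwise disjoint subsets of $[n]$, this sum is exactly $e_{S_{i+1}\cup\ldots\cup S_j}$, which belongs to $\{0,1\}^n$. The opposite difference $v_i-v_j$ is then in $-\{0,1\}^n$, so condition (\ref{CFK}) is met. Moreover, since each $S_a$ is nonempty, the weights $\omega(v_a)=\omega(v)+|S_1|+\ldots+|S_a|$ are strictly increasing in $a$, hence the $k+1$ vertices are distinct and $\alpha(v,\mathcal{S})$ has dimension exactly $k$.

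Next I would prove injectivity. Suppose $\alpha(v,\mathcal{S})=\alpha(v',\mathcal{S}')$. By the weight computation above, the vertex of minimal weight in $\alpha(v,\mathcal{S})$ is uniquely $v$ (and similarly for $v'$), so $v=v'$. Ordering the remaining vertices by increasing weight then recovers the sequence $v_0,v_1,\ldots,v_k$ unambiguously, and the differences $v_a-v_{a-1}=e_{S_a}$ uniquely determine each $S_a$ (since the map $S\mapsto e_S$ is injective). Hence $\mathcal{S}=\mathcal{S}'$.

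Finally, surjectivity is immediate from the preceding lemma: given any $\alpha\in V$ with dimension $k$, order its vertices by increasing weight as $v_0,\ldots,v_k$ and set $v=v_0$ and $\mathcal{S}=(S_1,\ldots,S_k)$ as supplied by the lemma; then $\alpha(v,\mathcal{S})=\alpha$ by construction (\ref{nomesimpl}). The case $k=0$ corresponds to the empty sequence and a single vertex $v$, which is handled trivially. I do not expect any serious obstacle here: the bookkeeping with weights is what forces both $v$ and the ordering of the $S_a$ to be canonical, and disjointness is exactly what is needed to keep the successive partial sums inside $\{0,1\}^n$.
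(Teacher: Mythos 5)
Your proof is correct and follows the route the paper intends: the corollary is presented there as a direct consequence of the preceding lemma, and your argument simply spells out the three checks (well-definedness via disjointness keeping the partial sums in $\{0,1\}^n$, injectivity via the weight ordering, surjectivity via the lemma) that the paper leaves implicit.
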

As a particular case the CFK simplicial complex on the plane $\RR^2$ is given (using the notation in (\ref{nomesimpl})) by vertices $v_{ij}$ determined as $\alpha((i,j),\emptyset)$, edges determined as $\alpha((i,j),\{1\})$, $\alpha((i,j),\{2\})$, $\alpha((i,j),\{1,2\})$, and faces determined as $\alpha((i,j),(\{1\},\{2\}))$, $\alpha((i,j),(\{2\},\{1\}))$.

The geometrical meaning of the $n$-D CFK simplicial complex turns clearer if we use the following identification of its abstract cells as convex sets in $\RR^n$.
\begin{define}
Consider $V$ the $n$-D CFK simplicial complex modelled on $\mathbb{Z}^n$, from definition \ref{asc}. For $k\geq 1$ we call open convex hull of any abstract cell $\alpha=\{v_0,v_1,\ldots, v_k\}\in V_k$ the set:
	\begin{equation*}
	K^o_\alpha=\{\lambda_0v_0+\lambda_1v_1+\ldots+\lambda_kv_k\,\colon \lambda_0,\ldots,\lambda_k>0,\, \sum \lambda_i=1\}\subset\RR^n
	\end{equation*}
For completion, in the case $\alpha=v\in V_0$ we call $K^o_\alpha=\{v\}\subset\RR^n$.
\end{define}
We use the term ``open'' to stress that $\lambda_i=0,1$ are not allowed. Hence $K^o_\alpha$ will be an open segment, an open triangle, or higher dimensional analogues of these objects.  Generally $K^o_\alpha$ is not really open in $\mathbb{R}^n$, but lies on the affine subspace determined by the vertices $v\in\alpha$, and $K^o_\alpha$ is open on this subspace. 

\begin{prop}
Consider the integer and fractional components $c(p)\in\ZZ^n$, $f(p)\in[0,1[^n$ of any point $p\in\RR^n$, defined by $x_i(p)=c_i(p)+f_i(p)$ with $c_i(p)\in\ZZ$, $0\leq f_i(p)<1$. Consider any cell $\alpha\in V_k$ determined by $(v,\mathcal{S}=(S_1,\ldots, S_k))$, on the $n$-D CFK simplicial complex.

The point $p$ lies on the open convex hull $K^o_\alpha$ if and only if the integer component $c(p)$ coincides with $v$, and the fractional component $f(p)$ takes null value for indices not belonging to $S_1\cup\ldots\cup S_k$, common nonvanishing values for indices belonging to a common set $S_a$, and decreasing nonvanishing values, as we advance in the ordered sequence os sets $\mathcal{S}$:
	\begin{equation*}
	p\in K_\alpha\Leftrightarrow \left\{
	\begin{aligned}
	&c(p)=v \\
	&i\notin S_1\cup\ldots\cup S_k\Rightarrow f_i(p)=0\\
	&j,i\in S_a\Rightarrow f_j(p)=f_i(p)>0\\
	&j\in S_{a+1},\, i\in S_a\Rightarrow f_i(p)>f_j(p)>0
	\end{aligned}\right. \quad \left(\begin{aligned} &p\in\RR^n \\ &\alpha=\alpha(v,(S_1,\ldots, S_k))\in V_k \\ &a\in\{1,\ldots, k\}\end{aligned}\right)
	\end{equation*}
\end{prop}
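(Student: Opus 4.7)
The plan is to match the two natural parametrizations of $K^o_\alpha$: the barycentric coordinates $(\lambda_0,\ldots,\lambda_k)$ on the simplex and the fractional parts $f_i(p)$ of the ambient point, related through the vertex formula $v_i=v+e_{S_1}+\cdots+e_{S_i}$.

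For the forward implication, suppose $p=\sum_{i=0}^k\lambda_i v_i$ with $\lambda_i>0$ and $\sum\lambda_i=1$. Substituting the vertex formula and grouping by $e_{S_a}$ (the constant $v$ factor is absorbed because $\sum\lambda_i=1$), one obtains
\begin{equation*}
p-v=\sum_{a=1}^k \mu_a\, e_{S_a},\qquad \mu_a:=\sum_{i=a}^k\lambda_i.
\end{equation*}
Positivity of the $\lambda_i$ together with $\lambda_0=1-\mu_1$ forces $1>\mu_1>\mu_2>\cdots>\mu_k>0$. Since the $S_a$ are pairwise disjoint and the entries of each $e_{S_a}$ are $0$ or $1$, the $i$-th coordinate of $p$ equals $x_i(v)+\mu_a$ when $i\in S_a$, and equals $x_i(v)$ when $i\notin S_1\cup\cdots\cup S_k$. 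Reading off integer and fractional parts (using $x_i(v)\in\ZZ$ and $0<\mu_a<1$) immediately yields $c(p)=v$ together with the three remaining bullets.

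For the converse, I assume the four conditions and define $\mu_a:=f_i(p)$ for any $i\in S_a$; by the third bullet this is independent of the choice of $i$, by the fourth bullet the resulting sequence satisfies $0<\mu_k<\cdots<\mu_1$, and $\mu_1<1$ is automatic from $f_i(p)\in[0,1)$. I then invert the telescoping relation by setting $\lambda_0:=1-\mu_1$, $\lambda_a:=\mu_a-\mu_{a+1}$ for $1\le a<k$, and $\lambda_k:=\mu_k$. Every $\lambda_i$ is strictly positive, the $\lambda_i$ sum to $1$ by cancellation, and $\sum_{i=a}^k\lambda_i=\mu_a$ by construction, so running the computation of the forward implication backwards gives $\sum\lambda_iv_i=v+\sum_a\mu_a e_{S_a}=p$ (using $c(p)=v$ and the first two bullets to account for the coordinates outside $\bigcup_a S_a$). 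Hence $p\in K^o_\alpha$.

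No step is genuinely difficult; the only mild subtlety is the observation that $\mu_1<1$ is built into the definition of fractional part, which is what keeps $\lambda_0>0$ and places $p$ in the \emph{open} hull rather than on the face opposite $v_0$. The whole argument is bookkeeping between barycentric coordinates on $\alpha$ and fractional parts on $\R^n$.
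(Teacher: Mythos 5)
Your proof is correct and follows essentially the same route as the paper's: both hinge on the substitution $\mu_a=\sum_{i=a}^k\lambda_i$ (the paper's $s_a$) to rewrite $p-v=\sum_a\mu_a e_{S_a}$ and then read off integer and fractional parts. The only difference is presentational — the paper first normalizes by a translation and coordinate permutation to put $v=0$ and the $S_a$ in consecutive blocks, whereas you work in general position and state the converse direction explicitly rather than leaving it implicit in the parametrization of the hull.
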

\begin{proof}
For the characterization of points $p\in K^o_\alpha$, observe that the notion of open convex hull of $k+1$ vertices covariates with respect to permutations in the coordinates or with respect to translations, and that translating a point with a vector $c\in\ZZ^n$ preserves its fractional component, adding to the integer component the same vector $c$. With an appropriate translation and permutation of the coordinates we may assume that:
	\begin{equation*}
	v=(0,\ldots,0),\, S_1=\{1,\ldots,i_1\},\, S_2=\{1+i_1,\ldots,i_2\},\ldots S_k=\{1+i_{k-1},\ldots,i_k\} 
	\end{equation*}
for some $1\leq i_1<i_2<\ldots <i_k\leq n$. Hence:
	\begin{equation*}
	v_a=v+e_{S_1}+\ldots+e_{S_a}=(1,\ldots, \overbrace{1}^{i_a},0,\ldots, 0)
	\end{equation*}
where the sequence of 1's ends at position $i_a$. If we use the definition of the open convex hull, and use the notation $s_i=\lambda_i+\ldots+\lambda_k$, the convex hull of these points $v_0,\ldots, v_k$ is the family of points with the form:
	\begin{equation*}
	K^o_\alpha=\{(s_1,\ldots,\overbrace{s_1}^{i_1}, s_2,\ldots,\overbrace{s_2}^{i_2},\ldots, \ldots, s_k, \ldots, \overbrace{s_k}^{i_k},0,\ldots,0)\, \colon\,  1>s_1>s_2>\ldots>s_k>0\}
	\end{equation*}
where $1>s_1$ because $\lambda_0+s_1=1$ and $\lambda_0>0$. There holds $s_a>s_{a+1}$ because $\lambda_a+s_{a+1}=s_a$, with $\lambda_a>0$. Finally $s_k=\lambda_k>0$.

Therefore, points in this convex hull are those elements of $\RR^n$ whose coordinates $(x_1,\ldots, x_n)$ have integer component $(0,\ldots, 0)$ and fractional component $(s_1,\ldots, s_k)$ satisfying, for each $a\in\{1,\ldots,k\}$ and $i,j\in\{1,\ldots, n\}$: 
	\begin{equation*}
	\begin{aligned}
	i>i_k\Rightarrow s_i=0\\
	i_{a-1}< j,i\leq i_{a}\Rightarrow s_j=s_i>0 \\
	i_{a-1}< i\leq i_{a}< j\leq i_{a+1},\quad \, \Rightarrow s_i>s_j>0
	\end{aligned}
	\end{equation*}
Which, up to the permutation and translation, are precisely the conditions in our statement.\qed
\end{proof}

\begin{corollary}
Each point $p\in \RR^n$ belongs to the open convex hull $K^o_\alpha$, for a unique abstract cell $\alpha\in V$ of the $n$-D CFK simplicial complex defined in \ref{asc}.
\end{corollary}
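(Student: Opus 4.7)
The plan is to use the characterization given in the preceding proposition to read off the unique cell $\alpha$ directly from the coordinates of $p$. Write $p = c(p) + f(p)$ with $c(p) \in \mathbb{Z}^n$ and $f(p) \in [0,1)^n$. The previous proposition forces $v = c(p)$ for any cell whose open convex hull contains $p$, so the vertex part of the data $(v,\mathcal{S})$ is determined.

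Next, I would extract the ordered sequence $\mathcal{S} = (S_1,\ldots,S_k)$ from the fractional component. Let $T = \{f_i(p) : i \in \{1,\ldots,n\},\ f_i(p) > 0\}$ be the set of distinct positive values taken by the coordinates of $f(p)$. Since $T$ is finite, enumerate its elements in strictly decreasing order as $t_1 > t_2 > \ldots > t_k > 0$ (with $k = 0$ if $f(p) = 0$), and set
\begin{equation*}
S_a = \{i \in \{1,\ldots,n\} : f_i(p) = t_a\}, \qquad a=1,\ldots,k.
\end{equation*}
By construction the $S_a$ are nonempty and pairwise disjoint, so $(v,\mathcal{S})$ is an admissible pair and defines a cell $\alpha(v,\mathcal{S}) \in V$. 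The four bullet conditions in the previous proposition (integer part equals $v$; coordinates outside $\bigcup S_a$ vanish; coordinates within a common $S_a$ share the positive value $t_a$; and the values strictly decrease with $a$) are all satisfied by $p$ tautologically, hence $p \in K^o_{\alpha(v,\mathcal{S})}$, giving existence.

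For uniqueness, suppose $p \in K^o_{\alpha'}$ for $\alpha' = \alpha(v',(S'_1,\ldots,S'_{k'}))$. The proposition forces $v' = c(p) = v$. It also forces $\{i : f_i(p) > 0\} = S'_1 \cup \ldots \cup S'_{k'}$, the fractional values to be constant on each $S'_a$, and these common values to be strictly decreasing in $a$. These conditions determine the partition and its ordering uniquely as the level-set decomposition of $f(p)$ ordered by decreasing value; consequently $k' = k$ and $S'_a = S_a$ for each $a$, so $\alpha' = \alpha(v,\mathcal{S})$.

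No real obstacle is expected here since all the analytic work was done in the preceding proposition; the present statement is essentially an inversion of the map $(v,\mathcal{S}) \mapsto K^o_{\alpha(v,\mathcal{S})}$. The only point requiring mild care is the degenerate case $f(p) = 0$, handled by taking $k = 0$ and the empty sequence, which by definition corresponds to the $0$-cell $\alpha = \{v\}$ with $K^o_\alpha = \{v\}$.
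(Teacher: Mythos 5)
Your proposal is correct and follows essentially the same route as the paper: both read off $v$ as the integer component $c(p)$ and recover the ordered sequence $\mathcal{S}$ as the level-set decomposition of the fractional part $f(p)$ ordered by strictly decreasing value, with uniqueness following because the preceding proposition forces exactly this data. The only cosmetic difference is that the paper separates the case $p\in\mathbb{Z}^n$ explicitly, whereas you absorb it as the $k=0$ instance of the same construction.
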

\begin{proof}
If $p\in\mathbb{Z}^n$, then $p\in\{v\}=K^o_v$ for a unique $0$-cell $v\in V_0=\ZZ^n$, and $p\notin K^o_\alpha$ for any other open cell $\alpha\in V_k$ with $k\geq 1$, because these open cells don't contain elements of $\ZZ^n$, as proved by the previous proposition.

If $p\notin\mathbb{Z}^n$, we may try to determine cells $\alpha\in V_k$ for which $p\in K^o_\alpha$. Following the previous proposition any cell determined by $(v,(S_1,S_2,\ldots,S_k))$ contains $p$ in its open convex hull if and only $v$ coincides with the integer component $c(p)$, and if the fractional component $f(p)\in[0,1[^n$ takes precisely $k$ different nonvanishing values $1>s_1>s_2>\ldots>s_k>0$ at positions determined by the nonempty disjoint sets $S_1,\ldots, S_k$. Consequently taking the integer component $c(p)$ and ordering the fractional components $f_i(p)$ associated to $p$ univocally determines the unique pair $(v,\mathcal{S})$ and abstract simplex $\alpha(v,\mathcal{S})$ whose associated open convex hull contains $p$.\qed
\end{proof}

The following corollary shows that CFK partition is formed by so-called path-simplices, determined by some initial vertex, following a path formed by a finite sequence of consecutive orthogonal edges.
\begin{corollary}\label{indexingncells}
The set $V_n$ of $n$-cells of the $n$-D CFK simplicial complex can be identified with the set of pairs $(v,\sigma)$ where $v\in\ZZ^n$ and $\sigma\in\mathrm{Sym}(n)$ is a permutation of the set $[n]$.
	\begin{equation*}
	(v,\sigma)\in \ZZ^n\times \mathrm{Sym}(n)\mapsto \alpha(v,(\sigma_1,\ldots,\sigma_n))=\{v_0,\ldots,v_n\}\in V_n \qquad v_0=v,\quad v_{i+1}=v_i+e_{\sigma_i}
	\end{equation*}
\end{corollary}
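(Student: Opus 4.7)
The plan is to specialize the preceding corollary to the case $k=n$ and observe that the combinatorial data reduces to a permutation. Concretely, by the previous corollary every abstract cell of the CFK complex is uniquely described by a pair $(v,\mathcal{S})$ with $v\in\mathbb{Z}^n$ and $\mathcal{S}=(S_1,\ldots,S_k)$ an ordered sequence of disjoint nonempty subsets of $[n]$, and the dimension of the associated cell equals the length $k$ of $\mathcal{S}$. Therefore the $n$-cells correspond precisely to those pairs $(v,\mathcal{S})$ for which $\mathcal{S}$ has length $n$.

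The key step is the counting argument: if $(S_1,\ldots,S_n)$ are $n$ pairwise disjoint nonempty subsets of the $n$-element set $[n]$, then $\sum_{a=1}^{n}\sharp S_a\leq n$ and each $\sharp S_a\geq 1$, forcing $\sharp S_a=1$ for every $a$. Hence each $S_a$ is a singleton $\{\sigma_a\}$, the indices $\sigma_1,\ldots,\sigma_n$ are pairwise distinct by disjointness, and the map $a\mapsto\sigma_a$ is a bijection, i.e.\ an element $\sigma\in\mathrm{Sym}(n)$. Conversely every permutation $\sigma$ yields such a sequence $(\{\sigma_1\},\ldots,\{\sigma_n\})$, so the correspondence $\mathcal{S}\leftrightarrow\sigma$ is a bijection between length-$n$ admissible sequences and $\mathrm{Sym}(n)$.

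Combining these two bijections gives the desired identification $V_n\leftrightarrow \mathbb{Z}^n\times\mathrm{Sym}(n)$. Finally, substituting $S_a=\{\sigma_a\}$ into the formula $v_i=v+e_{S_1}+\ldots+e_{S_i}$ from the previous corollary and using $e_{\{\sigma_a\}}=e_{\sigma_a}$ produces the recursive description $v_0=v$, $v_{i+1}=v_i+e_{\sigma_{i+1}}$ stated in the corollary. There is no real obstacle here beyond the counting observation; the content is entirely unpacking the previous corollary in the top-dimensional case.
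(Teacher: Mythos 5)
Your argument is correct and is exactly the route the paper intends: the corollary is stated without proof as an immediate consequence of the preceding $(v,\mathcal{S})$-parametrization, and your counting observation that $n$ pairwise disjoint nonempty subsets of $[n]$ are forced to be singletons is precisely the detail that makes the identification with $\mathrm{Sym}(n)$ work. The only discrepancy is harmless: your recursion $v_{i+1}=v_i+e_{\sigma_{i+1}}$ is the indexing consistent with $v_i=v+e_{S_1}+\ldots+e_{S_i}$, whereas the paper writes $v_{i+1}=v_i+e_{\sigma_i}$, an off-by-one slip in the statement rather than a flaw in your proof.
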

Hence any $n$-cell can be seen as a sequence of vertices starting at some vertex $v_0$ and ending with $v_0+(1,\ldots,1)$, using $n$ jumps by integer unit vectors.

The model of an abstract simplex as an open convex hull will allow us to introduce the incidence mapping. For any $k$-cell $\alpha$ determined by $(v,S_1,\ldots, S_k)$, the previous proposition shows that $K_\alpha^0$ is an open subset contained in the supporting affine subspace $H_\alpha\subseteq\mathbb{R}^n$:
	\begin{equation*}
		H_\alpha=\left\{\begin{aligned} &x_i=x_i(v),\quad \forall i\notin S_1\cup\ldots\cup S_k \\
		&x_{i}-x_{j}=x_i(v)-x_j(v), \quad \forall a=1\ldots k,\,\forall i,j\in S_a\end{aligned} \right.
	\end{equation*}
given by certain equations $x_i=\mathrm{cte}\in\mathbb{Z}$, $x_i-x_j=\mathrm{cte}\in\mathbb{Z}$. Moreover $H_\alpha$ is the affine subspace spanned by all nodes $v\prec\alpha$, and $K^0_\alpha$ is the interior of the convex hull of these nodes on $H_\alpha$. 

Considering $m_1=\min S_1$, $m_2=\min S_2$,$\ldots$, $m_k=\min S_k$, the corresponding coordinate functions $x_{m_1},\ldots,x_{m_k}$ turn out to be a system of affine coordinates on $H_\alpha$. We may rearrange $m_1,\ldots, m_k$ into a monotone sequence of indices $i_1<i_2<\ldots<i_k$ and we may call conventional orientation on $H_\alpha$ the one given by $\vol_\alpha=\dd x_{i_1}\wedge\ldots\wedge\dd x_{i_k}$. Incidence of any two adherent cells will be defined in terms of the compatibility of these orientations and the associated outward pointing vector:
\begin{define}\label{incidenceCFK}
The incidence mapping $[\beta\colon\alpha]\in \{0,\pm 1\}$ on the CFK simplicial complex is given by:
    \begin{equation*}
		\begin{aligned}
    &\beta\in V_{k+1},\, \bar{v}\in \alpha=\beta\setminus\{v\}\in V_k &\Rightarrow
    [\beta\colon\alpha]\cdot \vol_\alpha\sim
     i_{\bar v-v}\vol_\beta \\
    &\text{ else } [\beta\colon\alpha]=0
    \end{aligned}
    \end{equation*}
where the equivalence $\sim$ holds if both affine forms determine the same orientation on $H_\alpha$, and where $\vol_\alpha$, $\vol_\beta$ represent the affine conventional orientation forms, as describe above.
\end{define}

\begin{example}
Consider $V$ the $5-D$ CFK simplicial complex. The whole space  $\mathbb{R}^5$ may be decomposed as the disjoint union of the open convex hulls $K^o_\alpha$. Consider the point $p=(5.4,2.1,4.3,-1.7,6)\in\mathbb{R}^5$. Its integer and fractional components are, respectively $c=(5,2,4,-2,6)\in\ZZ^5$ and $f=(0.4,0.1,0.3,0.3,0)\in [0,1[^5$. There are three nonvanishing fractional components, which may be ordered as $0.4>0.3>0.1$, the first component at positions $S_1=\{1\}$, the second one at positions $S_2=\{3,4\}$, and the third one at positions $S_3=\{2\}$.

We may conclude that $p$ belongs to $K^o_\beta$ where $\beta\in V_3$ is a tetrahedron determined by $v=(5,2,4,-2,6)$ and $\mathcal{S}=(S_1,S_2,S_3)$.  The vertices of the simplex $\beta$ would be $v_0=(5,2,4,-2,6)$, $v_1=(6,2,4,-2,6)$, $v_2=(6,2,5,-1,6)$, $v_3=(6,3,5,-1,6)$. 

The supporting subspace $H_\beta$ is given by equations $x_5=6$, $x_3-x_4=6$.  Observe that $(\min S_1,\min S_2,\min S_3)=(1,3,2)$, so we may consider the conventional orientation $\vol_\beta=\dd x_1\wedge \dd x_2\wedge \dd x_3$ on $H_\beta$.

A face adherent to $\beta$ would be, for example, the face $\alpha\in V_2$ determined by the vertices $v_0,v_2,v_3$, a 2-simplex associated to $(v,(S_1\cup S_2,S_3))$. As $(\min(S_1\cup S_2),\min(S_3))=(1,2)$, we get the supporting subspace $H_\alpha$ with equations $x_5=6$, $x_3-x_4=6$, $x_1-x_3=1$ oriented by $\vol_\alpha=\dd x_1\wedge \dd x_2$

Observe that $v_3-v_0$, $v_3-v_2$ span the linear space $\vec{H}_\alpha$ and there holds $\vol_\alpha(v_3-v_0,v_3-v_2)=\left| \begin{matrix} 1 & 1 \\ 0 & 1\end{matrix}\right|>0$. On the other hand, taking $v=v_1\in\beta$ we have $\alpha=\beta\setminus\{v\}$ and $\bar{v}=v_3\in \alpha$, so we may compute $$i_{\bar{v}-v}\vol_\beta(v_3-v_0,v_3-v_2)=\vol_\beta(v_3-v_1,v_3-v_0,v_3-v_2)=\left| \begin{matrix} 0&1&1\\ 1 & 1 & 1 \\ 0&1&0  \end{matrix}\right|>0$$

Following now definition \ref{incidenceCFK} we may say $[\beta\colon\alpha]=+1$. In a similar way one may determine the incidence of our tetrahedron with any other adherent face.
\end{example}

\begin{remark}
In particular, points that don't lay on the open convex hull associated to any $n$-cell are points whose coordinates have some integer entry or some pair of entries share a common fractional component. That is, the disjoint union of the open convex hulls of all $n$-cells may be seen as $\RR^n\setminus H$, where $H$ is the closed set obtained by the union of the families of hyperplanes 
	\begin{equation*}
	\{x_i=m\}_{m\in\mathbb{Z},i\in [n]},\qquad  \{x_j-x_i=m\}_{m\in\mathbb{Z},i\neq j\in[n]}
	\end{equation*}
The $n$-D CFK simplicial complex is obtained partitioning $\mathbb{R}^n$ using these hyperplanes. For any $k$-simplex $\alpha$ contained in the $k$-dimensional submanifold $H_k$ (given as intersection of $n-k$ mutually-transversal hyperplanes from the family above), we have a conventional orientation $\vol_\alpha$ not depending on the particular cell $\alpha$, so that any pair of $k$-cells sharing a common supporting space $H_k$ will be considered to induce the same orientation on its supporting subspace. This is derived by choosing as coordinate functions in the submanifold $H_\alpha$ the sequence $x_{i_1},x_{i_2},\ldots, x_{i_k}$ with the lowest possible indices, in increasing order, and taking the classical affine notion of orientation, identifying the submanifold with $\RR^k$ with this coordinate choice.
\end{remark}

\section{The variational problem on a cellular complex}\label{sec3}
A discrete field shall be a particular configuration that vertices of a given abstract cellular complex $V$ adopt on a certain configuration space $Y$, in some sense. Configurations of the cellular complex will be determined by some correspondence taking vertices $v\in V_0$ to elements of the configuration space.

Similar to the smooth case, a discrete variational problem appears when we consider some function (the action functional) that gives a certain real value for each configuration of the field, where the value is obtained by integration of a discrete differential form locally depending on the configuration. An important question in this case is to characterize which configurations are critical for the action functional.

We summarize next the presentation given in \cite{CasiRodr12}:
\begin{define}
We call discrete bundle (or bundle of discrete configurations) on an $n$-dimensional abstract cellular complex $V$, any projection $\pi\colon Y_0\rightarrow V_0$, where each fiber $\pi^{-1}(v)=(Y_0)_v$ is a non-empty smooth manifold. Any element $y_v\in (Y_0)_v$ shall be called a configuration on the vertex $v\in V_0$. A section $y\colon V_0\rightarrow Y_0$ of the projection $\pi$ shall be called a discrete field on the cellular complex $V$, with values on $Y_0$.

\end{define}
Observe that discrete bundles can be seen as smooth bundles, if we see the base space $V_0$ as a totally disconnected 0-dimensional manifold.

\begin{remark}\label{remark31a}
A particular case of discrete bundle appears when we consider a smooth bundle $\pi\colon Y\rightarrow X$ on some smooth manifold $X$, and an injective mapping $x\colon V_0\rightarrow X$. The space $$Y^x=x^*Y=\{(v,y)\in V_0\times Y\,\colon\, x(v)=\pi(y)\}$$ together with the projection $(v,y)\mapsto \pi(y)=x(v)$ is a discrete bundle on $V$.
\end{remark}
\begin{define}
For any discrete bundle $\pi\colon Y_0\rightarrow V_0$ on the $n$-dimensional abstract cellular complex $V$, and for any $k\in\{0,1,\ldots, n\}$ we call $\pi_k\colon Y_k\rightarrow V_k$ the bundle whose fiber at any cell $\alpha\in V_k$ is  
	\begin{equation*}
	(Y_k)_\alpha=\prod\limits_{(v\in V_0)\prec\alpha} (Y_0)_v
	\end{equation*}
the product of all fibers $\pi^{-1}(v)=(Y_0)_v$, for every vertex $v$ adherent to $\alpha$.

Elements $y_\alpha\in (Y_k)_\alpha$ can be seen as sections of the bundle $Y_0$ restricted to the set of vertices $v\prec\alpha$. To simplify the notations, for any $k$-cell $\alpha\in V_k$ and any discrete bundle $\pi\colon Y_0\rightarrow V_0$ we shall denote $(Y_k)_\alpha$ as $Y_\alpha$. In particular, we shall write $Y_v$ instead of $(Y_0)_v=\pi^{-1}(v)$. Therefore $Y_\alpha=\prod\limits_{v\prec\alpha} Y_v$.
\end{define}

\begin{define}
For any discrete bundle $\pi\colon Y\rightarrow V$ (the base space $V$ is discrete, but its fibers are smooth manifolds), we may consider the bundle $VY\rightarrow Y$, whose fiber at a configuration $y_v\in Y$ is the tangent space $T_{y_v}Y_v$ of the fiber (where $v=\pi(y_v)\in V_0$). This shall be called the vertical bundle associated to $Y$, and elements of this vertical bundle on a point $y_v\in Y$ shall be called infinitesimal variations of this configuration and represented as $\delta y_v$.\newline For any given section $y\in \Gamma(V_0,Y_0)$, we call infinitesimal variation $\delta y$ of the section $y$ any mapping taking each vertex $v\in V_0$ to a tangent vector $\delta y_v\in T_{y_v}Y_v$, that is, a section $\delta y\in\Gamma(V_0,y^*VY)$ of the pull-back of the bundle $VY\rightarrow Y_0$ by $y\colon V_0\rightarrow Y_0$.
\end{define}
Observe that for any bundle of configurations $\pi\colon Y_0\rightarrow V_0$ and for the associated bundles $\pi_k\colon Y_k\rightarrow V_k$ and $VY_0\rightarrow Y_0\rightarrow V_0$, there holds that the vertical bundle $VY_k\rightarrow Y_k\rightarrow V_k$ coincides with $(VY_0)_k\rightarrow V_k$ (because the tangent space at any point of a product manifold is the product of the corresponding tangent spaces). Giving an infinitesimal variation $\delta y_\alpha$ of a configuration $y_\alpha=(y_v)_{v\prec\alpha}\in Y_\alpha$ at some $k$-cell $\alpha\in V_k$ is the same as giving a set of infinitesimal variations $\delta y_v$ of the configurations $y_v$, for all the adherent vertices $v\prec\alpha$.

\begin{remark}
If we choose $V$ to be a simplicial complex and fix a manifold $Q$, the space $Y_0=V_0\times Q$ is the trivial bundle associated to $Q$. Giving an element of $Y_0$ is the same as giving a vertex $v\in V_0$ and a configuration $q\in Q$ for that vertex. An element of $Y_k$ is the same as giving a $k$-simplex belonging to $V$ (an unordered set of $k+1$ vertices $\{v_0,\ldots, v_k\}\in V_k$), and configurations $q_0,\ldots,q_k\in Q$, each one associated to each vertex of the simplex. An infinitesimal variation at $y_v=(v,q)\in Y_0$ is defined by $\delta y_v=(v,D_q)$, where $D_q$ is a tangent vector at $q\in Q$. Giving an infinitesimal variation $\delta y_\alpha\in VY_k$ is the same as giving a $k$-simplex, and tangent vectors $D_{q_0},D_{q_1},\ldots,D_{q_k}$ at the points $q_0,\ldots, q_k$, each one associated to each vertex of the simplex.
\end{remark}
Together with the configuration bundle, the second main ingredient to determine a variational principle is the Lagrangian density:
\begin{define}
Given a discrete bundle $\pi\colon Y_0\rightarrow V_0$ on the space of vertices of some $n$-dimensional abstract cellular complex, we call discrete Lagrangian density any smooth mapping $L\colon Y_n\rightarrow \mathbb{R}$.
\end{define}
\begin{remark}
As $Y_n$ is the discrete union of its fibers $Y_\beta$ ($\beta\in V_n$), and each of these fibers is a direct product of manifolds $Y_v$, giving a discrete Lagrangian density will be the same as giving a family of smooth functions $L_\beta\colon Y_\beta=\prod\limits_{v\prec\beta}Y_v\rightarrow \mathbb{R}$. One may consider in particular the case that $V$ is a simplicial complex, that $Y_0=V_0\times Q$ (the trivial bundle), and that vertices adherent to any $n$-cell are ordered by some convention (so that $Y_n=V_n\times\prod\limits_{k=0}^nQ$), in this case giving the discrete Lagrangian density is giving $L_\beta\colon Q^{\times(n+1)}\rightarrow \mathbb{R}$ for each $\beta\in V_n$. Observe that instead of choosing an ordering of vertices, we may choose an orientation, that is, one of the two possible classes of orderings, determined with respect to positive permutations.  We may then avoid to fix any particular ordering if we assume that $L_\beta$ is invariant with respect to positive permutations $\sigma\in \mathrm{Alt}_{n+1}\subseteq \mathrm{Sym}_{n+1}$ on $Q^{\times (n+1)}$. This assumption resembles the situation of smooth Lagrangian densities, which are differential $n$-forms. Moreover in the most simple case, we might take the same function $L\colon Q^{\times(n+1)}\rightarrow \mathbb{R}$ for every n-cell $\beta\in V_n$. In the particular situation described above, a discrete Lagrangian density shall be a function $L\colon Q^{\times(n+1)}\rightarrow \mathbb{R}$ that is invariant with respect to the action of the alternate permutations group $\mathrm{Alt}_{n+1}$.

To get more general results, we don't necessarily assume the particular situation in this remark.
\end{remark}
Any section $y\in\Gamma(V_0,Y_0)$ induces a section $y_n\in\Gamma(V_n,Y_n)$, and computing the values of the Lagrangian density $L$ on this section, we get a $n$-cochain $L(y)\in \Omega^n(V)$, whose value at any $n$-cell $\beta\in V_n$ is simply $L_\beta(y_\beta)=L_\beta\left((y_v)_{v\prec\beta}\right)$. The integration of this cochain on bounded domains determines a discrete action functional, as follows:
\begin{define}\label{def36}
For any finite subset $\Aset\subset V_n$ (the domain of integration) and discrete Lagrangian density $L\colon Y_n\rightarrow \mathbb{R}$, we call action functional $\LL_{\Aset}$ the following mapping:
    \begin{equation*}
    \begin{array}{rrcl}
    \mathbb{L}_{\Aset}\colon & \Gamma(V_0,Y_0)& \rightarrow & \mathbb{R}\\
     & y=(y_v)_{v\in V_0} & \mapsto & \langle c_{\Aset},L(y)\rangle=\sum\limits_{\beta\in \Aset} L_\beta(y_\beta)
    \end{array}
    \end{equation*}
where $c_{\Aset}\in C_n(V,\mathbb{Z})$ is the characteristic chain associated to $\Aset$, and $L(y)\in\Omega^n(V)$ is the $n$-cochain whose value at some $n$-cell $\beta\in V_n$ is $L_\beta(y_\beta)$.
\end{define}
\begin{define}
We call differential of the action functional $\dd_y\LL_{\Aset}$ at some configuration $y\in\Gamma(V_0,Y_0)$ the following linear mapping:
    \begin{equation*}
    \begin{array}{rrcl}
    \dd_y\LL_{\Aset}\colon & \Gamma(V_0,y^*VY_0)& \rightarrow & \mathbb{R}\\
     & \delta y=(\delta y_v)_{v\in V_0} & \mapsto & \sum\limits_{\beta\in \Aset}(\dd_{y_\beta} L_\beta)(\delta y_\beta)
    \end{array}
    \end{equation*}
\end{define}
Observe that the action functional and the differential defined above make sense only if the sum is finite, therefore our domains of integration must be finite to define the functional. The differential $\dd_y\mathbb{L}_{\Aset}$ depends only on the values of $\delta y$ on vertices adherent to some $n$-cell $\beta$ where the chain $c_{\Aset}$ doesn't vanish (hence $\beta\in \Aset$). This means that $\dd_y\LL_{\Aset}$ only depends on the values of $\delta y$ on a finite number of vertices. To study this differential we may restrict ourselves to the subspace $\bigoplus V_{y_v}Y_0\subset\Gamma(V_0,y^*VY_0)$, which consists of those sections of $y^*VY_0$ that vanish on all but a finite number of vertices.

We call $\dd_y\LL$ the linear mapping:
    \begin{equation*}
    \begin{array}{rrcl}
    \dd_y\LL\colon & \bigoplus V_{y_v}Y_0& \rightarrow & \mathbb{R}\\
     & \delta y=(\delta y_v)_{v\in V_0} & \mapsto & \sum\limits_{\beta\in V_n}(\dd_{y_\beta} L_\beta)(\delta y_\beta)
    \end{array}
    \end{equation*}
We have now an expression that is independent of any domain of integration, and well-defined as infinitesimal variations $\delta y\in\bigoplus V_{y_v}Y_0$ vanish at every vertex except for a finite number of them.

Suppose that $\dd_y\LL$ vanishes. Then for any infinitesimal variation $\delta y\in \bigoplus V_{y_v}Y_0$ and for $\Aset\subset V_n$ big enough (i.e.~$\Aset$ that contains the spheres $S_v$ for each $v$ with $\delta y_v\neq 0$), there holds $\dd_y\LL_{\Aset}(\delta y)=\dd_y\LL(\delta y)=0$ (this would reflect that the differential at $y$ of the action functional $\LL_{\Aset}$ for the discrete domain $\Aset$ vanishes when applied to infinitesimal variations whose support is interior to $\Aset$).

Discretizing the formulation in \cite{GarcGarcRodr06} for general variational problems with arbitrary admissible variations:
\begin{define}\label{def38}
A variational problem is defined by fixing a discrete bundle $\pi\colon Y_0\rightarrow V_0$, a discrete Lagrangian density $L\colon Y_n\rightarrow \RR$, a subset $\mathrm{Adm}\subseteq \Gamma(V_0,Y_0)$ of admissible sections and a space of admissible infinitesimal variations $\Adm_y\subset \bigoplus V_{y_v}Y_0\subset \Gamma(V_0,y^*VY_0)$ associated to each section $y\in\Gamma(V_0,Y_0)$.
\end{define}
\begin{define}\label{def39}
We say a section $y\in\Gamma(V_0,Y_0)$ of the discrete bundle $Y_0$ is critical for the variational problem given by discrete Lagrangian $L$, admissible sections $\mathrm{Adm}$, and admissible variations $\Adm$ if $y\in\mathrm{Adm}$ and $\dd_y\LL$ vanishes on the space $\Adm_y$ of admissible infinitesimal variations at $y$.
\end{define}
The definition of $\dd_y\LL$ on $\bigoplus V_{y_v}Y_0$ resembles the definition of the first variation of some action functional for smooth variational problems. In the smooth case, different variational principles arise and sections of the bundle $Y$ are called critical with respect to these principles depending on the choice of the infinitesimal variations on which the action functional is stationary. The simplest case is the choice of vector fields with compact support, leading to the known as  fixed boundary problem, whose critical sections are characterised by means of Euler-Lagrange equations. Other choices of admissible variations make sense in different situations, leading to other variational principles and to Euler-Poincar\'{e} equations, Lagrange-Poincar\'{e} equations, Lagrange multiplier rules, and equations of vakonomic or non-holonomic mechanics, for example, that characterize critical sections in these cases.

The question now is how to characterize critical sections in the discrete setting. The answer depends on the choice of the spaces $\Adm_y$ of admissible infinitesimal variations and on the topological structure of the discrete space (abstract cellular complex) where the whole theory is modelled. For the case of the fixed boundary problem, where $\mathrm{Adm}=\Gamma(V_0,Y_0)$, and $\Adm_y=\bigoplus V_{y_v}Y_0$, critical sections are characterised by a discrete analogue of Euler-Lagrange equations:

\begin{theorem}[Discrete Euler-Lagrange characterization of critical sections]\label{thm310}
For the variational problem defined on $\mathrm{Adm}=\Gamma(V_0,Y_0)$ by the discrete Lagrangian density $L\colon Y_n\rightarrow \RR$ and admissible infinitesimal variations $\Adm_y=\bigoplus V_{y_v}Y_v$ (we call this the fixed boundary variational problem), a section $y\in\Gamma(V_0,Y_0)$ is critical if and only if at each vertex $v\in V_0$ the discrete Euler-Lagrange form $\EL(y)\in\Gamma(V_0,y^*V^*Y_0)$ vanishes:
        \begin{equation*}
        0=\EL_v(y)\in V^*_{y_v}Y_v,\quad \forall v\in V_0\quad \text{(discrete E.-L. equations)}
        \end{equation*}
where
	\begin{equation}\label{ELbis}
	\EL_v(y)=\sum\limits_{\beta\in S_v} (\dd_{y_\beta}L_\beta)\circ i_{y_v}^{y_\beta}\in V^*_{y_v}Y_v
	\end{equation}
is defined through the composition of $\dd_{y_\beta}L_\beta\in V^*_{y_\beta}Y_\beta$ with the natural immersions $i_{y_v}^{y_\beta}\colon V_{y_v}Y_v\hookrightarrow V_{y_\beta}Y_\beta=\bigoplus\limits_{\bar v\prec\beta} V_{y_{\bar v}}Y_{\bar v}$, for all $n$-cells $\beta\in V_n$ with $v\prec\beta$.
\end{theorem}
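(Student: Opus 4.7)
The plan is to rewrite the total differential $\dd_y\LL$ as a sum indexed by vertices of $V_0$ rather than by $n$-cells of $V_n$, so that testing criticality vertex-by-vertex immediately decouples the global criticality condition into pointwise conditions $\EL_v(y)=0$.

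The key observation is that for any $n$-cell $\beta\in V_n$ we have $V_{y_\beta}Y_\beta=\bigoplus_{v\prec \beta}V_{y_v}Y_v$, so any infinitesimal variation $\delta y_\beta$ decomposes as $\delta y_\beta=\sum_{v\prec\beta} i_{y_v}^{y_\beta}(\delta y_v)$. By linearity of $\dd_{y_\beta}L_\beta$ one obtains
    \begin{equation*}
    (\dd_{y_\beta}L_\beta)(\delta y_\beta)=\sum_{v\prec\beta}(\dd_{y_\beta}L_\beta)\bigl(i_{y_v}^{y_\beta}(\delta y_v)\bigr).
    \end{equation*}
Plugging this into the defining formula of $\dd_y\LL(\delta y)=\sum_{\beta\in V_n}(\dd_{y_\beta}L_\beta)(\delta y_\beta)$ and exchanging the two summations (which is legitimate because $\delta y$ has finite support and each $S_v$ is finite by condition 3 of Definition \ref{defineACC}, so only finitely many terms are nonzero), the double sum reorganizes as
    \begin{equation*}
    \dd_y\LL(\delta y)=\sum_{v\in V_0}\sum_{\beta\in S_v}(\dd_{y_\beta}L_\beta)\bigl(i_{y_v}^{y_\beta}(\delta y_v)\bigr)=\sum_{v\in V_0}\EL_v(y)(\delta y_v),
    \end{equation*}
where the inner sum is exactly the form $\EL_v(y)$ defined in (\ref{ELbis}), well-defined because $\{\beta\in V_n\colon v\prec \beta\}=S_v$ is finite.

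With this identity the theorem becomes a standard duality argument. The ``if'' direction is immediate: if $\EL_v(y)=0$ for every $v$, then $\dd_y\LL(\delta y)=0$ for every $\delta y\in\bigoplus V_{y_v}Y_v$, so $y$ is critical. For the ``only if'' direction, fix an arbitrary vertex $v_0\in V_0$ and an arbitrary tangent vector $X\in V_{y_{v_0}}Y_{v_0}$, and take the admissible variation $\delta y$ concentrated at $v_0$ (value $X$ at $v_0$, zero elsewhere). The rewritten formula collapses to $\dd_y\LL(\delta y)=\EL_{v_0}(y)(X)$, which must vanish by criticality; since $X$ and $v_0$ are arbitrary, $\EL_v(y)=0$ for all $v$.

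The only delicate point is the interchange of summations, which is not a priori a finite manipulation since $V_n$ may be infinite. This is precisely where the abstract cellular complex hypotheses pay off: condition 3 of Definition \ref{defineACC} guarantees finiteness of each $S_v$, while the restriction to admissible variations in $\bigoplus V_{y_v}Y_v$ (finite support) guarantees that only finitely many vertices $v$ contribute. Both finiteness conditions together reduce the double sum to a finite one, legitimizing the exchange and making the argument entirely formal.
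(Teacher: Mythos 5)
Your proposal is correct and follows essentially the same route as the paper: both rest on the decomposition of $V_{y_\beta}Y_\beta$ as a direct sum over adherent vertices, the identification of $S_v$ with the set of $n$-cells containing $v$, and testing criticality against variations supported at a single vertex. The only cosmetic difference is that you first establish the first-variation identity $\dd_y\LL(\delta y)=\sum_{v}\EL_v(y)(\delta y_v)$ for general finitely-supported variations and justify the interchange of sums explicitly, whereas the paper restricts to single-vertex variations at the outset and thereby sidesteps that interchange.
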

\begin{proof}
By definition, a section $y\in\Gamma(V_0,Y_0)$ is critical for the variational problem defined by $L$ and by  $\Adm_y=\bigoplus V_{y_v}Y_0$ if and only if $\dd_y\LL(\delta y)=0$, for every $\delta y\in \bigoplus V_{y_v}Y_0\subset \Gamma(V_0,y^*VY_0)$. As $\dd_y\LL$ is linear, this holds if and only if $\dd_y\LL(\delta y)=0$ for every vertex $v\in V_0$ and any $\delta y\in V_{y_v}Y_0\subset \Gamma(V_0,y^*VY_0)$. This inclusion associates to any fixed $\delta y_v\in V_{y_v}Y_v$  the section $\delta y\in \Gamma(V_0,y^*VY_0)$ whose value at $v$ is $\delta y_v$ and whose value is 0 elsewhere.

If $\delta y\in\Gamma(V_0,y^*VY_0)$ is defined by $\delta y_v\in V_{y_v}Y_v$ we get:
    \begin{equation*}
    \dd_y\LL(\delta y)=\sum\limits_{\beta\in V_n}(\dd_{y_\beta} L_\beta)(\delta y_\beta)=\sum\limits_{\beta\in S_v}(\dd_{y_\beta} L_\beta)(\delta y_\beta)=\sum\limits_{\beta\in S_v}(\dd_{y_\beta} L_\beta)\circ i_{y_v}^{y_\beta} (\delta y_v)
    \end{equation*}
because $\delta y_\beta=(\delta y_{\bar v})_{\bar v\prec\beta}$ is zero whenever $v$ is not adherent to $\beta$, and the sphere $S_v\subset V_n$ centered at $v$ is precisely the set of $n$-cells $\beta\in V_n$ that contain $v\in V_0$. For these cells, $\delta y_\beta\in V_{y_\beta}Y_\beta$ is by definition $i_{y_v}^{y_\beta}(\delta y_v)$.

As we define $\EL_v(y)=\sum\limits_{\beta\in S_v} (\dd_{y_\beta}L_\beta)\circ i_{y_v}^{y_\beta}\in V^*_{y_v}Y_v$, the proof is completed.  \qed
\end{proof}
The section $\EL_v(y)\in\Gamma(V_0,y^*(V^*Y_0))$ defined in  (\ref{ELbis}) characterizes critical sections by $\EL(y)=0$, which plays in the discrete theory the same role as Euler-Lagrange equations (\ref{EL}) do in the smooth setting.

We shall now introduce a discrete version of Noether's theorem. In \cite{CasiRodr12,CasiRodr12b} a discrete version of this theorem is obtained, in terms of cochains, and a physical interpretation is given. The discrete Noether theorem fully reflects all the aspects from smooth theory and the concrete expression for conserved Noether currents relies on a particular combinatorial expression whose origin is due to the choice of the n-D cubic cellular complex as discrete space in those papers. In this article we won't restrict ourselves to cartesian grids and shall explore the conservation laws for a wider class of discrete spaces. Therefore we won't express our results in terms of cochains but rather in terms of its integration, according to the physical interpretation of Noether currents in those references.

\begin{define}\label{definesimetria}
Consider a configuration $y\in\Gamma(V_0,Y_0)$. We say a smooth vector field $D\in\mathfrak{X}(Y_0)$ is an infinitesimal symmetry at $y$ for the discrete Lagrangian $L\colon Y_n\rightarrow \mathbb{R}$ if at any $n$-cell $\beta\in V_n$ the corresponding restriction $D_{y_\beta}\in \oplus_{v\prec\beta} T_{y_v}Y_v=T_{y_\beta} Y_\beta$ is incident with $\dd_{y_\beta}L_\beta$, that is, if
        $$(\dd_{y_\beta}L_\beta)(D_{y_\beta})=0$$
We say that $D\in\mathfrak{X}(Y_0)$ is an infinitesimal symmetry for the discrete lagrangian $L$ (now, regardless of any given configuration $y$) if the previous formula holds for any $y_\beta\in Y_n$.
\end{define}
\begin{remark}
Any fibered automorphism $\varphi\colon Y_0\rightarrow Y_0$ acts on discrete Lagrangians $L$ by $(\varphi\cdot L)_\beta((y_v)_{v\prec\beta})=L_\beta(\varphi^{-1}y_v)_{v\prec\beta}$. Any one-parameter group $\{\varphi_t\}_{t\in\mathbb{R}}$ of vertical automorphisms of the bundle $Y_0\rightarrow V_0$ such that $\varphi_t\cdot L=L$ has then as generator an infinitesimal symmetry for the discrete Lagrangian $L$.
\end{remark}
\begin{theorem}[Discrete Noether's theorem]
Let $\Aset\subset V_n$ be a finite collection of $n$-cells, and let $y\in\Gamma(V_0,Y_0)$. If $D\in\mathfrak{X}(Y_0)$ is an infinitesimal symmetry at $y$ for the discrete lagrangian density $L\colon Y_n\rightarrow\RR$, then:
        \begin{equation*}
        0=\sum_{v\in\Int \Aset} (\EL_v(y))(D_{y_v})+\sum_{(v\in\Fr \Aset)\prec(\beta\in \Aset)} (\dd_{y_\beta} L_\beta)(i_{y_v}^{y_\beta}D_{y_v})
        \end{equation*}
If $y\in\Gamma(V_0,Y_0)$ is critical for the fixed boundary problem, there holds the Noether conservation law:
			\begin{equation}\label{32a}
			0=\sum_{(v\in\Fr \Aset)\prec(\beta\in \Aset)} (\dd_{y_\beta} L_\beta)(i_{y_v}^{y_\beta}D_{y_v})
			\end{equation}
or equivalently:
    \begin{equation}\label{NoetConsLaw}
    0=\sum_{\beta\in \Aset} (\dd_{y_\beta} L_\beta)(\bar D_\beta)
    \end{equation}
where $\bar{D}\in\Gamma(V_0,y^*VY_0)$ denotes the section defined by $\bar{D}_v=D_{y_v}$ for $v\in\Fr \Aset$,	and $0$ elsewhere.
\end{theorem}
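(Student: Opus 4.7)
The plan is to start from the symmetry hypothesis cell by cell and aggregate it over $\Aset$, then reorganize the double sum by vertices and classify them as interior, frontier, or exterior to $\Aset$. Concretely, because $D$ is an infinitesimal symmetry at $y$, we have $(\dd_{y_\beta}L_\beta)(D_{y_\beta})=0$ for every $\beta\in V_n$, so in particular
\begin{equation*}
0=\sum_{\beta\in\Aset}(\dd_{y_\beta}L_\beta)(D_{y_\beta}).
\end{equation*}
Since $Y_\beta=\prod_{v\prec\beta}Y_v$, the vertical tangent space decomposes as $V_{y_\beta}Y_\beta=\bigoplus_{v\prec\beta}V_{y_v}Y_v$, and $D_{y_\beta}=\sum_{v\prec\beta}i_{y_v}^{y_\beta}(D_{y_v})$. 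Substituting and swapping the order of summation yields
\begin{equation*}
0=\sum_{\beta\in\Aset}\sum_{v\prec\beta}(\dd_{y_\beta}L_\beta)(i_{y_v}^{y_\beta}D_{y_v})=\sum_{v\in V_0}\sum_{\substack{\beta\in\Aset\\ v\prec\beta}}(\dd_{y_\beta}L_\beta)(i_{y_v}^{y_\beta}D_{y_v}).
\end{equation*}

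Next I would split the outer sum over $v$ according to the position of $v$ relative to $\Aset$. Exterior vertices satisfy $S_v\cap\Aset=\emptyset$, hence contribute nothing. Interior vertices satisfy $S_v\subset\Aset$, so $\{\beta\in\Aset:v\prec\beta\}=S_v$ and the inner sum is exactly the Euler--Lagrange form $\EL_v(y)(D_{y_v})$ from (\ref{ELbis}). Frontier vertices give a partial sum over the subset of the sphere $S_v$ that lies inside $\Aset$. Combining these three contributions produces precisely
\begin{equation*}
0=\sum_{v\in\Int\Aset}\EL_v(y)(D_{y_v})+\sum_{\substack{v\in\Fr\Aset\\ \beta\in\Aset,\,v\prec\beta}}(\dd_{y_\beta}L_\beta)(i_{y_v}^{y_\beta}D_{y_v}),
\end{equation*}
which is the first assertion. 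If $y$ is critical for the fixed boundary problem, Theorem~\ref{thm310} forces $\EL_v(y)=0$ at every vertex, so the interior sum drops out and (\ref{32a}) follows.

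Finally, to obtain the equivalent form (\ref{NoetConsLaw}), I would unpack the right-hand side using the same fiberwise decomposition $\bar D_\beta=\sum_{v\prec\beta}i_{y_v}^{y_\beta}\bar D_v$; since $\bar D_v$ is set to vanish outside $\Fr\Aset$, only pairs $(v,\beta)$ with $v\in\Fr\Aset$ and $v\prec\beta\in\Aset$ survive, matching (\ref{32a}) term by term. The only real care required is in the bookkeeping of the vertex trichotomy and verifying that the sums over $v\prec\beta$ with $\beta\in\Aset$ reindex cleanly: the interior/frontier/exterior partition defined via $S_v$ is exactly what makes the combinatorics collapse, and this step — rather than any analytic content — is where one must be precise. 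No use of the specific geometry of $V$ (cubic, simplicial, CFK) is needed; the proof depends only on the bundle decomposition $Y_\beta=\prod_{v\prec\beta}Y_v$ and the definition of $S_v$.
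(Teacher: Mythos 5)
Your proposal is correct and follows essentially the same route as the paper's own proof: summing the symmetry identity over $\Aset$, decomposing $D_{y_\beta}=\sum_{v\prec\beta}i_{y_v}^{y_\beta}D_{y_v}$, swapping the order of summation, and splitting vertices into interior, frontier and exterior using the spheres $S_v$, with the final reformulation via $\bar D_\beta=\sum_{(v\in\Fr\Aset)\prec\beta}i_{y_v}^{y_\beta}D_{y_v}$. No gaps.
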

\begin{proof}
In the conditions given for $D$, $y$, $L$ we have:
    \begin{equation*}
    \begin{aligned}
    0&=\sum_{\beta\in \Aset} (\dd_{y_\beta} L_\beta)(D_{y_\beta})=\sum_{\beta\in \Aset}\left(\dd_{y_\beta} L_\beta\right)\left(\sum_{v\prec\beta} i_{y_v}^{y_\beta}D_{y_v}\right)=\sum_{\beta\in \Aset}\sum_{v\prec\beta} (\dd_{y_\beta} L_\beta)(i_{y_v}^{y_\beta}D_{y_v})=\\
    &=\sum_{v\in\Int \Aset}\sum_{v\prec\beta\in \Aset} (\dd_{y_\beta} L_\beta)(i_{y_v}^{y_\beta}D_{y_v})+\sum_{v\in\Fr \Aset}\sum_{v\prec\beta\in \Aset} (\dd_{y_\beta} L_\beta)(i_{y_v}^{y_\beta}D_{y_v})+\sum_{v\in\Ext \Aset}\sum_{v\prec\beta\in \Aset} (\dd_{y_\beta} L_\beta)(i_{y_v}^{y_\beta}D_{y_v})
    \end{aligned}
    \end{equation*}
For $v\in \Ext \Aset$, there is no $n$-cell $\beta$ with $v\prec\beta\in \Aset$. For $v\in\Int \Aset$ one has $v\prec\beta\in \Aset\Leftrightarrow \beta\in S_v$, so using the definition of $\EL_v(y)$, we conclude the first formula in our theorem.

The second formula is a direct consequence of this one, because $\EL_v(y)$ vanishes for critical $y\in\Gamma(V_0,Y_0)$.

The last expression is a direct consequence of $\bar D_\beta=\sum\limits_{(v\in\Fr \Aset)\prec \beta} i_{y_v}^{y_\beta}D_{y_v}$, valid at any $n$-cell $\beta\in V_n$ because of the definition of $\bar{D}\in\Gamma(V_0,y^*VY_0)$.\qed
\end{proof}
Discrete Noether conservation law (\ref{32a}) is a discrete analogue of (\ref{NoetSmooth}) from smooth variational calculus, and can be written in terms of discrete differential and integration of (n-1)-forms for the cubic simplicial complex (see \cite{CasiRodr12}).

\subsection*{Relating variational problems in the CFK simplicial and the cubic cellular complexes}

The discrete variational principles given for our two different examples of $n$-D abstract cellular complexes may be related one to another in a natural way:

Let $V$ be the $n$-D CFK simplicial complex and $\overline{V}$ the $n$-D cubic cellular complex. The corresponding spaces of vertices coincide, $V_0=\mathbb{Z}^n=\overline{V}_0$. We have a natural identification $i_0\colon v\in V_0\to \overline{V}_0$ therefore any discrete bundle $Y$ on $V$ is also a discrete bundle $\overline{Y}$ on $\overline{V}$, and conversely.  Also discrete fields for the configuration bundle $Y\rightarrow V$ coincide with discrete fields for the configuration bundle $\overline{Y}\rightarrow\overline{V}$. However, many notions, in particular the extension to $n$-cells $Y_n$, $\overline{Y}_n$, and the notion of discrete lagrangian density have different meanings in one discrete space or the other.

We may easily observe that any $n$-simplex $\beta\in V_n$ has its adherent vertices contained in a unique hypercube $n$-cell $\overline{\beta}\in \overline{V}_n$, determining so a natural mapping $i\colon V_n\rightarrow\overline{V}_n$ that takes any $n$-simplex of the CFK complex into the unique hypercube of the cubical complex that contains this simplex. More precisely, if $\beta\in V_n$ is determined by $\beta=\alpha(v,\sigma)$ for some $v\in\ZZ^n$ and permutation $\sigma$ (as defined in corollary \ref{indexingncells}), then $i(\beta)=\bar{\beta}$ is given as $\bar\beta=v+\frac12(1,\ldots,1)\in(\frac12\ZZ)^n=\overline{V}$.

Moreover, for any discrete bundle $Y=\overline{Y}$ modelled on $V$ and any $n$-simplex $\beta\in V_n$, the set of vertices adherent to $\beta$ is transformed by $i_0$ into a subset of vertices adherent to $i(\beta)$, so we may consider the restriction of discrete sections defined on the hypercube $i(\beta)$ to discrete sections defined on the simplex $\beta$, a morphism $proj_\beta\colon\overline{Y}_{i(\beta)}\rightarrow Y_\beta$ for each $\beta\in V_n$.

Consider now any discrete Lagrangian density $L\colon Y_n\rightarrow \mathbb{R}$. There exists an induced discrete Lagrangian density $\overline{L}\colon \overline{Y}_n\rightarrow\mathbb{R}$, namely:
	\begin{equation}\label{simplexcubic}
	\overline{L}_{\overline{\beta}}=\sum_{\beta\in V_n\colon i(\beta)=\overline{\beta}} L_\beta\circ proj_\beta\qquad (\overline\beta\in\overline{V}_n)
	\end{equation}
each component $\overline{L}_{\overline{\beta}}$ is determined by addition of components $L_\beta$ for $n!$ different simplicial cells that represent a partition of $K_{\overline{\beta}}$.

\begin{define}
Given any discrete bundle $Y\rightarrow \ZZ^n$ and lagrangian density $L\colon Y_n\rightarrow \mathbb{R}$ defined on CFK $n$-simplices, and for discrete fields of the discrete bundle $Y\rightarrow \ZZ^n$, we call the function $\overline{L}\colon \overline{Y}_n\rightarrow\mathbb{R}$ given in (\ref{simplexcubic}) the discrete Lagrangian density defined on cubic $n$-cells, induced by $L$. 
\end{define}
The following statements are then a direct consequence of our definitions
\begin{prop}
The Euler-Lagrange form $\EL_v(y)$ at some vertex $v\in V_0=\overline V_0$, for some discrete section $y\in\Gamma(X_0,V_0)=\Gamma(X_0,\overline{V}_0)$ for the lagrangian density $L\colon Y_n\rightarrow \mathbb{R}$ coincides with the Euler Lagrange form associated to the same elements, $v,y$, for the associated Lagrangian density $\bar{L}\colon \overline{Y}_n\rightarrow \mathbb{R}$
\end{prop}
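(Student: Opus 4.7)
The plan is to compute both Euler--Lagrange forms from the definition in (\ref{ELbis}) and identify them term by term. Writing $\overline{\EL}_v(y)$ for the Euler--Lagrange form associated to the cubic complex and the induced Lagrangian $\overline L$, we have
\begin{equation*}
\overline{\EL}_v(y) = \sum_{\overline\beta\in \overline S_v}(\dd_{y_{\overline\beta}}\overline L_{\overline\beta})\circ i_{y_v}^{y_{\overline\beta}}.
\end{equation*}
Substituting the defining relation $\overline L_{\overline\beta}=\sum_{\beta\colon i(\beta)=\overline\beta} L_\beta\circ\mathrm{proj}_\beta$ and using linearity of the differential and the chain rule, this becomes a double sum whose generic term is $(\dd_{y_\beta}L_\beta)\circ \dd(\mathrm{proj}_\beta)\circ i_{y_v}^{y_{\overline\beta}}$.

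The heart of the argument is the following identification of the composition $\dd(\mathrm{proj}_\beta)\circ i_{y_v}^{y_{\overline\beta}}$. Since $\mathrm{proj}_\beta\colon \overline Y_{\overline\beta}\to Y_\beta$ is the restriction of a configuration at the vertices of $\overline\beta$ to the configuration at the subset of vertices adherent to the simplex $\beta$, its differential is the corresponding projection of products of tangent spaces, dropping the factors indexed by vertices in $\overline\beta\setminus\beta$. On the other hand, $i_{y_v}^{y_{\overline\beta}}$ embeds $V_{y_v}Y_v$ into the factor labelled by $v$ in the product $\bigoplus_{v'\prec\overline\beta} V_{y_{v'}}Y_{v'}$. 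Therefore the composition equals $i_{y_v}^{y_\beta}$ whenever $v\prec\beta$, and vanishes when $v$ is adherent to $\overline\beta$ but not to $\beta$.

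It remains to rearrange the sum. The pairs $(\overline\beta,\beta)$ with $\overline\beta\in\overline S_v$, $i(\beta)=\overline\beta$, and $v\prec\beta$ are in bijection with the simplices $\beta\in S_v$: the condition $v\prec\beta$ forces $v\prec i(\beta)$, so the cubic cell $i(\beta)$ automatically lies in $\overline S_v$ and is uniquely determined by $\beta$. After discarding the pairs where $v\not\prec\beta$ (which contribute zero), the double sum collapses to
\begin{equation*}
\overline{\EL}_v(y) = \sum_{\beta\in S_v}(\dd_{y_\beta}L_\beta)\circ i_{y_v}^{y_\beta} = \EL_v(y),
\end{equation*}
which is exactly (\ref{ELbis}). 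The main technical obstacle is the explicit identification $\dd(\mathrm{proj}_\beta)\circ i_{y_v}^{y_{\overline\beta}} = i_{y_v}^{y_\beta}$ (or $0$); the rest is bookkeeping of the two indexing conventions for the spheres $S_v$ and $\overline S_v$.
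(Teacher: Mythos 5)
Your proof is correct and is precisely the verification the paper has in mind: the paper states this proposition (together with its two companions) as ``a direct consequence of our definitions'' and omits the computation, and your unwinding of (\ref{ELbis}) via the chain rule, the identification $\dd(\mathrm{proj}_\beta)\circ i_{y_v}^{y_{\overline\beta}}=i_{y_v}^{y_\beta}$ (or $0$ when $v\not\prec\beta$), and the reindexing of the double sum over pairs $(\overline\beta,\beta)$ by $\beta\in S_v$ is exactly that verification. No gaps.
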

\begin{prop}
Consider a discrete bundle $\overline{V}_0Y_0\rightarrow V_0=\overline{V}_0=\mathbb{Z}^n$ and discrete fields $y=\overline{y}\circ i_0$. Any infinitesimal symmetry $D\in\X(y^*Y_0)=\X(\bar y^*\bar Y_0)$ at $y$ of the lagrangian density $L$ defined on CFK $n$-simplices is also an infinitesimal symmetry at $\bar{y}$ of the induced lagrangian density $\overline{L}$ defined on cubic $n$-cells.
\end{prop}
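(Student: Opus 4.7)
The plan is to unwind both definitions and observe that the projection $proj_\beta\colon \overline{Y}_{i(\beta)}\to Y_\beta$ used in formula (\ref{simplexcubic}) is compatible with the vertical vector field $D$ in a trivial way, because $D$ is a single object on the common underlying bundle $Y_0=\overline{Y}_0$ and the two fiber products at a simplex $\beta$ and at its containing hypercube $\overline{\beta}$ are built from the \emph{same} fibers $Y_v$, just on different (nested) sets of adherent vertices.

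Concretely, I would proceed as follows. First, fix an arbitrary cubic $n$-cell $\overline{\beta}\in \overline{V}_n$ with adherent vertices $\{w_0,\ldots,w_{2^n-1}\}$, so that $\overline{Y}_{\overline{\beta}}=\prod_{j} Y_{w_j}$ and $D_{\bar y_{\overline{\beta}}}=(D_{y_{w_j}})_{j}\in \bigoplus_j T_{y_{w_j}}Y_{w_j}$. The goal is to verify $(\dd_{\bar y_{\overline{\beta}}}\overline{L}_{\overline{\beta}})(D_{\bar y_{\overline{\beta}}})=0$. Using the definition of $\overline{L}_{\overline{\beta}}$ given in (\ref{simplexcubic}) and the linearity of the differential,
\begin{equation*}
(\dd_{\bar y_{\overline{\beta}}}\overline{L}_{\overline{\beta}})(D_{\bar y_{\overline{\beta}}})=\sum_{\beta\in V_n,\,i(\beta)=\overline{\beta}} (\dd_{y_\beta}L_\beta)\bigl((proj_\beta)_*D_{\bar y_{\overline{\beta}}}\bigr).
\end{equation*}

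The second step is to identify $(proj_\beta)_*D_{\bar y_{\overline{\beta}}}$ with $D_{y_\beta}$. Since $proj_\beta$ is simply the restriction of a section of $Y_0$ from the vertices adherent to $\overline{\beta}$ to the (smaller) set of vertices adherent to $\beta$, its differential is the corresponding projection on the tangent product $\bigoplus_{w\prec\overline{\beta}} T_{y_w}Y_w\to \bigoplus_{v\prec\beta} T_{y_v}Y_v$. Applied to $D_{\bar y_{\overline{\beta}}}=(D_{y_w})_{w\prec\overline{\beta}}$, it yields $(D_{y_v})_{v\prec\beta}=D_{y_\beta}$, because $D$ is a single vector field on $Y_0=\overline{Y}_0$ and both sides read off the same vectors $D_{y_v}$ at the vertices $v\prec\beta\subseteq w\prec\overline{\beta}$.

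The final step is immediate: by hypothesis $D$ is an infinitesimal symmetry at $y$ of $L$, so $(\dd_{y_\beta}L_\beta)(D_{y_\beta})=0$ for every $\beta\in V_n$, and in particular for every $\beta$ with $i(\beta)=\overline{\beta}$. Hence each summand above vanishes, and $(\dd_{\bar y_{\overline{\beta}}}\overline{L}_{\overline{\beta}})(D_{\bar y_{\overline{\beta}}})=0$. Since $\overline{\beta}\in\overline{V}_n$ was arbitrary, $D$ is an infinitesimal symmetry at $\bar y$ of $\overline{L}$. There is essentially no obstacle: the only point requiring a small argument is identifying $(proj_\beta)_*$ with a coordinate-projection of the tangent product, which follows from the product-structure identity $V\overline{Y}_{\overline{\beta}}=\bigoplus_{w\prec\overline{\beta}}VY_w$ noted just after definition \ref{def36}.
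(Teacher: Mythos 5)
Your proof is correct and is precisely the argument the paper has in mind: the paper states this proposition (together with its companions) is ``a direct consequence of our definitions'' and omits the details, which are exactly your chain-rule computation through $proj_\beta$ and the identification of $(proj_\beta)_*D_{\bar y_{\overline{\beta}}}$ with $D_{y_\beta}$ via the product structure of the fibers. Nothing is missing.
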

\begin{prop}
For any finite subset of cubic n-cells $\overline{\Aset}\subset \overline{V}_n$, taking the set $\Aset=i^{-1}\left(\overline{\Aset}\right)\subset V_n$, there holds:
\begin{equation*}
\sum_{(v\in\Fr \overline{\Aset})\prec(\overline\beta\in \overline{\Aset})} (\dd_{y_{\overline{\beta}}} {\overline L}_{\overline\beta})(i_{y_v}^{y_{\overline\beta}}D_{y_v})=
\sum_{(v\in\Fr \Aset)\prec(\beta\in \Aset)} (\dd_{y_\beta} L_\beta)(i_{y_v}^{y_\beta}D_{y_v})
\end{equation*}
That is, for each infinitesimal variation $D$ defined along any discrete field $y$, the associated Noether current on the boundary of a given domain $\overline{\Aset}$ of the n-D cubic complex coincides with the Noether current associated to the same infinitesimal variation on the boundary of the associated domain $\Aset$ on the CFK simplicial complex
\end{prop}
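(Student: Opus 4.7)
The strategy is a direct combinatorial unfolding of both sides, organized in two main steps: first, I would match the sets of interior, exterior, and frontier vertices of $\Aset$ and $\overline{\Aset}$; then I would use the defining relation $\overline{L}_{\overline{\beta}}=\sum_{\beta\colon i(\beta)=\overline{\beta}}L_\beta\circ \proj_\beta$ to reorganize the boundary sum on the cubic side into the corresponding sum on the simplicial side. No Noether-type hypothesis on $D$ is needed: the identity is purely combinatorial.

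For the first step, the goal is to show $\Fr\overline{\Aset}=\Fr\Aset$ (and, \emph{en passant}, $\Int\overline{\Aset}=\Int\Aset$). The key observation is that for any vertex $v\in\ZZ^n$ the cubic sphere centered at $v$ coincides as a set with the image under $i$ of the simplicial sphere centered at $v$. Indeed, $v\prec\beta$ forces $v$ to be one of the $n+1$ corners of the path-simplex $\beta=\alpha(w,\sigma)$, which lie among the $2^n$ corners of the cube $i(\beta)=w+\tfrac12(1,\ldots,1)$, so $v\prec i(\beta)$; conversely, every corner $v$ of any cube $\overline{\beta}$ is a vertex of at least one path-simplex $\beta$ with $i(\beta)=\overline{\beta}$ (choose, via Corollary \ref{indexingncells}, a permutation whose partial sums place $v$ along the path). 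Combined with $\Aset=i^{-1}(\overline{\Aset})$, this yields that $S_v\subseteq\overline{\Aset}$ if and only if the simplicial sphere at $v$ is contained in $\Aset$, and likewise for the exterior condition.

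For the second step, note that $\proj_\beta\colon \overline{Y}_{\overline{\beta}}\to Y_\beta$ is a coordinate projection from $\prod_{w\prec\overline{\beta}}Y_w$ onto $\prod_{w\prec\beta}Y_w$, so by the chain rule the composition $(\proj_\beta)_{*}\circ i_{y_v}^{y_{\overline{\beta}}}$ vanishes when $v\not\prec\beta$ (the $v$-coordinate is discarded) and equals $i_{y_v}^{y_\beta}$ when $v\prec\beta$. Therefore
\begin{equation*}
(\dd_{y_{\overline{\beta}}}\overline{L}_{\overline{\beta}})(i_{y_v}^{y_{\overline{\beta}}}D_{y_v})=\sum_{\substack{\beta\in V_n\\ i(\beta)=\overline{\beta},\, v\prec\beta}}(\dd_{y_\beta} L_\beta)(i_{y_v}^{y_\beta}D_{y_v}).
\end{equation*}
Substituting this into the LHS and using Step 1 to rewrite $\Fr\overline{\Aset}$ as $\Fr\Aset$, the resulting triple sum over $(v,\overline{\beta},\beta)$ with $\overline{\beta}\in\overline{\Aset}$, $i(\beta)=\overline{\beta}$ and $v\prec\beta$ collapses on eliminating $\overline{\beta}$ via $\beta\mapsto i(\beta)$: the constraint $\overline{\beta}\in\overline{\Aset}$ becomes $\beta\in\Aset$, while $v\prec\overline{\beta}$ follows automatically from $v\prec\beta$. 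The resulting expression is exactly the RHS.

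The only delicate point is the combinatorial assertion in Step 1 that the $n!$ path-simplices with $i(\beta)=\overline{\beta}$ collectively use every corner of the cube $\overline{\beta}$ as a vertex of at least one simplex; once this is verified using the indexing of CFK $n$-cells by pairs $(v,\sigma)$ and the characterization of adherence in the CFK complex, Steps 2 and 3 are an application of the chain rule and a routine reordering of finite sums.
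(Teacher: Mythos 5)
Your proof is correct and is precisely the detailed verification that the paper leaves implicit: the proposition (together with its two companions) is stated there as ``a direct consequence of our definitions'' with no argument given. The two facts you isolate --- that $i$ carries the simplicial sphere $S_v$ onto the cubic sphere centered at $v$ (every corner $v=w+e_T$ of a cube is reached by the path-simplex $\alpha(w,\sigma)$ for any $\sigma$ listing the elements of $T$ first), whence $\Fr\Aset=\Fr\overline{\Aset}$ for $\Aset=i^{-1}(\overline{\Aset})$, and that $(\proj_\beta)_*\circ i_{y_v}^{y_{i(\beta)}}$ equals $i_{y_v}^{y_\beta}$ or $0$ according as $v\prec\beta$ or not --- are exactly what is needed, and the final reindexing of the triple sum is a bijective change of summation variable, so the argument is complete.
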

Summarizing, a variational problem on the CFK simplicial complex naturally leads to a variational problem on the cubic cellular complex. Critical discrete fields are the same, for both discrete Lagrangian densities, Noether currents coincide, and symmetries for the variational principles for both discrete Lagrangians are the same, leading to the same conserved currents. All results in \cite{CasiRodr12} lead thus to corresponding results for the case of the CFK simplicial complex.

\section{Variational Integrators}\label{secintegrator}

Let us now put our focus on Euler-Lagrange form 	(\ref{ELbis}) associated to any section $y\in\Gamma(V_0,Y)$, that characterizes critical sections of the variational problem defined by $L=(L_\beta)_{\beta\in V_n}$. It can be seen as a system of difference equations, discrete analogue of Euler-Lagrange PDEs of smooth theories.

Integration algorithms  for the Cauchy problem determined by a system of partial difference equations (like (\ref{ELbis})) and an initial condition band given on $\mathbb{Z}^n$ have been studied, for example in \cite{AdlVes04,FadVol94}. We aim to derive such algorithms, for the set of Euler-Lagrange equations, for fields given on a rather general cellular complex. Cauchy initial data will be integrated in the direction determined by some discrete flow:
	\begin{define}
	We call first order incremental flow on the discrete space $V$ any mapping $\Delta\colon v\in V_0\mapsto \Delta_v\in V_0$ such that $v\neq \Delta_v$ share a common $n$-cell, for each $v\in V_0$ (there exists $\beta\in V_n$ with $v,\Delta_v\prec \beta$).
	\end{define}
As example, for the n-D cubic cellular complex, or for the n-D CFK simplicial complex where $V_0=\mathbb{Z}^n$, we may consider the first order incremental flow $v\mapsto v+(1,\ldots,1)$ defined on $\ZZ^n$.
	\begin{define}
	Consider, for any vertex $v\in V_0$, the sphere $S_v$ centered at $v$, and define $S^\Delta_v$, the $v$-neighborhood on the direction of the first order incremental flow $\Delta$:
		\begin{equation*}
		S^\Delta_v=\left\{ \beta\in V_n\,\colon\,v,\Delta_v\prec\beta\right\}\subseteq S_v=\left\{ \beta\in V_n\,\colon\,v\prec\beta\right\}
		\end{equation*}
	We denote by $\bar S^\Delta_v\subseteq \bar S_v\subseteq V_0$, the sets of vertices adherent to $S^\Delta_v$ or $S_v$, respectively. That is, the set of vertices adherent to some $n$-cell $\beta\in S^\Delta_v$ (or $\beta\in S_v$, respectively).
	\end{define}

\begin{define}
For any discrete bundle $Y\rightarrow V_0$ and any first order incremental flow $\Delta$ on $V$, the discrete bundles $Y_\Delta\rightarrow V_0$, $Y_S\rightarrow V_0$,  $Y_{S^\Delta}\rightarrow V_0$, are defined as having the following fibers on any vertex $v\in V_0$:
	\begin{equation*}
	(Y_\Delta)_v=Y_{\Delta_v}=(Y_u)_{u=\Delta_v},\quad 
	(Y_S)_v=Y_{S_v}=\prod_{u\in \bar S_v} Y_u ,\quad 
	(Y_{S^\Delta})_v=Y_{S^\Delta_v}=\prod_{u\in \bar S^\Delta_v} Y_u 
	\end{equation*}
\end{define}
Any discrete field $y\in \Gamma(V_0,Y)$ naturally induces corresponding fields $y_\Delta\in\Gamma(V_0,Y_\Delta)$, $y_S\in\Gamma(V_0,Y_S)$,  $y_{S^\Delta}\in \Gamma(V_0,Y_{S^\Delta})$. Moreover, as $v\in \bar S^\Delta_v\subset \bar S_v$, there exist natural projections $Y_S\rightarrow Y_{S^\Delta}\rightarrow Y$ and both sections $y_S, y_{S^\Delta}$ project onto $y$.  Analogously, as $\Delta_v\in \bar S^\Delta_v\subset \bar S_v$, there are natural projections $Y_S\rightarrow Y_{S^\Delta}\rightarrow Y_\Delta$,  and the discrete fields $y_S, y_{S^\Delta}, y_\Delta$ defined above project one onto the next one when we consider these projections.

		\begin{define}\label{defineELT}
We call Euler-Lagrange tensor $EL\colon Y_S\rightarrow V^*Y$ associated to a discrete Lagrangian density $(L_{\beta})_{\beta\in V_n}$, the fibered mapping over $Y$ defined on each fiber according to (\ref{ELbis}):
		\begin{equation*}
		EL_v\left( y_{S_v}\right)=\sum_{\beta\in S_v} (\dd_{y_\beta}L_\beta)\circ i_{y_v}^{y_\beta}\in V^*_{y_v}Y_v\qquad y_{S_v}\in Y_{S_v}, \, y_\beta=\pi_\beta(y_{S_v})\in Y_\beta,\, y_v=\pi_v(y_{S_v})
		\end{equation*}
	where $\pi_\beta\colon Y_{S_v}\rightarrow Y_\beta$, $\pi_v\colon Y_{S_v}\rightarrow Y_v$ are the natural projections defined on each $y_{S_v}=(y_u)_{u\in \bar S_v}\in Y_{S_v}$.
	\end{define}
\begin{remark}
Following theorem \ref{thm310}, a section $y\in\Gamma(V_0,Y)$ is critical for the fixed boundary variational problem determined by some discrete Lagrangian density $(L_{\beta})_{\beta\in V_n}$ if and only if $EL\circ y_S\in\Gamma(V_0,y^*(V^*Y))$ vanishes, for the associated section $y_S\in\Gamma(V_0,Y_S)$.
\end{remark}
When $\dim Y_v=m$, equations $0=EL_v(y_{S_v})$ represent a system of $m$ equations, on the manifold $Y_{S_v}=\prod_{w\in \bar S_v}Y_w$. Fixing $y_u\in Y_u$ for almost each vertex $u\in \bar S_v$ in the sphere, leaving the single component $y_{\Delta_v}\in Y_{\Delta_v}$ as unknown, would lead to a system of $m$ equations on $Y_{\Delta_v}$ that, assuming $\dim Y_{\Delta_v}=\dim Y_v=m$ and some appropriate regularity notion for $EL$, would indicate that for each critical section the single $Y_{\Delta_v}$-component is implicitly determined by Euler-Lagrange equations on $v$, when all the remaining components $y_u$ are fixed on the sphere $S_v$. We shall next make this statement more precise, and describe the notion of integrator, our main tool to explicitly recover the unknown.

Fixing the first order incremental flow $\Delta$, we may decompose the Euler-Lagrange tensor into two components, one related to n-cells that contain $v,\Delta_v$ and the other one related to n-cells that contain $v$ but not $\Delta_v$:
		\begin{equation}\label{otroEL}
		EL\colon y_{S_v}\in Y_S\mapsto \sum_{\beta\in S^\Delta_v} (\dd_{y_\beta}L_\beta)\circ i_{y_v}^{y_\beta}+\sum_{\beta\in S_v\setminus S^\Delta_v} (\dd_{y_\beta}L_\beta)\circ i_{y_v}^{y_\beta}\in V^*Y
		\end{equation}
Observe that the first summand depends only on the projection of $y_{S_v}$ onto $y_{S^\Delta_v}\in Y_{S^\Delta}$. The second summand doesn't depend explicitly on $y_{\Delta_v}$. It is well defined on $Y_{S_v}/Y_{\Delta_v}=\prod\limits_{\substack{u\in \bar S_v \\u\neq \Delta_v}}Y_u$. 

\begin{define}
	Consider a discrete bundle $Y_0\rightarrow V_0$ and the first order incremental flow $\Delta$. We call momentum bundle $Y^*_{S^\Delta}\rightarrow V_0$ the bundle $Y^*_{S^\Delta}=V^*Y\times_Y (Y_{S^\Delta}/Y_\Delta)$, whose fiber is:
		\begin{equation*}
		Y^*_{S^\Delta_v}=\left(V^*Y_v\right)\times_{Y_v} \left(Y_{S^\Delta_v}/Y_{\Delta_v}\right)=V^*Y_v\times\prod_{\substack{u\in\bar S^\Delta_v \\ u\neq v,\Delta_v}} Y_u
		\end{equation*}
Giving an element of the momentum bundle consists on giving an infinitesimal variation $\delta y_v\in V^*_{y_v} Y_v$ of some configuration $y_v\in Y_v$ at some vertex $v\in V_0$, together with other configurations $y_u$, for each vertex $u\neq \Delta_v$ sharing an $n$-cell with $v,\Delta_v$. Elements of the momentum bundle will be denoted by $\delta_v y^*_{S^\Delta}$, representing a determination of almost all components of $y_{S^\Delta_v}$, except for the $\Delta_v$-component, together with an infinitesimal variation $\delta y$ at the single point $v$.
\end{define}
Observe that the momentum bundle can be seen as a modification of the bundle $Y_{S^\Delta_v}$, where the component $Y_v\times Y_{\Delta_v}$ is substituted with $V^*Y_v$. Moreover, both terms apearing in (\ref{otroEL}) can be given as $Y^*_{S^\Delta}$-valued fields:
\begin{define}
Given any first order incremental flow $\Delta$, we call momentum mapping at $v\in V_0$ associated to the discrete Lagrangian density $(L_\beta)_{\beta\in V_n}$ the following:
	\begin{equation}\label{defmomentum}
	\begin{array}{rcl}
	\mu_v\colon Y_{S_v}/Y_{\Delta_v} & \rightarrow &Y^*_{S^\Delta_v}=V^*Y_v\times_{Y_v} \left( Y_{S^\Delta_v}/Y_{\Delta_v}\right) \\
	\left[y_{S_v}\right]& \mapsto & \left(- \sum\limits_{\beta\in S_v\setminus S^\Delta_v} (\dd_{y_\beta}L_\beta)\circ i_{y_v}^{y_\beta}\,,\,
	\left[y_{S^\Delta_v}\right]\right)
 	\end{array}
	\end{equation}
where $y_{S^\Delta_v}, y_\beta, y_v$ represent the corresponding images of $y_{S_v}$ through the natural projections $Y_{S_v}\rightarrow Y_{S^\Delta_v}$ and $Y_{S_v}\rightarrow Y_\beta\rightarrow Y_v$, if $\beta\in S_v$ (Observe that the first projection always factors through $Y_{S_v}/Y_{\Delta_v}\rightarrow Y_{S^\Delta_v}/Y_{\Delta_v}$ and the second one through $Y_{S_v}/Y_{\Delta_v}\rightarrow Y_\beta\rightarrow Y_v$ if $\beta\in S_v\setminus S^\Delta_v$).

We call Legendre transformation at $v$ associated to the same Lagrangian density the mapping:
		\begin{equation}\label{defineLeg}
		\begin{array}{rcl}
		\mathrm{Leg}_v\colon Y_{S^\Delta_v}& \rightarrow & Y^*_{S^\Delta_v}=V^*Y_v\times_{Y_v} 
		\left( Y_{S^\Delta_v}/Y_{\Delta_v}\right) \\
		y_{S^\Delta_v} & \mapsto & \left(\sum\limits_{\beta\in S^\Delta_v} (\dd_{y_\beta}L_\beta)\circ i_{y_v}^{y_\beta},
		\left[y_{S^\Delta_v}\right]\right)
		\end{array} 
		\end{equation}
where again $y_\beta\in Y_\beta$ ,$y_v\in Y_v$ and $[y_{S^\Delta_v}]\in Y_{S^\Delta_v}/Y_{\Delta_v}$ are determined from $y_{S^\Delta_v}$ using the natural projections, when $\beta\in S^\Delta_v$.
\end{define}
Expression (\ref{otroEL}) allows to write the Euler-Lagrange tensor in terms of both bundle morphisms $\mu\colon Y_{S}/Y_{\Delta}\rightarrow Y^*_{S^\Delta}$ and $\mathrm{Leg}\colon Y_{S^\Delta}\rightarrow Y^*_{S^\Delta}$:
	\begin{equation}\label{ELdiferenca}
	EL(y_{S_v})=\mathrm{Leg}_v(y_{S^\Delta_v})-\mu_{v}([y_{S_v}])
	\end{equation}
where $[y_{S_v}]\in Y_{S_v}/Y_{\Delta_v}$ is the quotient class defined by $y_{S_v}$, and $y_{S^\Delta_v}\in Y_{S^\Delta_v}$ is defined from $y_{S_v}$ by the natural projection $Y_S\rightarrow Y_{S^\Delta}$

\begin{remark}\label{remark46}
In the 2-dimensional case, if we take as discrete space the cubic 2D abstract cellular complex $\overline V$, or the 2D CFK simplicial complex $V$, and considering the first order incremental flow $\Delta$ that takes any vertex $v=(i,j)\in \bar{V}_0=V_0$ into $\Delta_v=v+(1,1)$, the set $\bar S^\Delta_v\subset V_0$ has vertices $v_0=v$, $v_1^+=v+(1,0)$, $v_1^-=v+(0,1)$, $v_2=\Delta_v=v+(1,1)$. We have, explicitly:
		\begin{equation*}
		\begin{aligned}
		&Y_{S^\Delta_v}=Y_{v_0}\times Y_{v_2}\times Y_{v_1^+}\times Y_{v_1^-}\\
		&Y^*_{S^\Delta_v}=V^*Y_{v_0}\times Y_{v_1^+}\times Y_{v_1^-}
		\end{aligned}
		\end{equation*}
In the cubic cellular complex, there exists a single squared cell $\bar\beta\in S^\Delta_v\subset \bar V_2$. This cell has $v_0,v_1^+,v_1^-,v_2$ as adherent vertices. For the discrete Lagrangian density $\bar L_{\bar\beta}\colon Y_{v_0}\times Y_{v_2}\times Y_{v_1^+}\times Y_{v_1^-}\rightarrow\RR$ the associated Legendre mapping (\ref{defineLeg}) is:
	\begin{equation*}
	\mathrm{Leg}_{v}(y_0,y_2,y_1^+,y_1^-)=\left( \frac{\partial \bar L_{\bar\beta}(y_0,y_2,y_1^+,y_1^-)}{\partial y_0}\dd y_0 , y_1^+,y_1^-\right)\in V^*_{y_0}Y_{v_0}\times Y_{v_1^+}\times Y_{v_1^-}
	\end{equation*}

In the CFK simplicial cellular complex, there exists two simplicial  cells $\beta^+=\{v_0,v_1^+,v_2\}$, $\beta^-=\{v_0,v_1^-,v_2\}$ in $S^\Delta_v$. For the discrete Lagrangian density $L_{\beta^+}\colon Y_{v_0}\times Y_{v_2}\times Y_{v_1^+}\rightarrow\RR$, $L_{\beta^-}\colon Y_{v_0}\times Y_{v_2}\times Y_{v_1^-}\rightarrow\RR$ the associated Legendre mapping (\ref{defineLeg}) is:
	\begin{equation*}
	\mathrm{Leg}_{v}(y_0,y_2,y_1^+,y_1^-)=\left( \left(\frac{\partial L_{\beta^+}(y_0,y_2,y_1^+)}{\partial y_0}+\frac{\partial L_{\beta^-}(y_0,y_2,y_1^-)}{\partial y_0}\right)\dd y_0 , y_1^+,y_1^-\right)\in V^*_{y_0}Y_{v_0}\times Y_{v_1^+}\times Y_{v_1^-}
	\end{equation*}
We may observe that when $\bar{L}$ is a Lagrangian density on the cubic cellular complex derived from a Lagrangian density on the CFK simplicial complex following (\ref{simplexcubic}), there holds $\bar{L}_{\bar\beta}(y_0,y_2,y_1^+,y_1^-)=L_{\beta^+}(y_0,y_2,y_1^+)+L_{\beta^-}(y_0,y_2,y_1^-)$ and the corresponding Legendre transformations coincide.
\end{remark}

Observe that if $\dim Y_v=\dim Y_{\Delta_v}$, both manifolds $Y_{S^\Delta_v}$ and $Y^*_{S^\Delta_v}$ have the same dimension.
It makes sense to consider then the following notion of regularity:
\begin{define}
The discrete Lagrangian density $(L_\beta)_{\beta\in V_n}$ is regular in the direction determined by the first order incremental flux $\Delta$ if there exists a smooth mapping

		\begin{equation*}
		\Phi^\Delta\colon Y^*_{S^\Delta}\rightarrow Y_{S^\Delta}
		\end{equation*}
such that $\mathrm{Leg}\circ\Phi^\Delta=\Id$ on $Y^*_{S^\Delta}$.
\end{define}
A necessary condition for regularity is that $\mathrm{Leg}$ is a local diffeomorphism at each point. In the case that $\mathrm{Leg}$ is a local diffeomorphism at some point $y_{S^\Delta_v}\in Y_{S^\Delta_v}$, a mapping $\Phi^\Delta_v$ may be constructed satisfying $\mathrm{Leg}_v\circ\Phi^\Delta_{v}=\Id$ in some neighborhood of $\mathrm{Leg}_{v}(y_{S^\Delta_v})\in Y^*_{S^\Delta_v}$. If  $\mathrm{Leg}$ is an injective local diffeomorphism, it will be a diffeomorphism onto an open subbundle of $Y^*_{S^\Delta}$, with a unique inverse mapping $\Phi^\Delta_{v}$ defined on this open subbundle. 

Observe that $\mathrm{Leg}_{v}$ projects as $Id\colon Y_u\rightarrow Y_u$, for any vertex $u\in \bar S^\Delta_{v}$ with $u\neq \Delta_v$. Therefore any right inverse will be totally determined by some bundle morphism $\phi^\Delta\colon Y^*_{S^\Delta}\rightarrow Y_\Delta$.
\begin{define}
For any mapping $\phi^\Delta\colon Y^*_{S^\Delta}\rightarrow Y_\Delta$ consider the associated mapping $\Phi^\Delta\colon Y^*_{S^\Delta}\rightarrow Y_{S^\Delta}$ given by
	\begin{equation*}
	\begin{aligned}
	\Phi^\Delta_{v}(\delta_v y^*_{S^\Delta})&=\left(\phi^\Delta_{v}(\delta_v y^*_{S^\Delta}),y^*_{S^\Delta_v}\right)\in Y_{\Delta_v}\times \left(Y_{S^\Delta_v}/Y_{\Delta_v}\right)=Y_{S^\Delta_v},\quad \forall\, \delta_v y^*_{S^\Delta}\in Y^*_{S^\Delta}
	\end{aligned}
	\end{equation*}
(where $y^*_{S^\Delta_v}\in Y_{S^\Delta_v}/Y_{\Delta_v}$ represents the natural projection of $\delta_v y^*_{S^\Delta}\in Y^*_{S^\Delta_v}$ to $Y_{S^\Delta_v}/Y_{\Delta_v}$)

Given the discrete Lagrangian density $L$, we call integrator in the direction determined by the first order incremental flux $\Delta$ any (locally defined) mapping $\phi^\Delta\colon Y^*_{S^\Delta}\rightarrow Y_\Delta$ such that its associated mapping $\Phi^\Delta$ is a right-inverse of the mapping $\mathrm{Leg}$:
	\begin{equation*}
	\mathrm{Leg}_v(\phi^\Delta_v(\delta_vy^*_{S^\Delta}),y^*_{S^\Delta_v})=\delta_v y^*_{S^\Delta}\qquad \forall \delta_v y^*_{S^\Delta}\in Y^*_{S^\Delta_v}
	\end{equation*}
where $y^*_{S^\Delta_v}$ represents the projection on $Y^*_{S^\Delta_v}/Y_{\Delta_v}$ of $\delta_v y^*_{S^\Delta}\in Y^*_{S^\Delta_v}$.
\end{define}

Next result indicates how to explicitly recover the unknown component $y_{\Delta_v} \in Y_{\Delta_v}$ using the integrator, if we have the remaining components $[y_{S_v}]\in Y_{S_v}/Y_{\Delta_v}$, and the set of Euler-Lagrange equations $EL_v(y_{S_v})=0$, seen as implicit equations on this unknown.
\begin{theorem}\label{integradorteorema}
Let $\phi^\Delta$ be an integrator for the discrete Lagrangian density $L$ in the direction of the first order incremental flow $\Delta$.

Each class $[y_{S_v}]\in  Y_{S}/Y_{\Delta}$ contains a solution $y_{S_v}\in Y_S$ of Euler-Lagrange equations $EL_v(y_{S_v})=0$, explicitly given by $y_{S_v}=(y_{\Delta_v},[y_{S_v}])\in Y_S=Y_\Delta\times (Y_S/Y_\Delta)$, where $y_{\Delta_v}=\phi^\Delta_v\circ\mu_{v}([y_{S_v}])\in Y_{\Delta_v}$.

Moreover, in the case that $\mathrm{Leg}_{v}$ is injective, this point is the only element on $Y_S$ on the given class  $[y_{S_v}]$ that solves Euler-Lagrange equations $EL_v(y_{S_v})=0$.
\end{theorem}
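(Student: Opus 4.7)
The plan is to reduce the existence statement to a direct application of the integrator property via the decomposition of the Euler--Lagrange tensor given in equation (\ref{ELdiferenca}). Starting from $EL(y_{S_v})=\mathrm{Leg}_v(y_{S^\Delta_v})-\mu_v([y_{S_v}])$, I would first observe that the right-hand side has a very asymmetric dependence on the data: the $\mu_v$ summand depends only on the quotient class $[y_{S_v}]\in Y_{S_v}/Y_{\Delta_v}$ (by construction of $\mu_v$ in (\ref{defmomentum}), since the terms it collects come from cells $\beta\in S_v\setminus S^\Delta_v$ that do not contain $\Delta_v$), while $\mathrm{Leg}_v$ depends on the full $y_{S^\Delta_v}$, including its $Y_{\Delta_v}$-component. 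Thus, once $[y_{S_v}]$ is fixed, the EL equation becomes the single equation $\mathrm{Leg}_v(y_{S^\Delta_v})=\mu_v([y_{S_v}])$ for the unknown $y_{\Delta_v}\in Y_{\Delta_v}$, with the remaining entries of $y_{S^\Delta_v}$ already prescribed as the image of $[y_{S_v}]$ under the natural projection $Y_{S_v}/Y_{\Delta_v}\to Y_{S^\Delta_v}/Y_{\Delta_v}$.

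Next I would verify that the candidate $y_{\Delta_v}=\phi^\Delta_v\circ\mu_v([y_{S_v}])$ does the job. The key computation is to check that $\Phi^\Delta_v(\mu_v([y_{S_v}]))$ equals the element $y_{S^\Delta_v}$ built from $(\phi^\Delta_v\circ\mu_v([y_{S_v}]),[y_{S_v}])$. This reduces to matching the $Y_{S^\Delta_v}/Y_{\Delta_v}$-components on both sides: by definition of $\mu_v$ the second entry of $\mu_v([y_{S_v}])\in Y^*_{S^\Delta_v}$ is exactly the image $[y_{S^\Delta_v}]$, and $\Phi^\Delta_v$ preserves this component by definition. Applying the integrator identity $\mathrm{Leg}_v\circ\Phi^\Delta_v=\Id$ then gives $\mathrm{Leg}_v(y_{S^\Delta_v})=\mu_v([y_{S_v}])$, so by (\ref{ELdiferenca}) the corresponding $y_{S_v}=(y_{\Delta_v},[y_{S_v}])$ solves $EL_v(y_{S_v})=0$.

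For uniqueness under the additional assumption that $\mathrm{Leg}_v$ is injective, I would take two solutions $y_{S_v}$ and $y'_{S_v}$ sitting in the same class $[y_{S_v}]$, and note that the EL equation forces $\mathrm{Leg}_v(y_{S^\Delta_v})=\mu_v([y_{S_v}])=\mathrm{Leg}_v(y'_{S^\Delta_v})$. Since both $y_{S^\Delta_v}$ and $y'_{S^\Delta_v}$ project to the same element of $Y_{S^\Delta_v}/Y_{\Delta_v}$ (they share the same prescribed part), injectivity of $\mathrm{Leg}_v$ yields $y_{S^\Delta_v}=y'_{S^\Delta_v}$, hence $y_{\Delta_v}=y'_{\Delta_v}$ and therefore $y_{S_v}=y'_{S_v}$.

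I do not expect a serious obstacle here; the entire content is bookkeeping about which factors $\mu_v$ and $\mathrm{Leg}_v$ depend on, so the only mild care required is to keep the quotient projections $Y_{S_v}\to Y_{S_v}/Y_{\Delta_v}\to Y_{S^\Delta_v}/Y_{\Delta_v}$ straight and to check that they are consistent with the way $\Phi^\Delta_v$ was constructed from $\phi^\Delta_v$. Once that bookkeeping is laid out cleanly, both the existence and uniqueness parts follow in one line each from the identity (\ref{ELdiferenca}) and the defining property of an integrator.
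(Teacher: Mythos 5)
Your proposal is correct and follows essentially the same route as the paper's own proof: both rest on the decomposition (\ref{ELdiferenca}), the observation that $\mu_v$ depends only on the class $[y_{S_v}]$ while $\mathrm{Leg}_v$ carries the $Y_{\Delta_v}$-dependence, the substitution $y_{\Delta_v}=\phi^\Delta_v\circ\mu_v([y_{S_v}])$ combined with $\mathrm{Leg}_v\circ\Phi^\Delta_v=\Id$, and injectivity of $\mathrm{Leg}_v$ for uniqueness. The only difference is presentational: you spell out the matching of the $Y_{S^\Delta_v}/Y_{\Delta_v}$-components slightly more explicitly than the paper does.
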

\begin{proof}
Any element $y_{S_v}\in Y_S$ is determined by two components $y_{\Delta_v}\in Y_{\Delta_v}$ and $[y_{S_v}]\in Y_{S_v}/Y_{\Delta_v}$. Likewise, any element on $Y_{S^\Delta}$ is determined by a component on $Y_\Delta$ and another component on $Y_{S^\Delta}/Y_\Delta$. Consider the natural projection $\pi\colon Y_S/Y_\Delta \rightarrow Y_{S^\Delta}/Y_\Delta$. Following (\ref{ELdiferenca}), the associated Euler-Lagrange tensor  at $v$ may be written as:
	\begin{equation*}
	EL_v(y_{\Delta_v},[y_{S_v}])=\mathrm{Leg}_{v}\left(y_{\Delta_v},\pi([y_{S_v}])\right)-\mu_{v}([y_{S_v}])
	\end{equation*}
Using $y_{\Delta_v}=\phi^\Delta_v\circ\mu_{v}([y_{S_v}])\in Y_{\Delta_v}$ for a given element $[y_{S_v}]\in Y_S/Y_\Delta$, we get:
\begin{equation*}
\mathrm{Leg}_{v}(y_{\Delta_v},\pi([y_{S_v}]))=\mathrm{Leg}_{v}\left(\phi^\Delta_v\circ\mu_{v}([y_{S_v}]),\pi([y_{S_v}])\right)= \mu_{v}([y_{S_v}])\in Y^*_{S_{v,w}}
\end{equation*}
last equality is a consequence of being $\phi^\Delta$ an integrator, because $\mu_v([y_{S_v}])$ projects into $Y_{S^\Delta}/Y_\Delta$ as precisely $\pi([y_{S_v}])\in Y_{S^\Delta_v}/Y_{\Delta_v}$.

We conclude that $EL(y_{\Delta_v},[y_{S_v}])=0$ for the particular choice $y_{\Delta_v}\in Y_{\Delta_v}$ in the statement. In the case that $\mathrm{Leg}_{v}$ is injective, there cannot be two different solutions to the system $\mathrm{Leg}_{v}(y_{\Delta_v},\pi([y_{S_v}]))=\mu_{v,w}([y_{S_v}])$, thus concluding our proof. 
\qed
\end{proof}
\begin{remark}\label{remark4fim}
This result allows the introduction of discrete integration schemes from an initial condition determined on a saw-shaped band, leading to a section satisfying Euler-Lagrange equations, as described in \cite{CasiRodr12b,FadVol94}. This same formulation of Cauchy initial condition band was called ``the Cauchy problem on a zigzag'' in \cite{AdlBobSur04}, and used to derive a symplectic structure for a 2D discrete field theory in \cite{AdlBobSur04,FadVol94}. To be more specific, for the cubic or the CFK simplicial cellular complexes, we may choose the first order incremental flow $\Delta\colon v\mapsto \Delta_v=v+(1,\ldots,1)$ for $v\in\mathbb{Z}^n$, and if the Lagrangian density has a globally defined integrator $\phi^\Delta$ associated to this flow, whenever the configuration of a field is known on vertices determined by $k-n \leq x_1+x_2+\ldots+x_n\leq k+n-1$, the application of the integrator determines an extensions to a larger domain $k-n\leq x_1+\ldots+x_n\leq k+n$, so that discrete Euler-Lagrange equations are  satisfied at all vertices on $x_1+\ldots+x_n=k$. This mechanism can be iterated to extend the given initial conditions into a critical discrete configuration for all vertices on the semispace $k-n\leq x_1+\ldots+x_n$.
\end{remark}

\section{Covariant discretization of smooth variational problems}\label{section5}
All previous results represent a discrete counterpart of the classical calculus of variations of fields on smooth bundles. In order to relate the smooth and the discrete theory, we need to establish some correspondence between objects introduced in discrete variational problems and in continuous (smooth) ones, a correspondence that is classically developed in local coordinates, or for affine trivial bundles, but which is also possible in several other situations  (see, for example \cite{LeokShin12} for the formulation when the fiber is a Lie group and the base manifold is a discrete line, or \cite{Pennec13} for the computation of a weighed mean in Lie groups). 

Consider a smooth bundle $Y\rightarrow X$ and a smooth lagrangian density $\La\vol^X$, where $\La$ is a function on $J^1Y$ and $\vol^X$ a volume element on $X$. A standard method to discretize the Lagrangian density is to decompose the manifold $X$ into compact domains $K_\beta\subset X$ ($\beta$ is used as parameter to index all these domains). It is customary to take $X$ with affine structure and $K_\beta$ some convex hull of a finite family of nodes $x(v)$ (here $v$ is used as parameter to index all vertices on $K_\beta$, we may consider that $v$ is adherent to $\beta$ and denote $v\prec\beta$). The exact discrete Lagrangian associated to $\La\vol^X$ on $\beta$ is defined as a function:
    \begin{equation*}
    \begin{array}{rcl}
	L_\beta\colon \prod\limits_{v\prec\beta} Y_{x(v)} &\rightarrow &\mathbb{R}\\
	y^x_\beta=(y_v)_{v\prec\beta} & \mapsto & \mathrm{inf} \left\{ \mathbb{L}_{K_\beta}(y)\right\}_{y\in\Gamma_{y^x_\beta}(X,Y)}
	\end{array}
	\end{equation*}
where $\Gamma_{y^x_\beta}(X,Y)$ denotes the set of sections $y\colon X\rightarrow Y$ defined on $K_\beta\subset X$ such that $y(x(v))=y_v$ $\forall v\prec\beta$, and where $\mathbb{L}_{K_\beta}(y)=\int_{K_\beta} (j^1y)^*\La\cdot \vol^X$. Thus in the case that there exists a minimizing section $y\colon K_\beta\rightarrow X$ for the action $\mathbb{L}_{K_\beta}$, among those sections satisfying the boundary condition $y(x(v))=y_v$ $\forall v\prec\beta$, the exact discrete lagrangian $L_\beta$ on $(y_v)_{v\prec\beta}$ takes the same value as the smooth action $\mathbb{L}_{K_\beta}$ on the smooth section $y$ (which represents a common discretization procedure for smooth Lagrangians, for example in \cite{Demoures15,LeokShin12,LewMarsOrtiWest03}).

The theoretical advantage of this exact discrete lagrangian is that any section $y$ determines a point $y^x_\beta=\{y(x(v))\}_{v\prec\beta}\in \prod\limits_{v\prec\beta} Y_{x(v)}$ such that $L_\beta(y^x_\beta)\leq \mathbb{L}_{K_\beta}(y)$, and in the case that $y$ is a minimum for the action $\LL_{K_\beta}$, with respect to infinitesimal variations that vanish at $y(x(v))$, the resulting point $y^x_\beta$ is a minimum for $L_\beta$ on the finite-dimensional manifold $\prod\limits_{v\prec\beta} Y_{x(v)}$, for which $L_\beta(y^x_\beta)=\mathbb{L}_{K_\beta}(y)$ holds.

Certain concerns arise when dealing with exact discrete Lagrangians. Firstly, its construction demands a decomposition of the manifold $X$ into compact domains $K_\beta$. Secondly, the infima above (or the minimizing sections $y$ for each given boundary condition $(y_v)_{v\prec\beta}$) may be difficult to compute, and it is nontrivial to determine if the resulting expression $L_\beta$ is a differentiable function on the manifold $\prod\limits_{v\prec\beta} Y_{x(v)}$. A way to solve this situation is to work with some approximate discrete Lagrangian density, a family of functions $L_\beta\colon \prod_{v\prec\beta} Y_{x(v)} \rightarrow \mathbb{R}$ that are seen as approximations to the exact discrete Lagrangian. With these functions a first question is to determine the error associated to this approximation. A second question is how to choose some discrete Lagrangian that inherits all the symmetries of the original smooth Lagrangian.

We aim to obtain a discretization technique that preserves symmetries of the smooth lagrangian density. That is, given a fibered manifold and a smooth Lagrangian density, we aim to functorially derive a discrete bundle on some simplicial complex, together with a discrete Lagrangian density.

Our derivation of some (approximate) discrete Lagrangian density will be based on the choice of a quadrature rule on some simplicial domain $\Delta_\beta$.
\begin{define}\label{deltabeta}
Let $\beta=\{v_0,v_1,\ldots,v_n\}\in V_n$ be any n-dimensional abstract simplex. We call simplex $\Delta_\beta$ of barycentric coordinates associated to $\beta\in V_n$ the following:
	\begin{equation*}
	\Delta_\beta=\left\{(\lambda_v)_{v\prec\beta}\,\colon \lambda_v\geq 0,\, \sum_{v\prec\beta} \lambda_v=1\right\}\subset \prod_{v\prec\beta}\RR\simeq \RR^{n+1}
	\end{equation*}
We also call interior of the simplex $\Int(\Delta_\beta)$ the subset given by points $(\lambda_v)_{v\prec\beta}$ with $\lambda_v\neq 0$, for each $v\prec\beta$.
\end{define}
Each vertex $u\prec\beta$ can be identified with a point on $\Delta_\beta$, defined by $\lambda_v=0$ $\forall v\neq u$ (hence $\lambda_u=1$). Edges $\alpha=\{u,v\}\in X_1$ can be identified with a segment $\overline{uv}\subseteq \Delta_\beta$, namely the set of points with $\lambda_w=0$, $\forall w\neq u,v$ (equivalently, with $\lambda_u+\lambda_v=1$). Similar identifications are possible for any $k$-simplex adherent to $\beta$.

\begin{define}\label{dl}
Call $\dd\lambda$ one of the two affine volume elements on $\Delta_\beta$ such that $\int_{\Delta_\beta}\dd\lambda=1/n!$ (where the integral is taken with respect to the orientation defined by $\dd\lambda$ itself).
\end{define}

\begin{define}
We call quadrature rule $Q$ on $\Delta_\beta$ any linear functional:
	\begin{equation}\label{quadrule}
	h\in\mathrm{Map}(\Delta_\beta,\mathbb{R})\mapsto Q(h)=\frac{1}{n!}\sum_{i=1}^k c_i\cdot h(u_i)
	\end{equation}
determined by some choice of $k$ nodes $u_1,\ldots, u_k\in \Delta_\beta$ and weights $c_1,\ldots, c_k\in\mathbb{R}$

For any integrable function $h$, we call error of the quadrature rule on $h$ the expression:
	\begin{equation*}
	\left|Q(h)-\int_{\Delta_\beta}h\dd\lambda\right|
	\end{equation*}
where integration on $\Delta_\beta$ is done with the orientation induced by $\dd\lambda$.
\end{define}
Simple quadrature rules are, for example, those determined by a choice of a single node with weight 1, at some vertex $v\prec\beta$. These quadrature rules $Q_v$ have vanishing error when applied to constant functions $h$. A quadrature rule that has no error when applied to affine functions $h$ would be the quadrature rule $Q_{sym}$ where all vertices of $\Delta_\beta$ represent nodes with the same weight $1/(n+1)$. Another choice with the mentioned property is the midpoint quadrature rule $Q_{mid}$ associated to the barycenter of $\Delta_\beta$, with weight $1$.
	\begin{equation}\label{quadratures}
	Q_v(h)=\frac{1}{n!}\cdot 1\cdot h(v), \qquad Q_{sym}(h)=\frac{1}{n!}\cdot \sum_{v\prec\beta} \frac{1}{n+1}h(v),\qquad Q_{mid}(h)=\frac{1}{n!}\cdot 1\cdot h(\frac{1}{n+1},\ldots,\frac{1}{n+1})
	\end{equation}
The error associated to any of these quadrature rules can be bounded by different expressions, in terms of the range of $h$ and its directional derivatives.

In order to use these quadrature rules to approximate the exact discrete Lagrangian, the action functional $\mathbb{L}_{K_\beta}$ should be expressed as some integral on the simplex $\Delta_\beta$. It is on this integral where the quadrature rule can be introduced, leading to some numerical value that shall be the approximate value of the action functional.

To obtain an approximate expression of the integrand in the exact Lagrangian, in terms of the values $\{y(x(v))\}_{v\prec\beta}$ demands some interpolation procedure that recovers a smooth section from this discrete data.

Consider a fixed abstract simplicial complex $V$ and a fixed injective mapping $x\colon V_0\rightarrow X$. In this case, vertices in $V_0$ can be seen as {\em nodes} $x(v)\in X$. Moreover, the restriction of $Y$ to these nodes gives
	\begin{equation*}
	Y^x=x^*Y=\{(v,y)\in V_0\times Y\,\colon\, \pi_X(y)=x(v)\},\qquad \pi\colon (v,y)\in Y^x\mapsto v\in V_0
	\end{equation*}
a discrete bundle $\pi\colon Y^x\rightarrow V_0$ defined on the discrete space $V$  (see remark \ref{remark31a}). Any smooth section $y\colon X\rightarrow Y$ naturally induces, by restriction to the nodes, a discrete field $y^x\colon V_0\rightarrow Y^x$, defined as $y^x_v=y(x(v))$ for each $v\in V_0$. Thus $y^x$ can be seen as the node-evaluation of the section $y$ at the nodes $x(v)\in X$. 	Smooth fields $y$ on $Y$ determine discrete fields $y^x$ on $Y^x$. 
	\begin{equation*}
		\begin{matrix} \Gamma(X,Y) &\rightarrow & \Gamma(V_0,Y^x) \\
    y & \mapsto & y^x=y\circ x\end{matrix}
	\end{equation*}
We would also like to recover some sort of interpolation, reconstructing a smooth section from its values at the nodes of some simplex. This can be done in the general framework of Riemannian manifolds and symmetry groups of Riemannian isometries. Consider that $Y$ is endowed with some Riemannian metric.
	\begin{define}[adopted and adapted from \cite{Sander11}]\label{defineinterpolator}
	In the situation described so far, consider a smooth Riemannian structure on the manifold $Y$ and $\dist(\cdot, \cdot)\colon Y\times Y\rightarrow \mathbb{R}$ the induced distance metric. We call simplicial geodesic interpolator associated to the configuration $y^x_\beta\in Y^x_n$ (where $\beta\in V_n$) the mapping:
    \begin{equation*}
    \begin{array}{rrcl}
		\Upsilon_{y^x_\beta} \colon & \Delta_\beta & \rightarrow & Y \\
		 & (\lambda_v)_{v\prec\beta} & \mapsto & \mathop{\arg}\limits_{y\in Y}\min \sum\limits_{v\prec\beta}\lambda_v\cdot \left(\dist(y^x_v,y)\right)^2
		\end{array}
		\end{equation*}
with $\Delta_\beta$ as given in Definition \ref{deltabeta}.
	\end{define}
This interpolator, also known as weighed geometric mean or Riemannian mean, was studied in detail by K\"archer in \cite{Karcher} and subsequent works and deserved attention in more recent papers \cite{Deylen15,GroHardSan13,Moakher02,Sander11}. Its explicit expression for $Y=SO(3)$ and many other manifolds is known. We may remark that in the case that $Y$ is an Euclidean space this interpolator is the parameterization of a convex affine simplex by baricentric coordinates with respect to its vertices. 

Some lemmas follow now to enlighten the behavior of the interpolator
\begin{lemma}\label{sectionalcurvatures}
Assuming that the manifold $Y$ is complete, for any geodesic ball contained in $Y$, with radius $\rho$ and whose points have sectional curvatures smaller than $\left(\frac{\pi}{4\rho}\right)^2$ in any planar directions, and given any configuration $y^x_\beta\in Y^x_n$ whose components $y^x_v$ lie on this geodesic ball, the mapping $\Upsilon_{y^x_\beta}$ is well-defined (K\"archer, cited by Theorem 2.1 in \cite{Sander11}). Assuming further that the injectivity radius at all points $y^x_v$ is greater that $2\rho$, the mapping $\Upsilon_{y^x_\beta}$ is infinitely differentiable (see \cite{Sander11}).
\end{lemma}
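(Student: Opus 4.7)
The plan is to recognize that both claims are essentially applications of well-developed results in the theory of Riemannian centers of mass (Kärcher means) and then to outline the correct hypotheses carefully. I would introduce the energy functional
\[
F_{y^x_\beta,\lambda}(y) = \sum_{v\prec\beta}\lambda_v\cdot\bigl(\dist(y^x_v,y)\bigr)^2,\qquad (\lambda_v)_{v\prec\beta}\in\Delta_\beta,
\]
and recast the interpolator as the minimizer of $F_{y^x_\beta,\lambda}$ over the closed geodesic ball $\overline{B}_\rho$ containing all configurations $y^x_v$.

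The first step, for well-definedness, is to establish strict convexity of $F_{y^x_\beta,\lambda}$ on $\overline{B}_\rho$. This rests on a Hessian comparison estimate: at any point $y\in\overline{B}_\rho$ and tangent vector $W\in T_yY$, one has
\[
\mathop{\mathrm{Hess}}_y\bigl(\tfrac12\dist(y^x_v,\cdot)^2\bigr)(W,W)\;\geq\; \dist(y^x_v,y)\cdot\sqrt{K_{\max}}\cdot\cot\!\bigl(\dist(y^x_v,y)\sqrt{K_{\max}}\bigr)\,\|W\|^2,
\]
(with the usual extension when $K_{\max}\leq 0$), coming from Rauch comparison. Under the hypothesis $K_{\max}<(\pi/(4\rho))^2$ and $\dist(y^x_v,y)\leq 2\rho$, the argument of the cotangent is strictly below $\pi/2$, so the right-hand side is positive and bounded below uniformly. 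Taking the convex combination weighted by $(\lambda_v)$ preserves this lower bound, yielding strict convexity. A classical minimization argument on the compact ball then produces a unique minimum, which furthermore lies in the interior of $\overline{B}_\rho$ (because the gradient of $F_{y^x_\beta,\lambda}$ points inward at the boundary, the barycenter already lying in $B_\rho$). This is exactly the statement quoted from Kärcher and Sander, and I would invoke it at this step.

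For smoothness, I would pass to the first-order condition. Writing $\log_y\colon B_{\mathrm{inj}(y)}(y)\subseteq Y\to T_yY$ for the Riemannian logarithm, the minimizer satisfies
\[
G(y,\lambda,y^x_\beta) := \sum_{v\prec\beta}\lambda_v\cdot \log_y(y^x_v)=0\in T_yY.
\]
Under the hypothesis that the injectivity radius at each $y^x_v$ exceeds $2\rho$, each $\log_y(y^x_v)$ is a smooth function of $y$ throughout $\overline{B}_\rho$, hence $G$ is smooth in all its arguments. The derivative of $G$ in the $y$-direction is (up to identification) the Hessian of $F_{y^x_\beta,\lambda}$ at the minimizer, which the previous step showed to be positive definite, hence invertible. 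The implicit function theorem then yields that the minimizer depends smoothly on $(\lambda,y^x_\beta)$, proving the second assertion.

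The hard point to get right, in a fully worked version, will be tracking exactly how the bound $(\pi/(4\rho))^2$ arises from the Rauch comparison applied not to a single distance-squared function but to a convex combination, and checking uniformity as $(\lambda_v)$ ranges over the closed simplex $\Delta_\beta$ (including its boundary faces, where some weights vanish). Since both assertions are explicitly attributed to \cite{Sander11} in the statement, I would defer the quantitative details to those references and present the proof as a direct application of their results to the configurations $y^x_\beta$ considered here.
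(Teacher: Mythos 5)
Your proposal is correct, but note that the paper's own ``proof'' of this lemma is literally the single line ``See \cite{Sander11} and \cite{Karcher}'' --- it supplies no argument at all. What you have written is a faithful reconstruction of the standard K\"archer--Sander argument that sits behind those citations: the constant $\left(\frac{\pi}{4\rho}\right)^2$ is traced correctly (two points of the ball are at distance at most $2\rho$, and the Rauch/Hessian comparison for $\tfrac12\dist(p,\cdot)^2$ stays positive exactly while $\dist(p,y)\sqrt{K_{\max}}<\pi/2$, i.e.\ $K_{\max}<(\pi/(4\rho))^2$); the convex combination over $(\lambda_v)\in\Delta_\beta$ preserves the uniform lower Hessian bound even on boundary faces where weights vanish; and the smoothness claim via the first-order condition $\sum_v\lambda_v\log_y(y^x_v)=0$ plus the implicit function theorem, with the injectivity-radius hypothesis guaranteeing smoothness of $\log_y(y^x_v)$ by symmetry of the cut-locus relation, is precisely Sander's proof. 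Two minor points: your phrase ``the barycenter already lying in $B_\rho$'' is circular as written (the inward-pointing gradient at $\partial\overline{B}_\rho$ is the argument that \emph{establishes} the minimizer is interior, so drop the parenthetical justification); and uniqueness from strict convexity tacitly uses geodesic convexity of $\overline{B}_\rho$, which under these curvature and injectivity hypotheses is part of what K\"archer verifies. Since you, like the authors, ultimately defer the quantitative details to \cite{Karcher} and \cite{Sander11}, your version is strictly more informative than the paper's and contains no gap.
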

\begin{proof}
See \cite{Sander11} and \cite{Karcher}.  
\end{proof}
\begin{lemma}\label{splits}
Simplicial geodesic interpolation on the product of Riemannian manifolds $Y=X\times Q$ splits as simplicial geodesic interpolation on the first component, and simplicial geodesic interpolation on the second one.
\end{lemma}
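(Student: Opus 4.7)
The plan is to exploit the fact that the squared geodesic distance on a product Riemannian manifold decomposes additively. First I would recall that if $Y=X\times Q$ is endowed with the product Riemannian metric, then a curve $(x(t),q(t))$ is a geodesic if and only if each component is a geodesic in its respective factor, and consequently the induced distance satisfies the Pythagorean identity
\begin{equation*}
\dist_Y((x_1,q_1),(x_2,q_2))^2 = \dist_X(x_1,x_2)^2 + \dist_Q(q_1,q_2)^2.
\end{equation*}

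Given any configuration $y^x_\beta=(y^x_v)_{v\prec\beta}$ whose components split as $y^x_v=(x^x_v,q^x_v)\in X\times Q$, and any candidate point $y=(x,q)\in Y$, this identity lets me rewrite the objective in Definition \ref{defineinterpolator} as
\begin{equation*}
\sum_{v\prec\beta}\lambda_v\,\dist_Y(y^x_v,y)^2 = \sum_{v\prec\beta}\lambda_v\,\dist_X(x^x_v,x)^2 + \sum_{v\prec\beta}\lambda_v\,\dist_Q(q^x_v,q)^2.
\end{equation*}
Since the first summand depends only on $x$ and the second only on $q$, minimizing the total expression over $(x,q)\in X\times Q$ reduces to minimizing each summand separately over its factor. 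Writing $x^x_\beta=(x^x_v)_{v\prec\beta}$ and $q^x_\beta=(q^x_v)_{v\prec\beta}$ for the factor-wise configurations, this yields the desired splitting
\begin{equation*}
\Upsilon_{y^x_\beta}(\lambda) = \bigl(\Upsilon_{x^x_\beta}(\lambda),\,\Upsilon_{q^x_\beta}(\lambda)\bigr).
\end{equation*}

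The only subtle point I expect is to justify that the argmin on each factor is well-defined so that the product decomposition makes unambiguous sense. This should follow from Lemma \ref{sectionalcurvatures}: geodesic balls on $Y$ satisfying its hypotheses project to geodesic balls on each factor satisfying analogous curvature and injectivity bounds, since sectional curvatures of a Riemannian product vanish on mixed two-planes and coincide with those of the respective factor on planes tangent to that factor, while the injectivity radius of a product at a point is at least the minimum of the injectivity radii of the factors. Thus K\"archer's argmin exists and is unique simultaneously on $Y$, $X$ and $Q$, and the decomposition above is genuine. Everything else reduces to the additivity of the squared distance, which is the essential input.
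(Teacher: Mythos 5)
Your argument is essentially identical to the paper's own proof: both rest on the additive splitting $\dist_Y^2((x_v,q_v),(x,q))=\dist_X^2(x_v,x)+\dist_Q^2(q_v,q)$ for the product metric, followed by the observation that the $\arg\min$ of a sum $f(x)+g(q)$ of functions of independent variables is the pair of separate $\arg\min$'s. Your extra paragraph on well-definedness via Lemma \ref{sectionalcurvatures} is a reasonable added precaution that the paper omits, but it does not change the route.
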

\begin{proof}
Geodesic distance on the product manifold, with the product Riemannian structure is given by $dist^2((x_v,q_v),(x,q))=dist^2(x_v,x)+dist^2(q_v,q)$.

Computation of geodesic simplicial interpolation demands now the computation of:
	\begin{equation*}
	 \mathop{\arg}\limits_{(x,q)}\min \sum\limits_{v\prec\beta}\lambda_v\cdot \left(\dist((x_v,q_v),(x,q))\right)^2=
	 \mathop{\arg}\limits_{(x,q)}\min \sum\limits_{v\prec\beta}\lambda_v\cdot \left(\dist(x_v,x)\right)^2+\sum\limits_{v\prec\beta}\lambda_v\cdot \left(\dist(q_v,q)\right)^2
	\end{equation*}
It is trivial that $\mathop{\arg}\limits_{(x,q)}\min f(x)+g(q)=(\mathop{\arg}\limits_{x}\min f(x),\mathop{\arg}\limits_{q}\min g(q))$, concluding our proof.\qed
\end{proof}
A reason to call $\Upsilon$ a geodesic interpolator is the following Lemma:
\begin{lemma}
Consider two vertices $u,v$ belonging to some common $n$-cell $\beta\in V_n$, and the segment $I_{uv}\subset\Delta_\beta$, determined by points $\lambda\in \Delta_\beta\subset\RR^{n+1}$ such that $\lambda_u+\lambda_v=1$ (hence all remaining components vanish). For any configuration $y^x_\beta\in Y^x_\beta$, the restriction $\left.\Upsilon _{y^x_\beta}\right|_{I_{uv}}$ is the unique minimal geodesic parameterized by constant arc-length on $Y$ joining $y^x_u$ (for $\lambda_u=1,\lambda_v=0$) to $y^x_v$ (for $\lambda_u=0,\lambda_v=1$)
\end{lemma}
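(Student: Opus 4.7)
My plan is to unpack the definition of $\Upsilon_{y^x_\beta}$ on $I_{uv}$, reduce to a two-point weighted Fréchet mean problem, use first-order conditions to force collinearity with the minimal geodesic, and then pin down the parametrization by an explicit one-dimensional computation.

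First I would parametrize $I_{uv}$ by $t \in [0,1]$ via $\lambda_u = 1-t$, $\lambda_v = t$ and $\lambda_w = 0$ for $w \neq u,v$. The defining formula of $\Upsilon_{y^x_\beta}$ then reduces to
\[
\Upsilon_{y^x_\beta}|_{I_{uv}}(t) = \arg\min_{y \in Y} F_t(y), \qquad F_t(y) := (1-t)\,\dist(y^x_u, y)^2 + t\,\dist(y^x_v, y)^2.
\]
Under the assumptions of Lemma \ref{sectionalcurvatures}, Karcher's results (cf.\ \cite{Karcher,Sander11}) ensure that $F_t$ admits a unique minimizer $y^*(t)$ in the geodesic ball, and that $t \mapsto y^*(t)$ is smooth.

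Next I would identify the locus of $y^*(t)$. On the given geodesic ball the function $y \mapsto \dist(p,y)^2$ is smooth with gradient $\nabla_y \dist(p,y)^2 = -2\exp_y^{-1}(p)$. Consequently the first-order condition at $y^*(t)$ is
\[
(1-t)\exp_{y^*(t)}^{-1}(y^x_u) + t\,\exp_{y^*(t)}^{-1}(y^x_v) = 0 \in T_{y^*(t)}Y.
\]
Since $1-t, t \geq 0$, this forces the two log-vectors to be antiparallel (with the convention that both are zero only in the degenerate boundary cases $t \in \{0,1\}$, which are treated directly). Hence $y^x_u$, $y^*(t)$ and $y^x_v$ all lie on a single geodesic of $Y$, and because the injectivity radius at $y^x_u, y^x_v$ exceeds $2\rho$, this geodesic is the unique minimal geodesic $\gamma$ joining $y^x_u$ to $y^x_v$, entirely contained in the ball.

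Finally I would fix the parametrization. Let $\gamma\colon[0,1]\to Y$ be the constant-speed minimal geodesic with $\gamma(0)=y^x_u$, $\gamma(1)=y^x_v$, and set $d := \dist(y^x_u, y^x_v)$. For $y = \gamma(s)$ with $s \in [0,1]$, minimality gives $\dist(y^x_u,\gamma(s)) = sd$ and $\dist(y^x_v,\gamma(s)) = (1-s)d$, so
\[
F_t(\gamma(s)) = \bigl[(1-t)s^2 + t(1-s)^2\bigr]\,d^2,
\]
a strictly convex scalar function of $s$ minimized at $s = t$. Combined with the previous step this yields $\Upsilon_{y^x_\beta}|_{I_{uv}}(t) = \gamma(t)$ for all $t \in [0,1]$, proving that the restriction is precisely $\gamma$ traversed at constant speed $d$, hence parameterized by constant arc-length; the boundary values $t=0,1$ reproduce $y^x_u$ and $y^x_v$ as required.

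The main obstacle is the first-order step, i.e.\ justifying that any critical point of $F_t$ must lie on the unique minimal geodesic between $y^x_u$ and $y^x_v$. This rests on the smoothness of the squared distance and the log-map formula for its gradient on the geodesic ball, together with uniqueness of minimal geodesics inside the injectivity radius; both ingredients are exactly those encoded in Lemma \ref{sectionalcurvatures} via \cite{Karcher,Sander11}, so the argument goes through without further hypotheses.
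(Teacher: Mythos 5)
Your proof is correct, but it is worth noting that the paper itself offers no argument here: its proof of this lemma consists entirely of the citation ``See \cite{Sander11}'', deferring to Sander's treatment of geodesic finite elements. What you have written is essentially the standard Karcher-mean argument made explicit: reduce the edge restriction to a two-point weighted Fr\'echet mean $F_t(y)=(1-t)\dist(y^x_u,y)^2+t\,\dist(y^x_v,y)^2$, use the gradient formula $\nabla_y\dist(p,y)^2=-2\exp_y^{-1}(p)$ to force the minimizer onto the geodesic through the two endpoints, and then minimize the explicit quadratic $[(1-t)s^2+t(1-s)^2]d^2$ along that geodesic to land at $s=t$. The one point to keep honest is the passage from ``the two log-vectors at $y^*(t)$ are antiparallel'' to ``$y^*(t)$ lies on \emph{the} minimal geodesic from $y^x_u$ to $y^x_v$'': this needs the strong convexity of the geodesic ball guaranteed by the hypotheses of Lemma \ref{sectionalcurvatures} (so that the concatenated geodesic stays in the ball and any geodesic segment contained in the ball is the unique minimizing one), and you correctly flag that these are exactly the standing assumptions under which $\Upsilon_{y^x_\beta}$ is defined at all. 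Your version buys the reader a self-contained argument and makes visible precisely which of Karcher's hypotheses are used; the paper's version buys brevity at the cost of opacity.
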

\begin{proof}
See \cite{Sander11}. 
\end{proof}
As we shall see, this lemma leads to an easy determination of directional derivatives of $\Upsilon_{y^x_\beta}$, at any vertex and along any edge of $\Delta_\beta$, if we know how to determine geodesics on $Y$.

Moreover, if $\beta=\{v_0,v_1,\ldots, v_n\}$, $\bar\beta=\{\bar v_0,v_1,\ldots, v_n\}$ are $n$-cells with a common facet $\alpha=\{v_1,\ldots, v_n\}\in V_{n-1}$, then $\Delta_\alpha$ can be seen as contained in $\Delta_\beta$ with equation $\lambda_{v_0}=0$, or contained in $\Delta_{\bar \beta}$, with equation $\lambda_{\bar v_0}=0$. In this situation:
\begin{lemma}\label{lemma3}
Let $\beta,\bar{\beta}\in V_n$ be two $n$-cells with a common adherent facet $\alpha\in V_{n-1}$. For any configurations $y^x_{\beta}$ and $y^x_{\bar\beta}$ that have a common restriction $y^x_\alpha\in Y^x_\alpha$, the interpolators $\Upsilon_{y^x_{\beta}}$, and $\Upsilon_{y^x_{\bar\beta}}$ coincide, on all points of this facet $\Delta_{\alpha}$.
\end{lemma}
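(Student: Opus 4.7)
The plan is to exploit the fact that the defining functional of $\Upsilon_{y^x_\beta}$ is a weighted sum over the vertices $v \prec \beta$, and that on the subsimplex $\Delta_\alpha \subset \Delta_\beta$ the weight attached to the unique vertex $v_0 \in \beta \setminus \alpha$ vanishes identically.

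More concretely, I would start by recalling the embedding of $\Delta_\alpha$ into $\Delta_\beta$: a point of $\Delta_\alpha$ with barycentric coordinates $(\lambda_v)_{v\prec\alpha}$ corresponds to the point of $\Delta_\beta$ whose coordinate $\lambda_{v_0}$ equals $0$ and whose remaining coordinates coincide with those on $\alpha$. Likewise for the embedding into $\Delta_{\bar\beta}$, where the vanishing coordinate is $\lambda_{\bar v_0}$. Then I substitute these values directly into the defining expression from Definition \ref{defineinterpolator}: on such a point,
\begin{equation*}
\sum_{v\prec\beta} \lambda_v \cdot (\dist(y^x_v,y))^2 = 0\cdot (\dist(y^x_{v_0},y))^2 + \sum_{v\prec\alpha} \lambda_v \cdot (\dist(y^x_v,y))^2,
\end{equation*}
and analogously for $\bar\beta$.

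The key observation is then that the hypothesis ``$y^x_\beta$ and $y^x_{\bar\beta}$ have a common restriction $y^x_\alpha$'' means exactly that for each $v\prec\alpha$ the configurations $y^x_v$ appearing in the two sums are the same element of $Y_{x(v)}$. Consequently both sums are literally the same function of $y\in Y$, and in particular their argmin over $y$ exists and is the same element whenever one of them is defined (well-definedness being guaranteed under the geodesic-ball hypothesis of Lemma \ref{sectionalcurvatures}). Thus $\Upsilon_{y^x_\beta}$ and $\Upsilon_{y^x_{\bar\beta}}$ agree at every point of $\Delta_\alpha$, which is what was to be shown.

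There is essentially no obstacle: the proof reduces to the trivial remark that a summand with zero weight makes no contribution to an argmin, combined with the bookkeeping of how $\Delta_\alpha$ sits inside $\Delta_\beta$ and $\Delta_{\bar\beta}$. The only subtlety worth flagging in passing is that one should remark that, by the common-restriction hypothesis, the geodesic-ball condition of Lemma \ref{sectionalcurvatures} is satisfied for $y^x_\beta$ if and only if it is satisfied for $y^x_{\bar\beta}$ on points of $\Delta_\alpha$ (since only the $\alpha$-components enter the minimization there), so the two interpolators are simultaneously defined on $\Delta_\alpha$.
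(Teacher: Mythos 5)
Your argument is correct. Note, however, that the paper does not actually prove this lemma: its ``proof'' consists of the citation ``See \cite{Sander11}'', so there is no in-text argument to compare against. What you have written is precisely the standard justification underlying the cited result for Karcher/Riemannian means: embedding $\Delta_\alpha$ into $\Delta_\beta$ as the face $\lambda_{v_0}=0$, the weight of the opposite vertex vanishes, so the functional $y\mapsto\sum_{v\prec\beta}\lambda_v\,\dist(y^x_v,y)^2$ being minimized over $y\in Y$ involves only the data indexed by $v\prec\alpha$; the common-restriction hypothesis makes the two functionals literally identical, hence their (unique, under the hypotheses of Lemma \ref{sectionalcurvatures} guaranteeing well-definedness) minimizers coincide. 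Your closing remark about simultaneous well-definedness on $\Delta_\alpha$ is the right subtlety to flag, though in the context of the paper one may simply take well-definedness of both interpolators as part of the standing hypotheses. In short, your proposal supplies a self-contained elementary argument where the paper defers entirely to a reference, which is a net gain in readability at essentially no cost.
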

\begin{proof}
See \cite{Sander11}
\end{proof}
In order to derive a locally defined interpolating section associated to discrete data $y^x_\beta\in Y^x_n$ we still need the following particular situation:
\begin{define}\label{suited}
We say that $y^x_\beta\in Y^x_\beta$ is suited for simplicial geodesic interpolation if $\Upsilon_{y^x_\beta}\colon \Delta_\beta\rightarrow Y$ is well-defined, and its projection $\Upsilon^X_{y^x_\beta}=\pi_X\circ \Upsilon_{y^x_\beta}\colon \Delta_\beta\rightarrow X$ to the base manifold $X$ determines an injective immersion.
\end{define}
In the previous definition, by immersion $i\colon \Delta_\beta\rightarrow X$ we mean any mapping that restricted to $\Delta_\alpha$ (where $\alpha\subset \beta$ is any nonempty subset of vertices), has at each interior point $\lambda \in \Int(\Delta_\alpha)$ an injective differential $\dd_\lambda \left(\left.i\right|_{\Int(\Delta_\alpha)}\right)$ (the rank equals de dimension of the subsimplex). The image of the immersion is then a subset $K_{y^x_\beta}\subset X$ diffeomorphic (as a manifold with boundary) to the $n$-dimensional simplex, a diffemorphism determined by $\Upsilon^X_{y^x_\beta}$. This allows to give an interpretation of the interpolator as a local section $X\rightarrow Y$, as follows:
\begin{define}\label{definterpolating}
At any point $y^x_\beta\in Y^x_\beta$ suited for simplicial geodesic interpolation, we call interpolating section associated to $y^x_\beta\in Y^x_n$ the unique smooth mapping $y\colon K_{y^x_\beta}\subset X\rightarrow Y$ defined on $K_{y^x_\beta}=\Upsilon^X_{y^x_\beta}(\Delta_\beta)$ and satisfying $y\circ \Upsilon^X_{y^x_\beta}=\Upsilon_{y^x_\beta}$.
\end{define}
In particular for any $y^x_\beta=(y^x_v)_{v\prec\beta}$ suited for interpolation, as the vertices of $\Delta_\beta$ are transformed by $\Upsilon_{y^x_\beta}$ into the set of points $(y^x_v)_{v\prec\beta}$ on $Y$, whose projections are precisely the nodes $\{x(v)\}_{v\prec\beta}$ on $X$, we conclude that the interpolating section $y$ satisfies $y(x(v))=y^x_v$, for all the $n+1$ nodes associated to the vertices $v\prec\beta$.

For any such configuration $(y^x_v)_{v\prec\beta}\in Y^x_n$ the interpolator generates then a domain with piecewise smooth boundary $K_{y^x_\beta}$ on $X$, diffeomorphic to a simplex, whose vertices are the nodes $x(v)$, and also generates a local section of the bundle $Y\rightarrow X$, defined on this domain, that coincides on these nodes with the given elements $y^x_v\in Y_{x(v)}$.
\begin{lemma}
If $\beta$, $\bar\beta$ share a common facet $\alpha$ and $y^x_{\beta}$ and $y^x_{\bar\beta}$ are suited for simplicial geodesic interpolation and coincide on the vertices of the facet $\alpha$, its corresponding interpolating sections coincide on $\Upsilon^X_{y^x_\beta}(\Delta_\alpha)=\Upsilon^X_{y^x_{\bar\beta}}(\Delta_\alpha)\subset X$.
\end{lemma}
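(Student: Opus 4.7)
The plan is to reduce the statement to the already-cited Lemma on the coincidence of interpolators on a shared facet (Lemma \ref{lemma3}), and then pull the resulting equality back through the injective immersion coming from Definition \ref{suited}. The only real content is bookkeeping: identifying $\Delta_\alpha$ simultaneously as a face of $\Delta_\beta$ and of $\Delta_{\bar\beta}$, and exploiting the injective-immersion property to move from equality of $Y$-valued maps on $\Delta_\alpha$ to equality of $X\to Y$ sections on $\Upsilon^X_{y^x_\beta}(\Delta_\alpha)$.

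More precisely, I would first invoke Lemma \ref{lemma3}: since the configurations $y^x_\beta$ and $y^x_{\bar\beta}$ restrict to the common datum $y^x_\alpha$ on the vertices of $\alpha$, the interpolators agree on the shared face,
\begin{equation*}
\Upsilon_{y^x_\beta}\bigl|_{\Delta_\alpha}=\Upsilon_{y^x_{\bar\beta}}\bigl|_{\Delta_\alpha}\colon \Delta_\alpha\rightarrow Y.
\end{equation*}
Composing with $\pi_X\colon Y\rightarrow X$ yields $\Upsilon^X_{y^x_\beta}|_{\Delta_\alpha}=\Upsilon^X_{y^x_{\bar\beta}}|_{\Delta_\alpha}$, and taking images gives $\Upsilon^X_{y^x_\beta}(\Delta_\alpha)=\Upsilon^X_{y^x_{\bar\beta}}(\Delta_\alpha)$, which is the set on which the statement is to be proved.

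Next I would fix $x\in\Upsilon^X_{y^x_\beta}(\Delta_\alpha)$. Since $y^x_\beta$ is suited for simplicial geodesic interpolation, $\Upsilon^X_{y^x_\beta}\colon\Delta_\beta\rightarrow X$ is an injective immersion, hence in particular injective on the subsimplex $\Delta_\alpha\subset\Delta_\beta$, so there is a unique $\lambda\in\Delta_\alpha$ with $\Upsilon^X_{y^x_\beta}(\lambda)=x$; the analogous statement holds for $\bar\beta$, and by the previous paragraph this $\lambda$ is the same in both. Using the defining identity $y\circ\Upsilon^X_{y^x_\beta}=\Upsilon_{y^x_\beta}$ of the interpolating section (Definition \ref{definterpolating}) and the corresponding identity for $\bar y$, we compute
\begin{equation*}
y(x)=\Upsilon_{y^x_\beta}(\lambda)=\Upsilon_{y^x_{\bar\beta}}(\lambda)=\bar y(x),
\end{equation*}
the middle equality being exactly Lemma \ref{lemma3}. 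As $x$ was arbitrary on $\Upsilon^X_{y^x_\beta}(\Delta_\alpha)$, this completes the proof.

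I do not expect any substantive obstacle: the geometric content is entirely carried by Lemma \ref{lemma3}, and the suitedness hypothesis is precisely what is needed to turn equality of the $Y$-valued interpolators on $\Delta_\alpha$ into a well-defined pointwise comparison of the two interpolating sections over the common patch $\Upsilon^X_{y^x_\beta}(\Delta_\alpha)\subset X$.
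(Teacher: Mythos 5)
Your proof is correct and follows exactly the route the paper intends: the paper's own proof is just the one-line remark ``Easy consequence of Lemma \ref{lemma3}'', and your write-up supplies precisely the missing bookkeeping (projecting the facet equality to $X$ and using the injectivity from the suitedness hypothesis to compare the two sections pointwise).
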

\begin{proof}
Easy consequence of Lemma \ref{lemma3}
\end{proof}
Any morphism $\varphi\colon Y\rightarrow Y$ fibered over $\varphi_X\colon X\rightarrow X$, induces a morphism $\varphi_n\colon Y^x_n\rightarrow Y^{\hat{x}}_n$ of discrete bundles over $V_n$ (where $\hat{x}=\varphi_X\circ x$), defined by $\varphi_n(y^x_v)_{v\prec\beta}=(\varphi(y^x_v))_{v\prec\beta}$. 
\begin{lemma}
In the case that $\varphi\colon Y\rightarrow Y$ is a Riemannian isometry fibered on $\varphi_X\colon X\rightarrow X$, for any configuration $y^x_\beta$ suited for simplicial geodesic interpolation, also $\varphi_n(y^x_\beta)$ is suited for simplicial geodesic interpolation and there holds
	\begin{equation}\label{covarupsilon}
	\begin{aligned}
	&\Upsilon_{\varphi_n(y^x_\beta)}=\varphi\circ\Upsilon_{y^x_\beta}\\
	&\Upsilon^X_{\varphi_n(y^x_\beta)}=\varphi_X\circ\Upsilon^X_{y^x_\beta}\\
	&\bar{y}=\varphi\circ y\circ \varphi_X^{-1}
	\end{aligned}
	\end{equation}
where $y$ and $\bar{y}$ denote the interpolating sections corresponding to the configurations $y^x_\beta$, $\varphi_n(y^x_\beta)$ respectively.\end{lemma}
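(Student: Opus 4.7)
The plan is to exploit the single fact that $\varphi$, being a Riemannian isometry, preserves the distance metric used to define the simplicial geodesic interpolator, together with the fibered structure $\pi_X \circ \varphi = \varphi_X \circ \pi_X$.

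First I would prove the identity $\Upsilon_{\varphi_n(y^x_\beta)} = \varphi \circ \Upsilon_{y^x_\beta}$ by a direct change of variable in the defining variational problem of Definition \ref{defineinterpolator}. Since $\varphi$ is an isometry, for every $y \in Y$ and every vertex $v\prec\beta$
\begin{equation*}
\dist(\varphi(y^x_v),\varphi(y))^2 = \dist(y^x_v,y)^2 ,
\end{equation*}
so the functional $\bar y \mapsto \sum_{v\prec\beta}\lambda_v\,\dist(\varphi(y^x_v),\bar y)^2$ coincides, via the bijection $\bar y = \varphi(y)$, with $y\mapsto \sum_{v\prec\beta}\lambda_v\,\dist(y^x_v,y)^2$. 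Hence the argmin on the left, which is by definition $\Upsilon_{\varphi_n(y^x_\beta)}(\lambda)$, exists as soon as the argmin on the right does, and equals $\varphi$ applied to the latter. In particular $\Upsilon_{\varphi_n(y^x_\beta)}$ is well-defined on $\Delta_\beta$ whenever $\Upsilon_{y^x_\beta}$ is, and the first formula in \eqref{covarupsilon} holds.

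Next I would deduce the base-space formula by composing with $\pi_X$ and invoking the fibered hypothesis $\pi_X\circ\varphi=\varphi_X\circ\pi_X$:
\begin{equation*}
\Upsilon^X_{\varphi_n(y^x_\beta)} = \pi_X\circ \varphi\circ \Upsilon_{y^x_\beta} = \varphi_X\circ \pi_X\circ \Upsilon_{y^x_\beta} = \varphi_X\circ \Upsilon^X_{y^x_\beta}.
\end{equation*}
Since $\varphi$ is a Riemannian isometry it is in particular a diffeomorphism, and from the fibered relation $\varphi_X$ is also a diffeomorphism of $X$. Consequently, composing the injective immersion $\Upsilon^X_{y^x_\beta}$ with the diffeomorphism $\varphi_X$ yields an injective immersion, so $\varphi_n(y^x_\beta)$ is itself suited for simplicial geodesic interpolation in the sense of Definition \ref{suited}, and $K_{\varphi_n(y^x_\beta)} = \varphi_X(K_{y^x_\beta})$.

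Finally, the interpolating section formula follows by unwinding Definition \ref{definterpolating}. By definition $\bar y \circ \Upsilon^X_{\varphi_n(y^x_\beta)} = \Upsilon_{\varphi_n(y^x_\beta)}$ and $y\circ \Upsilon^X_{y^x_\beta} = \Upsilon_{y^x_\beta}$; substituting the two identities already proved gives
\begin{equation*}
\bar y\circ \varphi_X\circ \Upsilon^X_{y^x_\beta} = \varphi\circ\Upsilon_{y^x_\beta} = \varphi\circ y\circ \Upsilon^X_{y^x_\beta},
\end{equation*}
and, since $\Upsilon^X_{y^x_\beta}$ is surjective onto $K_{y^x_\beta}$, cancellation on the right and composition with $\varphi_X^{-1}$ yield $\bar y = \varphi\circ y\circ \varphi_X^{-1}$ on $K_{\varphi_n(y^x_\beta)}$. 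I do not anticipate a genuine obstacle: the one point that deserves care is checking that the argmin defining $\Upsilon_{\varphi_n(y^x_\beta)}$ inherits existence and uniqueness directly from the isometric change of variable, so that no additional appeal to Lemma \ref{sectionalcurvatures} is needed (the image under $\varphi$ of the geodesic ball controlling $y^x_\beta$ is a geodesic ball of the same radius containing the components of $\varphi_n(y^x_\beta)$, and the same sectional-curvature bound applies because $\varphi$ preserves the metric).
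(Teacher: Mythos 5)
Your proposal is correct and follows the same route as the paper: the paper's own proof simply observes that $\Upsilon_{\varphi_n(y^x_\beta)}=\varphi\circ\Upsilon_{y^x_\beta}$ is clear from Definition \ref{defineinterpolator} (precisely the isometric change of variable in the argmin that you spell out) and that the remaining claims follow. Your write-up just supplies the details the paper leaves implicit, including the observation that $\varphi_X$ is a diffeomorphism so that being suited for interpolation is preserved.
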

\begin{proof}
From definition \ref{defineinterpolator} it becomes clear that $\Upsilon_{\varphi_n(y^x_\beta)}=\varphi\circ\Upsilon_{y^x_\beta}$. The remaining statements are an easy consequence of this one. \qed
\end{proof}

Consider now a smooth Lagrangian density $\mathcal{L}\vol^X$ defined on $J^1Y$. Consider the interpolating section $y\colon v\in K_{y^x_\beta}\mapsto y(v)\in Y$ determined by the simplicial geodesic interpolator $\Upsilon$ associated to $y^x_\beta\in Y^x_n$. Is there a way to approximate the value $\mathbb{L}_{K_{y^x_\beta}}(y)=\int_{K_{y^x_\beta}}\mathcal{L}(j^1y)\vol^X$ without actually computing the interpolator $\Upsilon$? We may pull-back the integral to $\Delta_\beta$ using $\Upsilon^X\colon\Delta_\beta\rightarrow K_{y^x_\beta}$, and write
	\begin{equation}\label{actionsimplex}
	\mathbb{L}_{K_{y^x_\beta}}(y)=\int_{z\in K_{y^x_\beta}}\mathcal{L}(j^1_z y)\cdot \vol^X=\int_{\lambda\in \Delta_\beta} \mathcal{L}\left(j^1_{\Upsilon^X(\lambda)}y\right)\cdot (\Upsilon^X)^*\vol^X
	\end{equation}
\begin{define}
For any configuration suited for simplicial geodesic interpolation $y^x_\beta\in Y^x_n$, we call Jacobian function associated to this configuration the positive function $Jac_{y^x_\beta}\colon \Delta_\beta\rightarrow \mathbb{R}^+$ determined by:
	\begin{equation*}
	(\Upsilon^X)^*\vol^X=Jac_{y^x_\beta}(\lambda)\dd\lambda
	\end{equation*}
where $\vol^X$ is our chosen volume $n$-form on  $X$, $\Upsilon^X=\pi_X\circ\Upsilon_{y^x_\beta}$ is determined by the associated simplicial geodesic interpolator, and $\dd\lambda$ is the volume element on $\Delta_\beta$ determined according to definition \ref{dl}, whose orientation coincides with $(\Upsilon^X)^*\vol^X$.
\end{define}
Thus following (\ref{actionsimplex}), for any section $y\colon K_{y^x_\beta}\rightarrow Y$ there holds:
	\begin{equation*}
	\mathbb{L}_{K_{y^x_\beta}}(y)=\int_{\lambda\in \Delta_\beta} h(\lambda)\dd\lambda\text{ , where } h(\lambda)=\La(j^1_{\Upsilon^X(\lambda)}y)\cdot Jac_{y^x_\beta}(\lambda)
	\end{equation*}
\begin{define}\label{inducedsimpgeod}
We call discrete Lagrangian density $L^x$ induced on $Y_n^x$ by the Lagrangian density $\La\cdot\vol^X$ on $J^1Y$, using simplicial geodesic interpolation and a quadrature rule $Q$ on $\Delta_\beta$, the mapping $L^x\colon Y_n^x\rightarrow \mathbb{R}$ defined by:
	\begin{equation}\label{f510}
	L^x(y^x_\beta)=Q(h_{y^x_\beta}),\quad \text{ where }h_{y^x_\beta}(\lambda)=\La(j^1_{\Upsilon^X(\lambda)}y)\cdot Jac_{y^x_\beta}(\lambda)
	\end{equation}
Here $h_{y^x_\beta}(\lambda)\colon \Delta_\beta\rightarrow\mathbb{R}$ is defined by means of the simplicial interpolator $\Upsilon$ associated to $y^x_\beta$, its projection $\Upsilon^X_{y^x_\beta}$, the associated interpolating section $y\colon K_{y^x_\beta}\rightarrow Y$ and the Jacobian function $Jac_{y^x_\beta}$ 
\end{define}
The values of the function $h_{y^x_\beta}$ are easy to compute, at the vertices of the simplex $\Delta_\beta$:
\begin{lemma}\label{lem511}
Consider any configuration $y^x_\beta\in Y^x_n$ suited for simplicial geodesic interpolation. Let $\Upsilon\colon\Delta_\beta\rightarrow Y$ be the associated simplicial geodesic interpolator and $y\colon K_{y^x_\beta}\rightarrow Y$ be the associated interpolating section.

For any vertices $u,v\prec\beta$, denote by $t_{uv}$ the tangent vector $\left(\frac{\dd}{\dd s}\right)_{s=0}\gamma(s)$ of the minimal geodesic $\gamma(s)$ joining $y^x_u\in Y$ at $s=0$ with $y^x_v\in Y$ at $s=1$ parameterized with constant speed. Denote by $t^X_{uv}$ the corresponding projection into $X$.

If $\lambda\in \Delta_\beta$ is a vertex (that is, $\lambda_u=1$ for some $u\prec\beta$), then 
	\begin{equation*}
	\begin{aligned}
	& \Upsilon^X(\lambda)=x(u),\qquad \Upsilon(\lambda)=y(x(u))=y^x_u \\
	&j^1_{\Upsilon^X(\lambda)}y=\phi_u\colon t^X_{uv}\in T_{x(u)}X\rightarrow t_{uv}\in T_{y^x_u}Y\quad \forall v\prec\beta\\
	&Jac_{y^x_\beta}(\lambda)=\left|\vol^X_{x(u)}(t^X_{uv_1},\ldots,t^X_{uv_n})\right|\in\mathbb{R}
	\end{aligned}
	\end{equation*}
where $v_1,\ldots, v_n$ represents the set of vertices adherent to $\beta\in V_n$, excluding $u$.
\end{lemma}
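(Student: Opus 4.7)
The three equalities correspond to evaluating $\Upsilon$, its 1-jet, and its Jacobian at the vertex $\lambda\in\Delta_\beta$ where $\lambda_u=1$ and $\lambda_v=0$ for $v\neq u$. For the first equality, the objective $\sum_{v\prec\beta}\lambda_v\dist^2(y^x_v,y)$ in Definition~\ref{defineinterpolator} collapses at this $\lambda$ to $\dist^2(y^x_u,y)$, whose unique minimizer is $y=y^x_u$. Thus $\Upsilon_{y^x_\beta}(\lambda)=y^x_u$, and applying $\pi_X$ yields $\Upsilon^X(\lambda)=x(u)$; since $y\circ\Upsilon^X=\Upsilon$ (Definition~\ref{definterpolating}), also $y(x(u))=y^x_u$.

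For the second equality, invoke the earlier edge lemma: the restriction $\Upsilon|_{I_{uv}}$ is the constant-speed minimal geodesic $\gamma$ from $y^x_u$ to $y^x_v$, parametrized on $I_{uv}$ by $s\in[0,1]$ via $\lambda_u=1-s$, $\lambda_v=s$. Let $\partial_{uv}\in T_\lambda\Delta_\beta$ denote the velocity of this edge-parametrization at $s=0$. By the very definitions of $t_{uv}$ and $t^X_{uv}$,
\begin{equation*}
\dd\Upsilon(\partial_{uv})=t_{uv},\qquad \dd\Upsilon^X(\partial_{uv})=t^X_{uv}.
\end{equation*}
Differentiating $y\circ\Upsilon^X=\Upsilon$ at $\lambda$ then gives $(\dd_{x(u)}y)(t^X_{uv})=t_{uv}$, which is the defining relation for $\phi_u$. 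Since $y^x_\beta$ is suited for simplicial geodesic interpolation (Definition~\ref{suited}), $\Upsilon^X$ is an immersion at $\lambda$, so the $n$ edge tangents $\{\partial_{uv}\}_{v\neq u}$ form a basis of $T_\lambda\Delta_\beta$ whose images $\{t^X_{uv}\}_{v\neq u}$ form a basis of $T_{x(u)}X$; these $n$ relations determine the differential $\dd_{x(u)}y$ uniquely, yielding $j^1_{x(u)}y=\phi_u$. The case $v=u$ is automatic.

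For the third equality, introduce affine coordinates $(\lambda_{v_1},\ldots,\lambda_{v_n})$ on $\Delta_\beta$ (with $\lambda_u=1-\sum\lambda_{v_i}$); at the vertex $\lambda$, the edge tangents $\partial_{uv_i}$ coincide with the coordinate vectors $\partial/\partial\lambda_{v_i}$. By Definition~\ref{dl}, $\dd\lambda=\varepsilon\,\dd\lambda_{v_1}\wedge\cdots\wedge\dd\lambda_{v_n}$ for a sign $\varepsilon=\pm 1$ fixed so that $\int_{\Delta_\beta}\dd\lambda=1/n!$. Evaluating $(\Upsilon^X)^*\vol^X=Jac_{y^x_\beta}(\lambda)\,\dd\lambda$ on the ordered basis $(\partial/\partial\lambda_{v_1},\ldots,\partial/\partial\lambda_{v_n})$ produces
\begin{equation*}
\vol^X_{x(u)}(t^X_{uv_1},\ldots,t^X_{uv_n})=\varepsilon\cdot Jac_{y^x_\beta}(\lambda).
\end{equation*}
Because $Jac_{y^x_\beta}$ is nonnegative by construction (the orientation of $\dd\lambda$ was selected to agree with $(\Upsilon^X)^*\vol^X$), taking absolute values delivers the stated formula. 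The principal obstacle is precisely this orientation bookkeeping in the last step; every other ingredient reduces to the edge characterization of $\Upsilon$ and one application of the chain rule.
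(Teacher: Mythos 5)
Your proposal is correct and follows essentially the same route as the paper's proof: both rest on the edge characterization of $\Upsilon$ (edges of $\Delta_\beta$ map to constant-speed minimal geodesics, giving $\dd\Upsilon(\partial_{uv})=t_{uv}$ and $\dd\Upsilon^X(\partial_{uv})=t^X_{uv}$), the chain rule applied to $y\circ\Upsilon^X=\Upsilon$, and evaluation of $(\Upsilon^X)^*\vol^X$ against the edge-tangent basis compared with $\dd\lambda$. Your treatment is if anything slightly more explicit on the first equality (deriving $\Upsilon(\lambda)=y^x_u$ from the collapse of the minimization objective) and on the sign bookkeeping for the Jacobian, but these are refinements of the same argument, not a different one.
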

\begin{proof}
Following $y\circ\Upsilon^X=\Upsilon$, we know $y(\Upsilon^X(\lambda))=\Upsilon(\lambda)$ and $\dd_{\Upsilon^X(\lambda)}y=(\dd_\lambda\Upsilon)\circ(\dd_\lambda\Upsilon^X)^{-1}$ at any point $\lambda\in\Delta_\beta$. The values are now relatively easy to compute at the vertices of $\Delta_\beta$. Namely, $\Upsilon$ maps these vertices into $y^x_v$ and maps the edges $I_{uv}\subset\Delta_\beta$ joining two vertices $u,v\in\beta$ into constant-speed parameterized minimal geodesics joining $y^x_u$ with $y^x_v$.

Enumerate the vertices as $u=v_0,\ldots, v_n$. With this ordering, we may identify $\Delta_\beta$ with the domain $\{(\lambda_0,\ldots,\lambda_n)\in\RR^{n+1}\,\colon \, \sum \lambda_i=1,\, \lambda_i\geq 0\}$. Call $\partial_i$ the vector on $\RR^{n+1}$ tangential to the $i$-direction. The tangential vector associated to the edge $\gamma(s)=sv_j+(1-s)u$ joining $u$ at $s=0$ to $v_j$ at $s=1$ is precisely $\partial_j-\partial_0$. This edge is transformed by $\Upsilon$ into a geodesic, therefore $(\dd_{u}\Upsilon)(\partial_j-\partial_0)=t_{uv_j}$, $(\dd_{u}\Upsilon^X)(\partial_j-\partial_0)=t^X_{uv_j}$ and $(\dd_{u}\Upsilon)\circ(\dd_{u}\Upsilon^X)^{-1}=\phi_{u}$, as defined in the statement.

Computing $(\Upsilon^X)^*\vol^X$ at the point $u=v_0\in\Delta_\beta$ leads to:
	\begin{equation*}
	((\Upsilon^X)^*\vol^X)_{u}(\partial_1-\partial_0,\ldots, \partial_n-\partial_0)=\vol^X_{x(u)}(t^X_{uv_1},\ldots, t^X_{uv_n})
	\end{equation*}
on the other hand $\pm\dd\lambda=\dd\lambda_1\wedge\ldots\wedge\dd\lambda_n$ is the affine differential form whose integration on $\Delta_\beta$ is $\frac{1}{n!}$. This differential form takes value 1 when applied to vectors $\partial_1-\partial_0,\ldots,\partial_n-\partial_0$ at the point $v_0$, therefore if $\lambda\in\Delta_\beta$ represents the vertex $u=v_0$, we may write:
	\begin{equation*}
	\pm Jac_{y^x_\beta}(\lambda)=((\Upsilon^X)^*\vol^X)_{u}(\partial_1-\partial_0,\ldots,\partial_n-\partial_0)=\vol^X_{x(u)}(t^X_{uv_1},\ldots,t^X_{uv_n})
	\end{equation*}
which completes the proof.\qed
\end{proof}
Computing the components in formula (\ref{f510}) for a quadrature rule with nodes at vertices can be reduced to the determination of geodesics, and application of lemma \ref{lem511}. When the quadrature rule uses as nodes arbitrary points $u\in \Delta_\beta$ (not necessarily a vertex) an explicit computation of all elements in (\ref{f510}) would be possible, resulting in an expression depending only on $(y^x_v)_{v\prec\beta}$, but obtaining an analytic expression might imply a much harder computational effort, depending on the particular Riemannianian structure of the manifold $Y$. It is reasonable then to use quadrature rules whose nodes are all at vertices of $\Delta_\beta$.

\begin{remark}
Smooth sections $y\in\Gamma(X,Y)$ determine discrete configurations $y^x_\beta \in Y^x_\beta$. If this one is suited for geodesic simplicial interpolation, it determines a new (locally defined) section $y^{int}\in \Gamma(K_{y^x_\beta},Y)$. Both sections may be compared, and the error in substituting one with the other, might be bounded, using the results in \cite{Deylen15,GroHardSan13}. We shall not explore this aspect any further. However, error bounds for the values of $y^{int}$ and its derivatives as approximations to $y$ and its derivatives, are relevant to derive new error bounds between the value of the smooth action functional on $y$ and the value of the approximate discrete Lagrangian density on $y^x_\beta$, a bound that also depends on the particular quadrature rule used (see \cite{Guessab} for some error bounds for different quadrature rules on simplices).
\end{remark}

We may now study the behavior of the discrete Lagrangian density associated to $\La\vol^X$, with respect to the action of some symmetry group.

\begin{prop}
Consider a bundle $Y\rightarrow X$, and a morphism $\varphi\colon Y\rightarrow Y$ fibered over $\varphi_X\colon X\rightarrow X$, isometry for the Riemannian metric and whose extension $j^1\varphi\colon J^1Y\rightarrow J^1Y$ is symmetry for the Lagrangian density:
	\begin{equation*}
	\varphi\colon Y\rightarrow Y,\quad \begin{array}{rcl}
		j^1\varphi\colon J^1Y & \rightarrow & J^1Y\\
			j^1_xy & \mapsto & j^1_{\varphi_X(x)} (\varphi\circ y\circ\varphi_X^{-1})
    \end{array},\quad \text{ verifying }\left\{\begin{aligned} &(j^1\varphi)^*(\mathcal{L}\vol^X)=\mathcal{L}\vol^X \\ &\dd_y\varphi\colon T_yY\rightarrow T_{\varphi(y)}Y\text{ isometry }\forall y\in Y\end{aligned}\right.     
	\end{equation*}
then for any $x\colon V_0\rightarrow X$, for $\hat{x}=\varphi_X\circ x$, and for any quadrature rule $Q$ on the simplex $\Delta_\beta$, the induced discrete densities $L^x$ on $Y^x_n$ and $L^{\hat x}$ on $Y^{\hat x}_n$ (both of them locally defined at points suited for simplicial geodesic interpolation, following definition \ref{suited}) are related by the induced mapping $\varphi_n\colon Y^x_n\rightarrow Y^{\hat x}_n$ as follows:
	\begin{equation*}
	L^{\hat x}(\varphi_n(y^x_\beta))=L^x(y^x_\beta)
	\end{equation*}
\end{prop}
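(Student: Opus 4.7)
The strategy is to reduce the claim to a pointwise identity of the scalar integrand $h$ on $\Delta_\beta$, after which the quadrature rule $Q$---being a linear combination of pointwise evaluations---delivers the result automatically. Concretely, the plan is to prove that $h_{\varphi_n(y^x_\beta)}(\lambda) = h_{y^x_\beta}(\lambda)$ for every $\lambda \in \Delta_\beta$; then $L^{\hat x}(\varphi_n(y^x_\beta)) = Q(h_{\varphi_n(y^x_\beta)}) = Q(h_{y^x_\beta}) = L^x(y^x_\beta)$.

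To reach this pointwise identity, first reinterpret $h_{y^x_\beta}(\lambda)\dd\lambda$ as a pull-back of differential forms on $J^1Y$. Viewing $\La\cdot\vol^X$ as an $n$-form on $J^1Y$ and using the defining identity $(\Upsilon^X_{y^x_\beta})^*\vol^X = Jac_{y^x_\beta}(\lambda)\dd\lambda$ together with $y\circ\Upsilon^X_{y^x_\beta} = \Upsilon_{y^x_\beta}$ for the interpolating section $y$ (Definition \ref{definterpolating}), one obtains
\begin{equation*}
h_{y^x_\beta}(\lambda)\dd\lambda = (j^1 y \circ \Upsilon^X_{y^x_\beta})^*(\La\vol^X).
\end{equation*}

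Next, apply the covariance of simplicial geodesic interpolation under the isometry $\varphi$, recorded in (\ref{covarupsilon}): the interpolating section for $\varphi_n(y^x_\beta)$ is $\bar y = \varphi \circ y \circ \varphi_X^{-1}$ and the projected interpolator factors as $\Upsilon^X_{\varphi_n(y^x_\beta)} = \varphi_X \circ \Upsilon^X_{y^x_\beta}$. From the defining formula for $j^1\varphi$ one has $j^1\bar y = j^1\varphi \circ j^1 y \circ \varphi_X^{-1}$, so that $j^1\bar y \circ \Upsilon^X_{\varphi_n(y^x_\beta)} = j^1\varphi \circ j^1 y \circ \Upsilon^X_{y^x_\beta}$. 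The Lagrangian symmetry hypothesis $(j^1\varphi)^*(\La\vol^X) = \La\vol^X$ then gives
\begin{equation*}
h_{\varphi_n(y^x_\beta)}(\lambda)\dd\lambda = (j^1 y \circ \Upsilon^X_{y^x_\beta})^* (j^1\varphi)^* (\La\vol^X) = (j^1 y \circ \Upsilon^X_{y^x_\beta})^*(\La\vol^X) = h_{y^x_\beta}(\lambda)\dd\lambda,
\end{equation*}
yielding the desired pointwise equality, and applying $Q$ concludes.

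The only subtlety I foresee is ensuring $\varphi_n(y^x_\beta)$ is itself suited for geodesic simplicial interpolation (Definition \ref{suited}) whenever $y^x_\beta$ is, so that $\bar y$ is well-defined and both forms above make sense. This transport of suitability follows formally from the factorizations $\Upsilon_{\varphi_n(y^x_\beta)} = \varphi \circ \Upsilon_{y^x_\beta}$ and $\Upsilon^X_{\varphi_n(y^x_\beta)} = \varphi_X \circ \Upsilon^X_{y^x_\beta}$ together with the fact that $\varphi$ and $\varphi_X$ are diffeomorphisms preserving injective immersions. No analysis of the specific quadrature nodes is required---once the integrand is recognized as a pull-back, the computation is essentially formal.
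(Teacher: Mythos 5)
Your proposal is correct and follows essentially the same route as the paper: both arguments reduce the claim to the pointwise identity $h_{\varphi_n(y^x_\beta)}=h_{y^x_\beta}$ on $\Delta_\beta$ by combining the covariance of the geodesic interpolator under the isometry (formula (\ref{covarupsilon})) with the symmetry hypothesis $(j^1\varphi)^*(\mathcal{L}\vol^X)=\mathcal{L}\vol^X$, and then observe that any quadrature rule applied to equal functions gives equal values. Your packaging of the computation as the pull-back $(j^1y\circ\Upsilon^X_{y^x_\beta})^*(\mathcal{L}\vol^X)$ is just a cleaner notation for the chain of equalities the paper writes out explicitly, and the suitability of $\varphi_n(y^x_\beta)$ that you flag is exactly what the lemma preceding the proposition supplies.
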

\begin{proof}
If $y$ is the interpolating section determined by $y^x_\beta$, then from (\ref{covarupsilon}) follows that $\varphi\circ y\circ\varphi_X^{-1}$ is the interpolating section determined by $\varphi_n(y^x_\beta)$. On the other hand, at any point $x\in X$ holds $$\La\left(j^1_{\varphi_X(x)}(\varphi\circ y\circ\varphi_X^{-1})\right)\cdot \varphi_X^*\vol^X_{\varphi_X(x)}=\La(j^1_xy)\cdot \vol^X_x$$ Applying this property at the point $x=\Upsilon^X_{y^x_\beta}(\lambda)$ we get for the function (\ref{f510}):
	\begin{equation*}
	\begin{aligned}
	h_{\varphi_n(y^x_\beta)}&(\lambda)\cdot\dd\lambda=\La(j^1_{\varphi_X\circ\Upsilon^X(\lambda)}(\varphi\circ y\circ\varphi_X^{-1}))\cdot (\varphi_X\circ\Upsilon^X)^*\vol^X_{\varphi_X\circ\Upsilon^X(\lambda)}=\\
	&= (\Upsilon^X)^*\left(\La(j^1_{\varphi_X\circ\Upsilon^X(\lambda)}(\varphi\circ y\circ\varphi_X^{-1}))\cdot \varphi_X^*\vol^X_{\varphi_X\circ\Upsilon^X(\lambda)}\right)=\\
	&=(\Upsilon^X)^*\left(\La(j^1_{\Upsilon^B(\lambda)}y)\cdot \vol^X_{\Upsilon^X(\lambda)}\right)= 
	\La(j^1_{\Upsilon^B(\lambda)}y)\cdot (\Upsilon^X)^*\vol^X_{\Upsilon^X(\lambda)}
	=h_{y^x_\beta}(\lambda)\cdot\dd\lambda
	\end{aligned}
	\end{equation*}
Therefore the functions used in definition \ref{inducedsimpgeod} associated to $y^x_\beta$ and $\varphi_n(y^x_\beta)$ coincide: $h_{\varphi_n(y^x_\beta)}=h_{y^x_\beta}$. As both functions are the same, the discrete Lagrangian density $L^{\hat x}$ applied to the point $\varphi_n(y^x_\beta)$ leads to the same result as the discrete Lagrangian density $L^x$, applied to the point $y^x_\beta$ (in both cases, the result is obtained by some given quadrature rule on $\Delta_\beta$, applied to the same function).\qed
\end{proof}

\begin{corollary}
Let $Y\rightarrow X$ be a smooth bundle, $\mathcal{L}\vol^X$ a smooth lagrangian density on $J^1Y$, $V$ an abstract simplicial complex, $x\colon V_0\rightarrow X$ an injective mapping determining nodes on $X$ for each vertex $v\in V_0$. Let $Y^x$ be the induced discrete bundle and $L^x$ the induced discrete lagrangian density on $Y^x$, determined by definition \ref{inducedsimpgeod}.

If $\varphi_t\colon Y\rightarrow Y$ is a 1-parameter group of isometries on $Y$, fibered over the identity mapping on $X$, symmetry for $\mathcal{L}\vol^X$ and with infinitesimal generator $D\in\X(Y)$, then $D$ is a vertical vector field, and its restriction $D^x$ to $Y^x\subset Y$ is an infinitesimal symmetry (in the sense of definition \ref{definesimetria}) for the discrete lagrangian density $L^x$.
\end{corollary}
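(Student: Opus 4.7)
The plan is to deduce this corollary from the preceding proposition by differentiation with respect to the parameter $t$. Let me first dispose of the verticality claim: since each $\varphi_t$ is fibered over $\mathrm{id}_X$, one has $\pi_X\circ \varphi_t=\pi_X$, so differentiating at $t=0$ gives $(\pi_X)_*D=0$, i.e.~$D$ is vertical. Hence $D$ restricts to a tangent vector at every point of $Y^x_v\subset Y_{x(v)}$, and we may legitimately speak of $D^x\in \X(Y^x)$, viewed as a section of $VY^x\rightarrow V_0$.

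Next, I would apply the preceding proposition to each $\varphi_t$. Since $\varphi_{t,X}=\mathrm{id}_X$, the ``shifted'' node mapping is $\hat{x}=\varphi_{t,X}\circ x=x$; consequently the discrete bundles $Y^{\hat{x}}$ and $Y^x$ coincide, the induced discrete Lagrangian densities $L^{\hat{x}}$ and $L^x$ coincide, and the induced discrete mapping $\varphi_{t,n}\colon Y^x_n\rightarrow Y^x_n$ is $\varphi_{t,n}(y^x_\beta)=(\varphi_t(y^x_v))_{v\prec\beta}$. The proposition therefore yields, for every $\beta\in V_n$ and every configuration $y^x_\beta$ suited for simplicial geodesic interpolation,
\begin{equation*}
L^x_\beta\bigl(\varphi_{t,n}(y^x_\beta)\bigr)=L^x_\beta(y^x_\beta),\qquad \forall t\in\mathbb{R}.
\end{equation*}

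Now I would differentiate at $t=0$. The tangent vector along the curve $t\mapsto \varphi_{t,n}(y^x_\beta)$ at $t=0$ is exactly
\begin{equation*}
\tfrac{\dd}{\dd t}\big|_{t=0}\bigl(\varphi_t(y^x_v)\bigr)_{v\prec\beta}=(D_{y^x_v})_{v\prec\beta}=D^x_{y^x_\beta}\in T_{y^x_\beta}Y^x_\beta=\bigoplus_{v\prec\beta}T_{y^x_v}Y^x_v,
\end{equation*}
so the chain rule applied to the invariance equation above gives $(\dd_{y^x_\beta}L^x_\beta)(D^x_{y^x_\beta})=0$. Since this holds for every $n$-cell $\beta$ and every $y^x_\beta$ in the domain of definition of $L^x$, we conclude, by Definition \ref{definesimetria}, that $D^x$ is an infinitesimal symmetry for the discrete Lagrangian density $L^x$.

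The only mildly delicate point is that $L^x$ is only locally defined (on configurations suited for simplicial geodesic interpolation); however the openness of this condition, together with the fact that $\varphi_t$ is an isometry and preserves this condition (so that the curve $t\mapsto \varphi_{t,n}(y^x_\beta)$ stays in the domain of $L^x$ for small $t$), makes the differentiation at $t=0$ legitimate. No further computation is required since the preceding proposition has already absorbed all the geometric content regarding the symmetry, the interpolator and the quadrature rule.
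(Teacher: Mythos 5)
Your proof is correct and follows exactly the route the paper intends: the corollary is stated without an explicit proof as an immediate consequence of the preceding proposition, and your argument (verticality from differentiating $\pi_X\circ\varphi_t=\pi_X$, then applying the proposition with $\hat{x}=x$ and differentiating the invariance identity $L^x(\varphi_{t,n}(y^x_\beta))=L^x(y^x_\beta)$ at $t=0$) is precisely the missing chain of reasoning, including the correct remark that the suitability condition is open and preserved by the isometries so the differentiation is legitimate.
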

Following these results we observe that any smooth lagrangian density determines a discrete lagrangian density, and that isometries respecting the original lagrangian density are also symmetries for the discrete one. The determination of the discrete Lagrangian density is easy if the choice of quadrature rule has nodes at the vertices of a simplicial domain and if we know how to construct minimal geodesics joining two neighboring points on $Y$. In particular, our discretization mechanism is easily applied when $Y$ is $\RR^n$, the sphere $S^n$, the space $SO(3)$, or any product of these Riemannian manifolds, leading to discrete Lagrangian densities that are invariant for any Riemannian transformation that is a symmetry of the original Lagrangian density.

\begin{remark}
Observe that, for any given smooth lagrangian, this interpolation procedure generates a discrete Lagrangian $L^x$ on $n$-simplices of the CFK complex $V_0$, which induces a discrete Lagrangian $\bar{L}^x$ on $n$-cells of the cubic cellular complex $\bar{V}_n$, determined by formula (\ref{simplexcubic}). All automorphisms of $Y$ respecting both the Riemannian structure and the smooth Lagrangian density lead to symmetries for the discrete Lagrangian density defined on the CFK complex and also for the induced lagrangian defined on the cubic complex. This leads to corresponding variational principles and conservation laws in the sense already studied in \cite{CasiRodr12}.
\end{remark}
\subsection*{The case of affine bundles}
If the bundle $Y\rightarrow X$ is affine, with affine projection mapping, the situation becomes more familiar. Any affine space $Y$ may be endowed with an arbitrary euclidean metric. The geodesics in this case are straight lines, not depending on the particular choice of euclidean structure. The simplicial geodesic interpolator associated to any point $y^x_\beta$ is simply the affine interpolator, taking any $(\lambda_v)_{v\prec\beta}\in\Delta_\beta$ to the point $y\in Y$ with barycentric coordinates $\lambda_v$, with respect to the referential $(y^x_v)_{v\prec\beta}$:
    \begin{equation*}
    \begin{array}{rrcl}
		\Upsilon_{y^x_\beta} \colon & \Delta_\beta & \rightarrow & Y \\
		 & (\lambda_v)_{v\prec\beta} & \mapsto & \sum\limits_{v\prec\beta} \lambda_v\cdot y^x_v
		\end{array}
		\end{equation*}
where the sum makes sense as a weighed mean of points on any affine space, considering that the total weight is $\sum_{v\prec\beta}\lambda_v=1$.

As the projection $\pi_X\colon Y\rightarrow X$ is affine, the projected interpolator $\Upsilon^X_{y^x_\beta}=\pi_X\circ\Upsilon_{y^x_\beta}\colon \Delta_\beta\rightarrow X$ is an affine mapping, explicitly given by $(\lambda_v)_{v\prec\beta} \mapsto \sum\limits_{v\prec\beta} \lambda_v\cdot x(v)$, depending only on the nodes $\{x(v)\}_{v\prec\beta}$, and the associated domain $K_{y^x_\beta}$ is the convex hull $K^x_\beta$ of these nodes $\{x(v)\}_{v\prec\beta}$
	\begin{equation*}
		\Upsilon^X_{y^x_\beta}(\Delta_\beta)=K^x_\beta=	\left\{\sum_{v\prec\beta}\lambda_v\cdot x(v)\colon\, (\lambda_v)_{v\prec\beta}\in \Delta_\beta\right\}\subset X
	\end{equation*}
If the nodes $\{x(v)\}_{v\prec\beta}$ are in general position on $X$, there exists a well-defined affine inverse $(\Upsilon_{y^x_\beta}^X)^{-1}\colon K^x_\beta\rightarrow \Delta_\beta$, transforming any point of the convex hull into its barycentric coordinates with respect of the vertices of $K^x_\beta$. In this case any configuration $y^x_\beta\in Y^x_\beta$ is suited for simplicial geodesic interpolation (in our case, affine interpolation).

Therefore when $Y$ is affine the induced discrete Lagrangian doesn't depend on the particular choice of euclidean structure, and all notions introduced in section \ref{section5} turn out to be the natural ones in the affine setting determined by affine interpolation. Any affine transformation is an isometry for some appropriate euclidean structure, therefore our results also tell us that any affine transformation that is a symmetry for the smooth lagrangian density will also be a symmetry for the induced discrete lagrangian induced by affine interpolation and any choice of quadrature rule.

Moreover, when $\pi\colon Y\rightarrow X$ is affine, its jet bundle $J^1Y$ can be identified as a product manifold $X\times\mathrm{Aff}_X(X,Y)$ of the base manifold $X$ and the space $\mathrm{Aff}_X(X,Y)$ of affine sections $y\colon X\rightarrow Y$ of the fibred affine space $\pi_X\colon Y\rightarrow X$. Any configuration $y^x_\beta$ suited for interpolation determines an affine interpolating section $y$, thus leading to a mapping:
	\begin{equation*}
	y^x_\beta=(y^x_v)_{v\prec\beta}\in Y^x_n\mapsto j^1_{\overline{x}(\beta)}y\simeq (\overline{x}(\beta),y)\in X\times \mathrm{Aff}_X(X,Y)=J^1Y
	\end{equation*}
where the affine mapping $y$ is univocally determined by the condition $y(x(v))=y^x_v$, $\forall v\prec\beta$, and the point $\overline{x}(\beta)$ is the barycenter of all nodes $\{x(v)\}_{v\prec\beta}$. This barycenter doesn't depend on the particular configuration $y^x_\beta$, but just on $\beta\in V_n$ and the immersion $x\colon V_0\rightarrow X$. If we denote by $\overline{x}\colon V_n\rightarrow X$ the mapping taking any $n$-cell into the barycenter $\overline{x}(\beta)$ of its associated nodes $\{x(v)\}_{v\prec\beta}$, we get a mapping:
	\begin{equation*}
	\begin{array}{rcl}
	Bary\quad\colon\quad Y^x_n& \rightarrow &\overline{x}^*J^1Y\simeq V_n\times \mathrm{Aff}_X(X,Y)\\
	(y^x_v)_{v\prec\beta}&\mapsto & (\bar{x}(\beta),y)\simeq(\beta,y)\text{ such that }y(x(v))=y^x_v,\,\forall v\prec\beta,\, y\text{ affine}
	\end{array}
	\end{equation*}
defined on the whole fiber $Y^x_\beta$ if the nodes $\{x(v)\}_{v\prec\beta}$ are in general position.
\begin{prop}
Consider a fibred affine coordinate system $x^1,\ldots,x^n,y^1,\ldots, y^m$ for the affine bundle $\pi\colon Y\rightarrow X$ and the induced coordinate system $(x^j, y^k,\partial_j y^k)$ on $J^1Y$, defined by:
	\begin{equation*}
	x^j(j^1_xy)=x^j(x),\quad y^k(j^1_xy)=y^k(y(x)),\quad \partial_jy^k(j^1_xy)=\frac{\partial y^k\circ y}{\partial x^j}(x)\qquad \forall y\in\Gamma(X,Y),\forall x\in X
	\end{equation*}
Fix an abstract simplicial complex $V$. Fix an injective mapping $x\colon V_0\rightarrow X$. Consider an ordering $(v_i)_{i=0\ldots n}$ for the adherent vertices $v\prec\beta$ of some $n$-cell $\beta\in V_n$, and the naturally induced coordinate system $(y^k_i)$ on $Y^x_\beta=\prod_i Y_{x(v_i)}$, defined by $y^k_i(y^x_\beta)=y^k (y^x_{v_i})$. 

Denote $x^0$ the constant function 1, and $x^j_i(\beta)=x^j(x(v_i))$, for the chosen ordering. Denote $\bar x^j(\beta)=\frac{1}{n+1}\sum_i x^j_i(\beta)$

A configuration $y^x_\beta\in Y^x_n$ is suited for simplicial geodesic interpolation if and only if $(x^j_i(\beta))$ is non-singular. In this case the mapping $Bary\colon Y^x_n\rightarrow \bar{x}^*J^1Y\subset J^1Y$ may be given in local coordinates as:
	\begin{equation}\label{partialsimplex2}
	x^j=\bar{x}^j(\beta),\qquad \left[\begin{matrix} y^k \\ \partial_1 y^k \\ \vdots \\ \partial_n y^k\end{matrix}\right]= 
	\left[\begin{matrix} 1 & \bar{x}^j(\beta) \\ 0 & \Id_n\end{matrix}\right]\cdot 
	\left[ x_i^j(\beta)\right]^{-1}\cdot \left[\begin{matrix} y^k_i\end{matrix}\right]\qquad \forall k=1\ldots m
	\end{equation}
and represents an isomorphism on each fiber.

For any fixed affine volume element $\vol^X$, the associated Jacobian function $Jac_{y^x_\beta}(\lambda)$ is a constant $Jac_{y^x_\beta}$, depending only on the nodes.
	\begin{equation*}
	\vol^X=c\cdot \dd x^1\wedge\ldots\wedge \dd x^n\Rightarrow Jac_{y^x_\beta}=c\cdot \left| \det(x^j_i)\right|
	\end{equation*}
\end{prop}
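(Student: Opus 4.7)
The plan is to check each of the three assertions separately, exploiting the fact that in the affine setting the simplicial geodesic interpolator is itself affine.

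First, I would observe that the projected interpolator $\Upsilon^X_{y^x_\beta}(\lambda)=\sum_i \lambda_i\, x(v_i)$ is affine and depends only on the nodes, not on $y^x_\beta$. Its injectivity as an immersion from $\Delta_\beta$ into $X$ is therefore equivalent to affine independence of the $n+1$ nodes $x(v_0),\ldots,x(v_n)\in X$. In the chosen affine coordinates $x^1,\ldots,x^n$ on $X$ this is exactly the non-vanishing of the determinant of the $(n+1)\times(n+1)$ matrix whose row $j=0$ is $(1,\ldots,1)$ (our convention $x^0=1$) and whose row $j\geq 1$ is $(x^j_0(\beta),\ldots,x^j_n(\beta))$, namely the matrix $(x^j_i(\beta))$. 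This gives the first assertion.

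For the second claim, I would exploit the identification $J^1Y\simeq X\times\mathrm{Aff}_X(X,Y)$ stated just before the proposition: $Bary$ sends $y^x_\beta$ to the pair $(\bar x(\beta),y)$, where $y\in\mathrm{Aff}_X(X,Y)$ is the unique affine section with $y(x(v_i))=y^x_{v_i}$. Writing $y^k\circ y(x)=\sum_{j=0}^n a^k_j\, x^j$, the interpolation conditions read as the $m$ parallel linear systems $[x^j_i(\beta)]\,[a^k_j]=[y^k_i]$, which determine $[a^k_j]=[x^j_i(\beta)]^{-1}[y^k_i]$. Evaluating the induced jet coordinates at $x=\bar x(\beta)$ then gives $y^k=\sum_{j=0}^n \bar x^j(\beta)\, a^k_j$ and $\partial_j y^k=a^k_j$ for $j\geq 1$; in matrix form this is exactly the block-triangular factor $\bigl[\begin{smallmatrix} 1 & \bar x^j(\beta) \\ 0 & \Id_n\end{smallmatrix}\bigr]$ of (\ref{partialsimplex2}). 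Composing with $[x^j_i(\beta)]^{-1}$ yields the stated formula, and both matrix factors are invertible (the block-triangular one trivially, the other by part one), so $Bary$ restricts to a linear isomorphism on each fiber.

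Finally, since $\Upsilon^X_{y^x_\beta}$ is affine, its pullback of any translation-invariant form on $X$ is a translation-invariant form on $\Delta_\beta$, hence constant; so $Jac_{y^x_\beta}$ is a constant. To obtain its value, I would introduce affine coordinates $(\lambda_1,\ldots,\lambda_n)$ on $\Delta_\beta$ via $\lambda_0=1-\sum_{i\geq 1}\lambda_i$, obtaining $(\Upsilon^X)^*\dd x^j=\sum_{i=1}^n (x^j_i(\beta)-x^j_0(\beta))\,\dd\lambda_i$. Comparing $(\Upsilon^X)^*\vol^X$ with $\dd\lambda$, both proportional to $\dd\lambda_1\wedge\ldots\wedge \dd\lambda_n$ with $\int_{\Delta_\beta}\dd\lambda=1/n!$, yields $Jac_{y^x_\beta}=c\cdot |\det[x^j_i(\beta)-x^j_0(\beta)]_{j,i=1}^n|$. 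A single column-reduction on the $(n+1)\times(n+1)$ matrix $[x^j_i(\beta)]$, subtracting the $i=0$ column from the others and expanding along the top row of $1$'s, rewrites this as $c\cdot|\det[x^j_i(\beta)]|$, finishing the proof. No substantive obstacle arises beyond keeping the orientation conventions of $\dd\lambda$ consistent so that the absolute value ends up in the right place.
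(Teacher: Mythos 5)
Your proposal is correct and follows essentially the same route as the paper: the same row/column reduction of $(x^j_i)$ to characterize general position, the same identification of $Bary$ with the unique affine interpolating section (you solve the linear system for the affine coefficients where the paper verifies the stated formula via Taylor's expansion at the barycenter — the same computation read in the other direction), and the same constancy-plus-determinant argument for the Jacobian, which you carry out directly in barycentric coordinates instead of citing Lemma \ref{lem511}. No gaps.
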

\begin{proof}
Denote $\Upsilon$ the simplicial interpolator for any given configuration $y^x_\beta$, and denote $\Upsilon^X$ the projected interpolator to the base manifold $X$. As $x^0_i=1$, substracting the $0$-row to the remaining ones we observe that the given matrix $(x^j_i)$ is invertible if and only if $x(v_i)-x(v_0)$ are linearly independent vectors. That is, if and only if all nodes $x(v_i)$ are in general position, which implies the existence of an affine inverse for the mapping $\Upsilon^X$. Conversely, when nodes $x(v_i)$ are not in general position, the associated mapping $\Upsilon^X$ doesn't span the whole affine space $X$, and the inverse of $\Upsilon^X$ does not exist.

Any affine mapping is totally determined by its 1-jet at any point. To prove that equations (\ref{partialsimplex2}) represent $Bary(y^x_\beta)$ it suffices to prove that the components $x^j=\frac1{n+1}\sum_i x^j_i$ represent the barycenter $\bar{x}(\beta)$ (which is well known, the barycenter has as coordinates the mean value of the coordinates of the given nodes), and that the affine mapping determined by (\ref{partialsimplex2}) coincides with the interpolating section, characterized by $y(x(v_i))=y^x_{v_i}$. 

Following Taylor's formula, the affine mapping $y(x)\in\mathrm{Aff}_X(X,Y)$ corresponding to (\ref{partialsimplex2}) is the one given by equations:
	\begin{equation*}
	y^k(x)=\left[ 1\, (x^1-\bar x^1(\beta))\, \ldots\, (x^n-\bar x^n(\beta)) \right]\cdot \left[\begin{matrix} y^k \\ \partial_1 y^k \\ \vdots \\ \partial_n y^k\end{matrix}\right]=\left[ 1\, x^1\, \ldots\, x^n \right]\cdot\cdot \left[ x_i^j(\beta)\right]^{-1}\cdot \left[\begin{matrix} y^k_i\end{matrix}\right]
	\end{equation*}
As $1=x^0$, it turns clear that the affine mapping $y(x)$ determined by (\ref{partialsimplex2}) has, indeed, the property $y(x(v_i))=y^x(v_i)$.

Equations (\ref{partialsimplex2}) are invertible on the fiber of each $\beta\in X_n$, because the used matrices are invertible.

Both $\dd\lambda$ and $(\Upsilon^X)^*\vol^X$ are affine volume elements, and therefore differ by a constant factor. Hence for any fixed $y^x_\beta$, the associated jacobian function $Jac_{y^x_\beta}(\lambda)$ is a constant $Jac_{y^x_\beta}$, determined by $\Upsilon^X$, hence denpending only on the nodes $x(v_i)$. The particular expression for the Jacobian can be obtained from lemma \ref{lem511}. In this particular situation, geodesics joining nodes are affine lines, and tangent vectors $t^X_{v_0v_i}$ turn out to be $x(v_i)-x(v_0)$. Application of the affine volume form $\dd x^1\wedge\ldots\wedge\dd x^n$ is the computation of some  determinant, therefore:
	\begin{equation*}
	\vol^X=c\cdot \dd x^1\wedge\ldots\wedge \dd x^n\Rightarrow Jac_{y^x_\beta}=c\cdot \left| \det(x^j(v_i)-x^j(v_0))_{i,j=1\ldots n}\right|=c\cdot \left| \det(x^j_i)_{i,j=0\ldots n}\right|
	\end{equation*}
\qed
\end{proof}

\begin{remark}
The discrete Lagrangian $L^x$ obtained by simplicial interpolation using the quadrature rule $Q_{mid}$ defined in (\ref{quadratures}) uses the barycenter as unique node, and  turns out to be:
	\begin{equation*}
	L^x(y^x_\beta)=\La(Bary(y^x_\beta))\cdot \vol^x_\beta,\qquad \vol^x_\beta=\int_{K^x_\beta}\vol^X=\frac{c}{n!}\cdot \left| \det(x^j_i)_{i,j=0\ldots n}\right|
	\end{equation*}
Therefore when all nodes associated to $\beta\in X_n$ are in general position, the mapping $Bary$ establishes a one-to-one correspondence between smooth Lagrangian densities and discrete Lagrangian densities, which is the association determined in definition (\ref{inducedsimpgeod}), using any euclidean structure, and the mid-point quadrature rule.
\end{remark}

\section{Example: kinematics of a Cosserat rod}
Consider a 1D filament, whose elements are parameterized by $s\in\mathbb{R}$, and freely moving for time $t\in\RR$, on the euclidean space $\RR^3$. Choose at each filament element (seen as rigid body) a referential centered at its center of mass, and oriented along its principal axes of inertia. That is, the configuration of each element $s$ of the filament at time $t$ is characterized by its spatial position $r(s,t)\in\RR^3$ and orientation $R(s,t)\in SO(3)$.  Further assume that the overall state of the filament is determined by the configuration of all of its elements. The time evolution of this filament (a Cosserat rod) can be seen then as a mapping:
	\begin{equation*}
	(r,R)\colon \mathbb{R}^2_{(s,t)}\rightarrow \mathbb{R}^3\times SO(3)
	\end{equation*}
In this situation $r(s,t)$ represents the location of the {\em centroid} of the filament element $s$ at some given time $t$. The component $R(s,t)$ represents the spatial orientation of the filament element $s$ at time $t$, with $R^t\cdot R=\Id$, $\det R=1$. We refer to the appendix for different properties and notations for $SO(3)$ as Riemannian manifold.

For a particular filament evolution $(r(s,t),R(s,t))$ the components $\frac{\partial r}{\partial s}$, $\frac{\partial r}{\partial t}$ represent, respectively, the gradient of filament element location (linear strain of the filament) and the filament element linear velocity. 

The component $\widehat\Omega=(\dd l_{R^t})\frac{\partial R}{\partial s}\in Skew(3)$ has the physical interpretation of gradient of rigid body configurations (angular strain) at the filament element $s$, at some temporal instant $t$, measured in the referential associated to this filament element. The component $\widehat{\omega}=(\dd l_{R^t})\frac{\partial R}{\partial t}\in Skew(3)$ represents the rigid body angular velocity of the filament element $s$, at some temporal instant $t$, measured in the mentioned referential. We may use the identification $Skew(3)\simeq \RR^3$ each of these elements determine corresponding vectors $\Omega,\omega$ belonging to the Lie algebra $(\mathbb{R}^3,\times)$.

A typical action functional describing the dynamics of this filament \cite{Antman,Demoures15,SimoVuQuoc} is obtained by addition $\mathbb{L}=\mathbb{K}_{lin}+\mathbb{K}_{ang}-\mathbb{E}_{lin}-\mathbb{E}_{ang}-\mathbb{P}_{ot}$. First two components represent respectively linear and angular kinetic energies $\mathbb{K}_{lin}=\int\int\frac12\rho(s)\|(\partial r/\partial t)\|^2\dd s\dd t$ (with non-negative mass density $\rho(s)\geq 0$ at each element $s$), and $\mathbb{K}_{ang}=\int\int\frac12\omega^t J(s)\omega\dd s \dd t$ (with time-independent, diagonal inertia matrix $J(s)=diag(I_1(s),I_2(s),I_3(s))$ with nonnegative $I_1,I_2,I_3\geq 0$ principal moments of inertia). Next two components represent elastic energies $\mathbb{E}_{lin}=\int\int \frac12(R^t(\partial r/\partial s)-e(s))^t\cdot C_1(s)\cdot (R^t(\partial r/\partial s)-e(s))\dd s \dd t$, and $\mathbb{E}_{ang}=\int\int \frac12 \Omega^t\cdot C_2(s)\cdot \Omega\dd s\dd t$ due to linear and angular strains, respectively (with symmetric positive-definite matrices $C_1(s)$ and $C_2(s)$, that determine the elastic properties of filament element $s$, and $e(s)$ the unstressed linear strain associated to this element \cite{Demoures15}). The last component $\mathbb{P}_{ot}=\int\int \frac12 P(s,r)\dd s\dd t$ represents the potential energy associated to the filament element $s$, when located at position $r\in\mathbb{R}^3$, which may be generated by some gravitational or electric field.

We are working with sections of the bundle $Y\rightarrow X$, where $X=\mathbb{R}^2_{(s,t)}$ and $Y=X\times \mathbb{R}^3\times SO(3)$. Taking $\vol^X=\dd s\wedge \dd t$ the lagrangian function for this theory is:
	\begin{equation}\label{lagrelastica}
	\mathcal{L}=\frac12 \left( \rho(s) \cdot\|(\partial r/\partial t)\|^2+  \omega^t J(s)\omega-(R^t(\partial r/\partial s)-e(s))^t\cdot C_1(s)\cdot (R^t(\partial r/\partial s)-e(s))- \Omega^t\cdot C_2(s)\cdot \Omega- P(s,r)\right)
	\end{equation}
To discretize this action functional, consider now the 2D CFK simplicial complex $V$, and the immersion of its vertices $v\in V_0=\ZZ^2$ into $X=\RR^2$ using the mapping $x\colon \ZZ^2\rightarrow \RR^2$ given as 
	\begin{equation}\label{immparticular}
	x(i,j)=(s(i,j),t(i,j))=((i-j)\cdot \frac{\Delta s}{2},(i+j)\cdot \frac{\Delta t}{2})
	\end{equation}
where $\Delta s,\Delta t>0$ determine the level of discretization in the filament and in time, respectively.

Our choice of $x(i,j)$ has the following meaning: Points $(i,j)\in V_0$ with fixed $i+j=c$ will be associated to configurations for fixed time $t=c\cdot \Delta t/2$, of elements uniformly distributed along the filament, at positions $s_0+\Delta s\cdot \mathbb{Z}$. Points $(i,j)\in V_0$ with fixed $i-j=c$ will be associated to configurations of a given filament element $s=c\cdot \Delta s/2$, at different instants $t_0+\Delta t\cdot\mathbb{Z}$, with time-step $\Delta t$. This discretization represents a ``leapfrog'' mechanism, interleaving certain filament elements $s_0+\Delta s\cdot \mathbb{Z}$ at given times $t_0+\Delta t\cdot \ZZ$ alternated with different filament elements $(s_0+\frac{\Delta s}{2})+\Delta s\cdot \mathbb{Z}$ at times $(t_0+\frac{\Delta t}{2})+\Delta t\cdot \ZZ$.

This choice gives a particular meaning to discrete integrators in section \ref{secintegrator}, which are performed from a initial condition band $k-2\leq i+j\leq k+1$ (as indicated in remark \ref{remark4fim}), evolving in the direction $(1,1)$, representing here a position+velocity initial condition at times $t_0$ or $t_0+\frac{\Delta t}{2}$ for filament elements $\frac{\Delta s}{2}\mathbb{Z}$, integrated for increasing values of $i+j$, to determine the evolution of the filament in time (compare with \cite{AdlBobSur04,FadVol94})

Given the choice of immersion $x\colon V_0\hookrightarrow X$, we may discretize the smooth lagrangian density $\mathcal{L}\vol^X$ to obtain a discrete lagrangian density on $Y^x\simeq \ZZ^2\times \RR^3\times SO(3)$ using simplicial geodesic interpolation on $Y=\RR^2\times\RR^3\times SO(3)$. Choosing the Riemannian metric defined by any euclidean structure on $\RR^2$ and $\RR^3$ and the bi-invariant metric on $SO(3)$ induced by the halved Frobenius scalar product (see the Appendix for the properties of this Riemannian manifold), following Lemma \ref{splits} geodesics project into straight lines on the first components $\RR^2$, $\RR^3$ and into geodesics on $SO(3)$, explicityly described in (\ref{geodesicasSO3}).

Following corollary \ref{indexingncells}, faces $\beta\in V_2$ in the 2-D CFK simplicial complex can be indexed by its least-weight vertex $(i,j)$ and a permutation $(1,2)$ or $(2,1)$ of $\mathrm{Sym}(2)$.  Denoting as $+$ the identity permutation and by $-$ the non-identity permutation, any element $\beta\in V_2$ has three vertices (ordered by increasing weight) $\{v_0,v_1,v_2\}=\beta$ on $\ZZ^2$, where $v_2=v_0+(1,1)$ and $v_1$ is either $v_0+(1,0)$ or $v_0+(0,1)$. We may determine all vertices using:
	\begin{equation*}
	(i,j,\pm)\simeq \{v_0,v_1,v_2\}\in X_2\Rightarrow v_0=(i,j),\, v_2=(i,j)+(1,1),\, v_1=\left(i+\frac12,j+\frac12\right)\pm\left(\frac12,\frac{-1}{2}\right)
	\end{equation*}
Consequently for the immersion (\ref{immparticular}), if $x(v_0)=((i-j)\Delta s/2,(i+j)\Delta t/2)=(s_0,t_0)$ then $x(v_2)=(s_0,t_0+\Delta t)$ and $x(v_1)=(s_0\pm \Delta s/2,t_0+\Delta t/2)$, the sign depending on whether $v_1=v_0+(1,0)$ or $v_0+(0,1)$.

Any configuration $y^x_\beta\in Y^x_\beta$ at any $2$-cell $\beta\in X_2$ is then given as a sequence 
	\begin{equation}\label{pbetax}
	y^x_\beta=\left((s_0,t_0,r_0,R_0), (s_0\pm \Delta s/2,t_0+\Delta t/2,r_1,R_1), (s_0,t_0+\Delta t,r_2,R_2)\right)
	\end{equation}
where $\frac{s_0}{\Delta s}\in \frac{1}{2}\mathbb{Z}$, $\frac{t_0}{\Delta t}\in \frac{1}{2}\mathbb{Z}$, $\frac{s_0}{\Delta s}+\frac{t_0}{\Delta t}\in\ZZ$, $r_1,r_2,r_3\in \RR^3$, $R_1,R_2,R_3\in SO(3)$.

\begin{remark}
For any given smooth section $(s,t)\in\RR^2\mapsto (r(s,t),R(s,t))\in Y=\RR^3\times SO(3)$, the induced discrete section $y^x\colon X\rightarrow Y^x$ determines, at each face $\beta\in V_2$, a configuration (\ref{pbetax}) explicitly obtained by:
	\begin{equation}\label{formula63}
	\begin{aligned}
	&(r_0,R_0)=(r(s_0,t_0),R(s_0,t_0)), \quad (r_1,R_1)=(r(s_0\pm\Delta s/2,t_0+\Delta t/2), R(s_0\pm\Delta s/2,t_0+\Delta t/2)),\\
	&	(r_2,R_2)=(r(s_0,t_0+\Delta t),R(s_0,t_0+\Delta t))
	\end{aligned}
	\end{equation}
\end{remark}
\begin{prop}\label{prop62}
For any 2-dimensional simplicial complex, consider any immersion $x\colon V_0\rightarrow \RR^2$ and any 2-cell $\beta\in V_2$ such that its adherent vertices $v_0,v_1,v_2\prec\beta$ (in any order) determine non-collinear nodes $x(v_0),x(v_1),x(v_2)\in\RR^2$. Consider a configuration:
	\begin{equation*}
	y^x_\beta=\left((s_0,t_0,r_0,R_0),(s_1,t_1,r_1,R_1),(s_2,t_2,r_2,R_2) \right)\in Y^x_\beta
	\end{equation*}
where $(s_i,t_i)\in\RR^2$, $r_i\in \RR^3$, $R_i\in SO(3)$, for each $i=0,1,2$.

If $\Trace(R_0^tR_1)>1$ and $\Trace(R_0^tR_2)>1$, the configuration is suited for simplicial geodesic interpolation in the sense of definition \ref{suited}.

Moreover, the linear mapping:
	\begin{equation*}
	\phi_{v_0}\colon T_{x(v_0)}\RR^2\rightarrow T_{y^x_{v_0}}Y=T_{x(v_0)}\RR^2\oplus T_{r_0}\RR^3\oplus T_{R_0}SO(3)\simeq \RR^2\oplus\RR^3\oplus Skew(3)
	\end{equation*}
given in lemma \ref{lem511} is determined by the following matrix (with respect to the basis $\partial/\partial s$, $\partial/\partial t$ on $T\RR^2$):
	\begin{equation}\label{phiSO3}
	\phi_{v_0}^{\RR^3\oplus Skew(3)}=\left[ \begin{matrix} \Delta^{01} r & \Delta^{02}r\\
	\Delta^{01} R & \Delta^{02}R	\end{matrix}\right]\cdot \left[ \begin{matrix} \Delta^{01} x & \Delta^{02}x\end{matrix}\right]^{-1}
	\end{equation}
where we define, for any $i,j=0,1,2$ the column vectors:
	\begin{equation}\label{deltas}
	\Delta^{ij} R=\log R_i^tR_j,\quad \Delta^{ij} r=r_j-r_i,\quad \Delta^{ij} x=x(v_j)-x(v_i)=\left[\begin{matrix} s_j-s_i \\ t_j-t_i\end{matrix}\right]
	\end{equation}
\end{prop}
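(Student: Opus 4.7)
The plan is to establish the two assertions separately, leveraging the product structure of $Y=\RR^2\times\RR^3\times SO(3)$ and the splitting Lemma \ref{splits}.

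First I would verify that $y^x_\beta$ is suited for simplicial geodesic interpolation in the sense of Definition \ref{suited}. By Lemma \ref{splits}, the interpolator on the product $Y$ factors into interpolators on $\RR^2$, $\RR^3$ and $SO(3)$. The Euclidean factors are trivial (affine interpolation, always well defined and smooth). For the $SO(3)$-factor with the bi-invariant metric coming from the halved Frobenius product, the trace condition $\Trace(R_0^tR_j)>1$ translates, via the standard relation $\Trace(R_0^tR_j)=1+2\cos\theta_j$, into a bound on the Riemannian distance from $R_0$ to $R_j$ strictly inside the cut locus, and well within the ball of the injectivity radius where $\exp$ is a diffeomorphism and $\log$ is smooth. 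Combined with the standard sectional-curvature bound for $SO(3)$, this lets me invoke Lemma \ref{sectionalcurvatures} to conclude that $\Upsilon_{y^x_\beta}$ is well defined and smooth.

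Next, to show $\Upsilon^X_{y^x_\beta}$ is an injective immersion, I note that $\pi_X\colon Y\to X$ is projection onto the $\RR^2$-factor, which again by Lemma \ref{splits} commutes with simplicial geodesic interpolation. Thus $\Upsilon^X_{y^x_\beta}$ is the ordinary affine parameterization $\Delta_\beta\to\RR^2$ of the triangle with vertices $x(v_0),x(v_1),x(v_2)$. Non-collinearity of these three nodes is exactly the condition that this affine map is of maximal rank and injective, giving the required immersion onto $K_{y^x_\beta}$.

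For the matrix formula for $\phi_{v_0}$, I would apply Lemma \ref{lem511}: $\phi_{v_0}$ is the unique linear map sending each $t^X_{v_0 v_j}\in T_{x(v_0)}\RR^2$ to the initial tangent $t_{v_0 v_j}\in T_{y^x_{v_0}}Y$ of the minimal constant-speed geodesic from $y^x_{v_0}$ to $y^x_{v_j}$. Using the factorization of geodesics on the product, $t_{v_0 v_j}$ has base component $\Delta^{0j}x$, Euclidean-fibre component $\Delta^{0j}r$, and $SO(3)$-component $R_0\cdot\log(R_0^tR_j)\in T_{R_0}SO(3)$, which under the left-translation identification $T_{R_0}SO(3)\simeq Skew(3)$ becomes precisely $\Delta^{0j}R=\log R_0^tR_j$. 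Writing $\phi_{v_0}$ in the standard basis of $T_{x(v_0)}\RR^2$ gives the matrix equation
\begin{equation*}
\phi_{v_0}^{\RR^3\oplus Skew(3)}\cdot\bigl[\Delta^{01}x\ \ \Delta^{02}x\bigr]
=\left[\begin{matrix}\Delta^{01}r & \Delta^{02}r\\ \Delta^{01}R & \Delta^{02}R\end{matrix}\right],
\end{equation*}
and inverting the $2\times 2$ matrix $[\Delta^{01}x\ \Delta^{02}x]$ (invertible by non-collinearity) yields formula (\ref{phiSO3}).

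The main obstacle I anticipate is the first part, namely pinning down the geodesic-ball hypothesis of Lemma \ref{sectionalcurvatures} from the purely pairwise trace conditions given in the statement. This requires careful bookkeeping of the halved Frobenius normalization (so that the maximum sectional curvature, the injectivity radius, and the diameter of the ball containing $\{R_0,R_1,R_2\}$ are mutually compatible), whereas the computation of $\phi_{v_0}$ is essentially a direct consequence of Lemmas \ref{splits} and \ref{lem511} together with the explicit form of geodesics on $SO(3)$ recalled in the appendix.
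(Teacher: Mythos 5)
Your proposal is correct and follows essentially the same route as the paper's proof: split the interpolator via Lemma \ref{splits}, use the trace condition to place $R_1,R_2$ in a geodesic ball of radius $\rho<\pi/2$ about $R_0$ so that the curvature bound $K=1/4<(\pi/4\rho)^2$ lets you invoke Lemma \ref{sectionalcurvatures}, get the immersion from non-collinearity of the nodes, and read off $\phi_{v_0}$ from the geodesic tangent vectors $(\Delta^{0j}x,\Delta^{0j}r,\Delta^{0j}R)$ followed by inversion of $[\Delta^{01}x\ \ \Delta^{02}x]$. The ``bookkeeping'' you flag as the main obstacle is exactly the short computation the paper carries out, so there is no gap.
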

\begin{proof}
For the points $R_0,R_1,R_2$ given on the Riemannian manifold $SO(3)$ (with the bi-invariant metric induced by the halved Frobenius product), geodesic distances $dist(R_i,R_j)=d_{ij}$ satisfy $1+2\cos d_{ij}=tr(R_i^tR_j)>1$. We can then warrant that all the three configurations on $SO(3)$ belong to some geodesic ball with radius $\rho<\pi/2$ (it suffices to take $R_0$ as center of the ball).

Geodesic interpolation with respect to some Euclidean structure on $\RR^2$ and $\RR^3$ is simply affine interpolation, and is always well defined. 

The Riemannian manifold $SO(3)$ has a constant sectional curvature $K=1/4$.
We may then observe for $\rho<\pi/2$ that $\left(\frac{\pi}{4\rho}\right)^2>(1/2)^2=1/4=K$. Following lemma \ref{sectionalcurvatures}, we conclude that simplicial geodesic interpolation associated to $y^x_\beta$ is well-defined on the $SO(3)$ component.

Therefore following lemma \ref{splits}, geodesic simplicial interpolation is well defined for the given configuration (\ref{pbetax}) on the product manifold.

Geodesic interpolation projected to $X$ determines $\Upsilon^X_{y_\beta^x}\colon \Delta_\beta\rightarrow X=\RR^2$, a simple affine interpolation transforming the vertices of the simplex $\Delta_\beta$ into the nodes $x(v_0),x(v_1),x(v_2)$. As these nodes are not collinear, we conclude the regularity of $\Upsilon^X_{y_\beta^x}$. 

For any edge $\{v_i,v_j\}$ adherent to $\beta$, the geodesic joining the $v_i$-configuration $(s_i,t_i,r_i,R_i)$ to the $v_j$-configuration $(s_j,t_j,r_j,R_j)$ on $Y$ is given by straight lines on the $\RR^2\times\RR^3$-component and by (\ref{geodesicasSO3}) on the $SO(3)$-component, hence using the above mentioned identification $T_{R_i}SO(3)\simeq Skew(3)$ this geodesic has as tangent vector $(x(v_j)-x(v_i),r_j-r_i,\log R_i^tR_j)=(\Delta^{ij}x,\Delta^{ij}r,\Delta^{ij}R)$.

Using these tangent vectors, and following the characterization given in definition \ref{inducedsimpgeod}, the linear mapping $\phi_{v_0}$ should transform the tangent vector $\Delta^{01}x$ into $(\Delta^{01}r,\Delta^{01}R)$, and $\Delta^{02}x$ into $(\Delta^{02}r,\Delta^{02}R)$, which leads to the expressions in our statement.\qed 
\end{proof}
\begin{corollary}
For the CFK 2-dimensional simplicial complex $V$, for the particular immersion $x\colon V_0\rightarrow X=\mathbb{R}^2$ determined by (\ref{immparticular}), for any configuration $y^x_\beta\in Y^x_\beta$ given by (\ref{pbetax}) and in the case that two of the inequalities $\Trace(R_0^tR_1)>1$, $\Trace(R_1^tR_2)>1$, $\Trace(R_2^tR_0)>1$ hold, the configuration is suited for simplicial geodesic interpolation, and the linear mappings 
	\begin{equation*}
	\phi_{v_i}\colon T_{x(v_i)}\RR^2\rightarrow T_{y^x_{v_i}}Y=T_{x(v_i)}\RR^2\oplus T_{r_i}\RR^3\oplus T_{R_i}SO(3)\simeq \RR^2\oplus\RR^3\oplus Skew(3)
	\end{equation*}
given in definition \ref{inducedsimpgeod} have, in the basis $\partial/\partial s$, $\partial/\partial t$, the following components:
	\begin{equation}\label{phiv1}
	\begin{aligned}
	&\phi_{v_0}^{\RR^3\oplus Skew(3)}=\left[ \begin{matrix} \frac{2\Delta^{01}r-\Delta^{02}r}{\pm \Delta s} & \frac{\Delta^{02}r}{\Delta t} \\
	\frac{2\Delta^{01}R-\Delta^{02}R}{\pm \Delta s} & \frac{\Delta^{02}R}{\Delta t}
	\end{matrix}\right]\quad 
	&\phi_{v_2}^{\RR^3\oplus Skew(3)}=\left[ \begin{matrix} \frac{2\Delta^{21}r-\Delta^{20}r}{\pm \Delta s} & \frac{-\Delta^{20}r}{\Delta t} \\
	\frac{2\Delta^{21}R-\Delta^{20}R}{\pm \Delta s} & \frac{-\Delta^{20}R}{\Delta t}
	\end{matrix}\right]\\
	&\phi_{v_1}^{\RR^3\oplus Skew(3)}=\left[ \begin{matrix} \frac{-\Delta^{10}r-\Delta^{12}r}{\pm \Delta s} & \frac{\Delta^{12}r-\Delta^{10}r}{\Delta t} \\
	\frac{-\Delta^{10}R-\Delta^{12}R}{\pm \Delta s} & \frac{\Delta^{12}R-\Delta^{10}R}{\Delta t}
	\end{matrix}\right] &
	\end{aligned}
	\end{equation}
where $\Delta^{ij}R$, $\Delta^{ij}r$ are defined by (\ref{deltas})
\end{corollary}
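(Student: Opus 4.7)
The plan is to first verify that, under the hypothesis that two of the three trace inequalities hold, the configuration satisfies the hypotheses of Proposition~\ref{prop62} (up to relabelling the vertex playing the role of $R_0$), so that simplicial geodesic interpolation is well defined; and then to read off the three matrices $\phi_{v_i}$ from the characterization used in Lemma~\ref{lem511}, namely that $\phi_{v_i}$ is the unique linear map sending each base-space edge vector $\Delta^{ij}x$ incident to $v_i$ to its $Y$-lift $(\Delta^{ij}r,\Delta^{ij}R)$.

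For suitedness I would argue as follows. The three conditions $\Trace(R_a^tR_b)>1$ correspond to the three unordered pairs $\{0,1\},\{0,2\},\{1,2\}$; any two of them share a common index $k\in\{0,1,2\}$, and for that $k$ both other rotations lie in the open geodesic ball of radius $<\pi/2$ around $R_k$ in $SO(3)$. Invoking Proposition~\ref{prop62} with $R_k$ in the role of ``$R_0$'' then gives well-definedness of simplicial geodesic interpolation on the $SO(3)$ factor, while the $\mathbb{R}^2\times\mathbb{R}^3$ factor poses no difficulty by Lemma~\ref{splits}. Non-collinearity of the nodes follows at once: $x(v_1)-x(v_0)=(\pm\Delta s/2,\Delta t/2)$ and $x(v_2)-x(v_0)=(0,\Delta t)$ are linearly independent whenever $\Delta s,\Delta t>0$.

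To obtain the matrix of $\phi_{v_0}$ I would substitute the base increments into formula (\ref{phiSO3}):
\begin{equation*}
[\Delta^{01}x\ \Delta^{02}x]=\begin{bmatrix}\pm\Delta s/2 & 0\\ \Delta t/2 & \Delta t\end{bmatrix}
\quad\Longrightarrow\quad
[\Delta^{01}x\ \Delta^{02}x]^{-1}=\begin{bmatrix}2/(\pm\Delta s) & 0\\ -1/(\pm\Delta s) & 1/\Delta t\end{bmatrix},
\end{equation*}
whose right multiplication by $[\Delta^{01}r\ \Delta^{02}r]$ produces exactly the row $\bigl((2\Delta^{01}r-\Delta^{02}r)/(\pm\Delta s),\ \Delta^{02}r/\Delta t\bigr)$ of $\phi_{v_0}$ stated in (\ref{phiv1}), and identically for the $R$-row. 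For $\phi_{v_1}$ and $\phi_{v_2}$ the same characterization from Lemma~\ref{lem511} applies at each vertex in terms of the two edges of $\beta$ incident to it; I would invert the corresponding $2\times 2$ matrix of base increments (using $x(v_2)-x(v_1)=(\mp\Delta s/2,\Delta t/2)$ and its siblings) and multiply by the matrix of $Y$-side increments, obtaining the remaining rows of (\ref{phiv1}).

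The main obstacle is pure bookkeeping with signs: the $\pm$ convention toggles according to whether $v_1=v_0+(1,0)$ or $v_1=v_0+(0,1)$, and it propagates through the determinants of the base-increment matrices into the prefactors of the final expressions. Once those signs are tracked consistently, verifying (\ref{phiv1}) is routine $2\times 2$ linear algebra.
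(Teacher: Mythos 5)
Your proposal is correct and follows essentially the same route as the paper's own (very terse) proof: apply Proposition~\ref{prop62} with each vertex in turn playing the role of $v_0$ (choosing the common index of the two satisfied trace inequalities to get suitedness, since $\Trace(R_i^tR_j)=\Trace(R_j^tR_i)$), and then invert the corresponding $2\times 2$ matrix of base increments in formula (\ref{phiSO3}). Your explicit sign bookkeeping and the computed inverses check out against (\ref{phiv1}).
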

\begin{proof}
Apply proposition \ref{prop62}, ordering the vertices so that $v_0$ is any of the three possible choices $v\prec\beta$, and take into account the particular matrix $\left[ \begin{matrix} \Delta^{ij} x & \Delta^{ik}x\end{matrix}\right]^{-1}$, for each of these choices, in formula (\ref{phiSO3})\qed
\end{proof}

\begin{remark}
Some interesting remarks about elements defined in (\ref{deltas}), for our particular immersion (\ref{immparticular}), are $\Delta^{ij}r=-\Delta^{ji}r$, $\Delta^{ki}r+\Delta^{ij}r=\Delta^{kj}r$, the same holds for $\Delta^{ij}x$, and therefore we may observe that $\phi^{\RR^3}_{v_0}=\phi^{\RR^3}_{v_1}=\phi^{\RR^3}_{v_2}$ (this is because the geodesic interpolator generates affine mappings, whose differential is the same at any vertex). The same is not true for the $SO(3)$-component. From $\log R^t=-\log R$ one may derive $\Delta^{ij}R=-\Delta^{ji}R$. Using $\log\circ \Ad_R=\ad_R\circ \log$ we also derive $\ad_{R_i^tR_j}(\Delta^{ij}R)=\Delta^{ij}R$. This aspect is useful to simplify the computation of $\Delta^{ij}R$ (which turns out to be invariant by the action of $R_i^tR_j$, with a norm $d_{ij}$ given by $1+2\cos d_{ij}=\Trace(R_i^tR_j)$). However as $SO(3)$ is a non-commutative group, the relation $\expp \Delta^{ki}R\cdot \expp \Delta^{ij}R\cdot \expp \Delta^{jk}R=\Id$ does not generally imply $\Delta^{ki}R+\Delta^{ij}R=\Delta^{kj}R$, in particular the components $\phi^{Skew(3)}_{v_0}$, $\phi^{Skew(3)}_{v_1}$, $\phi^{Skew(3)}_{v_2}$ are not equal.
\end{remark}
Taking into account that all faces $\beta\in V_2$ are transformed into triangles with area $ \vol^x_\beta=\frac{\Delta s\cdot \Delta t}{4}$ by our particular immersion $x$ (given in (\ref{immparticular})), we may derive explicit expressions for the discrete Lagrangian associated to (\ref{lagrelastica}). This is achieved with definition \ref{inducedsimpgeod} if we choose any quadrature rule $Q$ as given in (\ref{quadratures}) with a single node or with 3 nodes.

In a simple case, we may use a 0-order rule $Q_v$, making a choice of some vertex $v(\beta)$ for each 2-cell $\beta$.
We shall take the node $v(\beta)=v_0$ at each 2-cell, for the quadrature rule. In this case, taking $\partial /\partial s$, $\partial/\partial t$ as basis of $T_{x(v(\beta))}\RR^2$ and the identification $T_{r(v(\beta))}\RR^3\oplus T_{R(v(\beta))} SO(3)=\RR^3\oplus Skew(3)$ we get from (\ref{phiv1}):
	\begin{equation*}
	\phi_{v(\beta)}=\left[ \begin{matrix} \frac{2\Delta^{01} r-\Delta^{02} r}{\pm \Delta s} & \frac{\Delta^{02} r}{\Delta t} \\
	\frac{2\Delta^{01} R-\Delta^{02} R}{\pm \Delta s} & \frac{\Delta^{02} R}{\Delta t}\end{matrix}\right]\in \mathrm{Hom}(\RR^2,\RR^3\oplus Skew(3))
	\end{equation*}
This linear mapping $\phi_{v(\beta)}$ at the configuration given in (\ref{formula63}) has the following components, in terms of $r(s,t),R(s,t)$:
	\begin{equation*}
	\begin{aligned}
	\frac{2\Delta^{01} r-\Delta^{02} r}{\pm \Delta s}&=\frac{r(s_0\pm\Delta s/2,t_0+\Delta t/2)-\frac12(r(s_0,t_0)+r(s_0,t_0+\Delta t))}{\pm \Delta s/2} \\
	\frac{2\Delta^{01} R-\Delta^{02} R}{\pm \Delta s}&=\frac{\log R(s_0,t_0)^tR(s_0\pm \Delta s/2,t_0+\Delta t/2)-\frac12 \log R(s_0,t_0)^tR(s_0,t_0+\Delta t)}{\pm \Delta s/2} \\
	\frac{\Delta^{02} r}{\Delta t}&=\frac{r(s_0,t_0+\Delta t)-r(s_0,t_0)}{\Delta t} \\
	\frac{\Delta^{02} R}{\Delta t}&=\frac{\log R(s_0,t_0)^tR(s_0,t_0+\Delta t)}{\Delta t}
	\end{aligned}
	\end{equation*}
allowing to interpret $\phi_{v(\beta)}$ as discrete version of $\partial(r,R)/\partial(s,t)$, computed at the point $(s_0,t_0+\Delta t/2)$.

\begin{prop}
Taking a quadrature rule $Q_{v(\beta)}$ with unique node at $v(\beta)=v_0=(i,j)$, for each $\beta=(i,j,\pm)\in X_2$, the corresponding discrete Lagrangian $$(s_0,t_0,r_0,R_0,s_1,t_1,r_1,R_1,s_2,t_2,r_2,R_2)\in Y^x_\beta\mapsto L^x_\beta (s_0,t_0,r_0,R_0,s_1,t_1,r_1,R_1,s_2,t_2,r_2,R_2)\in \RR$$ obtained by simplicial geodesic interpolation (definition \ref{inducedsimpgeod}) applied to the Lagrangian density (\ref{lagrelastica}) is $L^x=K^x_{lin}+K^x_{ang}-E^x_{lin}-E^x_{ang}-P_{ot}^x$ where each component is given as:
	\begin{equation*}
	K^x_{lin}=\frac12 \rho(s_0)\cdot \left\| \frac{\Delta^{02}r}{\Delta t}\right\|^2 \cdot \frac{\Delta s \Delta t}{4}
,\qquad
	K^x_{ang}=\frac12 \left( \frac{\Delta^{02}R}{\Delta t}\right)^t\cdot J(s_0)\cdot  \left( \frac{\Delta^{02}R}{\Delta t}\right) \cdot \frac{\Delta s \Delta t}{4}
	\end{equation*}
	\begin{equation*}
	E^x_{lin}=\frac12 \left(R_0^t\frac{2\Delta^{01}r-\Delta^{02}r}{\pm\Delta s} -e(s_0)\right)^t\cdot  C_1(s_0)\cdot \left(R_0^t\frac{2\Delta^{01}r-\Delta^{02}r}{\pm\Delta s}-e(s_0) \right) \cdot \frac{\Delta s \Delta t}{4}
	\end{equation*}
	\begin{equation*}
	E^x_{ang}=\frac12\left(\frac{2\Delta^{01}R-\Delta^{02}R}{\pm\Delta s}\right)^t\cdot C_2(s_0) \cdot \left(\frac{2\Delta^{01}R-\Delta^{02}R}{\pm\Delta s}\right)\frac{\Delta s \Delta t}{4},\qquad P_{ot}^x=\frac12 P(s_0,r_0)\frac{\Delta s \Delta t}{4}
	\end{equation*}
\end{prop}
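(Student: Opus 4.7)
The proof is essentially a direct substitution that combines Definition~\ref{inducedsimpgeod}, Lemma~\ref{lem511}, and the explicit formula (\ref{phiv1}). The plan is to apply the general identity $L^x(y^x_\beta) = Q(h_{y^x_\beta})$ with the single-node rule $Q_{v(\beta)}$ of weight $1$ at $v_0$, so that
\begin{equation*}
L^x(y^x_\beta) \;=\; \tfrac{1}{2!}\,h_{y^x_\beta}(v_0) \;=\; \tfrac12\,\La\!\left(j^1_{x(v_0)}y\right)\cdot Jac_{y^x_\beta}(v_0),
\end{equation*}
where $y$ denotes the interpolating section determined by $y^x_\beta$. The only two quantities to compute are the Jacobian and the 1-jet of $y$, both at the single node $v_0$.

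For the Jacobian, Lemma~\ref{lem511} gives $Jac_{y^x_\beta}(v_0)=|\vol^X_{x(v_0)}(t^X_{v_0v_1},t^X_{v_0v_2})|$ with $t^X_{v_0v_j}=\Delta^{0j}x$. For the immersion (\ref{immparticular}) one has $\Delta^{01}x=(\pm\Delta s/2,\Delta t/2)$ and $\Delta^{02}x=(0,\Delta t)$, whose determinant equals $\Delta s\,\Delta t/2$ in either orientation $\pm$; together with the $\frac12$ from the quadrature rule this produces the uniform prefactor $\Delta s\,\Delta t/4$ appearing in every term. For the 1-jet, Lemma~\ref{splits} splits the simplicial geodesic interpolator into the three Riemannian factors $\RR^2$, $\RR^3$, $SO(3)$. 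On $\RR^3$ the geodesic from $r_0$ to $r_j$ is affine with initial velocity $\Delta^{0j}r$, while on $SO(3)$ with the halved-Frobenius bi-invariant metric the geodesic from $R_0$ to $R_j$ has initial velocity $R_0\cdot\log R_0^tR_j$, left-trivializing via $\dd l_{R_0^t}$ to $\Delta^{0j}R$. Specialising (\ref{phiv1}) at $v=v_0$ then identifies the two columns of $\phi_{v_0}$ in the basis $\partial/\partial s,\partial/\partial t$ with the body-frame quantities of the interpolating section at $(s_0,t_0)$:
\begin{equation*}
\tfrac{\partial r}{\partial s} = \tfrac{2\Delta^{01}r-\Delta^{02}r}{\pm\Delta s},\quad \tfrac{\partial r}{\partial t} = \tfrac{\Delta^{02}r}{\Delta t},\quad \Omega = \tfrac{2\Delta^{01}R-\Delta^{02}R}{\pm\Delta s},\quad \omega = \tfrac{\Delta^{02}R}{\Delta t}.
\end{equation*}

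With these in hand, substituting into each of the five summands of (\ref{lagrelastica}) at the base point $(s_0,t_0,r_0,R_0)$ and multiplying by $\Delta s\,\Delta t/4$ yields $K^x_{lin}$, $K^x_{ang}$, $E^x_{lin}$, $E^x_{ang}$ and $P^x_{ot}$ precisely as stated; the potential term collapses to $\frac12 P(s_0,r_0)\cdot \Delta s\,\Delta t/4$ because the interpolating section evaluated at $x(v_0)$ gives back $(r_0,R_0)$. The only real subtlety is matching the $Skew(3)$-columns of $\phi_{v_0}$ to the body-frame tensors $\omega,\Omega$ that appear in (\ref{lagrelastica}); this rests on the identification $T_{R_0}SO(3)\simeq Skew(3)$ via left translation, which is exactly the convention already fixed in formula (\ref{phiSO3}). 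The sign $\pm$ associated with the orientation label of $\beta=(i,j,\pm)$ propagates only into the strain denominators and is transparent in the quadratic forms built with $C_1(s_0)$ and $C_2(s_0)$, while the Jacobian is insensitive to it.
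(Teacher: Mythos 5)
Your proposal is correct and follows essentially the same route as the paper, which presents this proposition as an immediate consequence of the preceding computation of $\phi_{v(\beta)}$ at $v_0$ via (\ref{phiv1}), together with Definition \ref{inducedsimpgeod} and Lemma \ref{lem511}; your Jacobian value $\Delta s\,\Delta t/2$, the quadrature factor $1/2!$, and the identification of the columns of $\phi_{v_0}$ with the discretized $\partial r/\partial s$, $\partial r/\partial t$, $\Omega$, $\omega$ all check out. The only difference is that you spell out explicitly what the paper leaves implicit.
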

Any of these discrete lagrangians is obtained applying a 0-order quadrature rule to a given Lagragian density that was invariant with respect to euclidean transformations of the space where the filament evolves. Therefore, the associated discrete action functional maintains the group of euclidean transformations as symmetries. Discrete and smooth conservation laws arise in both formalisms, with a meaning of linear momentum-work equilibrium conditions, and angular momentum-work equilibrium conditions.

In the case that $y^x_\beta\in Y^x_\beta$ is induced by a particular section $y^x\colon (i,j)\in\ZZ^2\mapsto (s,t,r,R)(i,j)\in Y^x$, for any 2-cell $\beta=(i,j,\pm)$ we must observe that:
	\begin{equation*}
	s_0(y^x_\beta)=(i-j)\frac{\Delta s}{2},\quad t_0(y^x_\beta)=(i+j)\frac{\Delta t}{2}
	\end{equation*}
	\begin{equation*}
	r_0(y^x_\beta)=r(i,j),\quad r_1(y^x_\beta)=r\left((i+1/2,j+1/2)\pm(1/2,-1/2)\right),\quad r_2(y^x_\beta)=r(i+1,j+1)
	\end{equation*}
	\begin{equation*}
	R_0(y^x_\beta)=R(i,j),\quad R_1(y^x_\beta)=R\left((i+1/2,j+1/2)\pm(1/2,-1/2)\right),\quad R_2(y^x_\beta)=R(i+1,j+1)
	\end{equation*}
which may be substituted into our discrete Lagrangians using (\ref{deltas}), to determine the discrete action functional on the discrete section $y^x$.

In order to derive explicit expressions for the conservation laws (\ref{32a}) and for discrete Euler-Lagrange equations (\ref{ELbis}), it is necessary to compute the differential of the discrete Lagrangian. The differential $\dd_{r_\beta,R_\beta}L$ of any function $L\colon Y^x_3\rightarrow \RR$ splits into two components, one on $\RR^3\oplus\RR^3\oplus \RR^3$ and the second one on $Skew(3)\oplus Skew(3)\oplus Skew(3)\simeq \RR^3\oplus\RR^3\oplus\RR^3$. This splitting is associated to the direct product structure $Y^x=V_0\times(\RR^3\times SO(3))$ and to the identifications $T_{r_i}\RR^3=\RR^3$, $T_{R_i}SO(3)=Skew(3)\simeq \RR^3$. We denote both components by $\dd^{\RR^3}$ and $\dd^{Sk}$.

For the linear kinetic and potential $K^x_{lin}$, $P_{ot}^x$, which don't depend on the $SO(3)$-valued component, this differential is easy to compute taking into account that $\left(\dd_{(r_0,r_1)}\Delta^{01}r\right)(e_0,e_1)=e_1-e_0$. Using the Euclidean metric to identify $V\RR^3$ with its dual space we obtain:
	\begin{equation*}
	\begin{aligned}
	\dd^{\RR^3}_{y_\beta^x}K^x_{lin}\circ i_{y^x_{v_0}}^{y^x_\beta}& =-\dd^{\RR^3}_{y_\beta^x}K^x_{lin}\circ i_{y^x_{v_2}}^{y^x_\beta}=\frac{\Delta s}{4}\rho(s_0)\cdot \frac{\Delta^{02}r}{\Delta t},\qquad 
	\dd^{\RR^3}_{y_\beta^x}K^x_{lin}\circ i_{y^x_{v_1}}^{y^x_\beta}=0\\
	\dd^{\RR^3}_{y_\beta^x}P^x_{ot}\circ i_{y^x_{v_0}}^{y^x_\beta}&=\frac{\Delta s\Delta t}{8} \nabla P(s_0,r_0),\quad \dd^{\RR^3}_{y_\beta^x}P^x_{ot}\circ i_{y^x_{v_1}}^{y^x_\beta}=\dd^{\RR^3}_{y_\beta^x}P^x_{ot}\circ i_{y^x_{v_2}}^{y^x_\beta}=0,\qquad 
	\\
	\dd^{Sk}_{y_\beta^x}K^x_{lin}\circ i_{y^x_{v_0}}^{y^x_\beta}&=\dd^{Sk}_{y_\beta^x}K^x_{lin}\circ i_{y^x_{v_1}}^{y^x_\beta}=\dd^{Sk}_{y_\beta^x}K^x_{lin}\circ i_{y^x_{v_2}}^{y^x_\beta}=0\\
	\dd^{Sk}_{y_\beta^x}P^x_{ot}\circ i_{y^x_{v_0}}^{y^x_\beta}&=\dd^{Sk}_{y_\beta^x}P^x_{ot}\circ i_{y^x_{v_1}}^{y^x_\beta}=\dd^{Sk}_{y_\beta^x}P^x_{ot}\circ i_{y^x_{v_2}}^{y^x_\beta}=0
	\end{aligned}
	\end{equation*}
denoting by $\nabla P(s,r)$ the gradient of the function $e\mapsto P(s,e)$, at any point $r\in\RR^3$

For the angular kinetic energy dependence on $SO(3)$ is only through $R_0$, therefore identifying $V_{R_0}(SO(3))\simeq \RR^3$ with its dual, using the Euclidean metric (or  Halved Frobenius when interpreted on $Skew(3)$) we get:
	\begin{equation*}
	\begin{aligned}
	\dd^{\RR^3}_{y_\beta^x}E^x_{lin}\circ i_{y^x_{v_0}}^{y^x_\beta}&=\dd^{\RR^3}_{y_\beta^x}E^x_{lin}\circ i_{y^x_{v_2}}^{y^x_\beta}=\frac{\mp\Delta t}{4}R_0\cdot C_1(s_0)\cdot \left( R_0^t\cdot \frac{2\Delta^{01}r-\Delta^{02}r}{\pm \Delta s}-e(s_0)\right)\\
	\dd^{\RR^3}_{y_\beta^x}E^x_{lin}\circ i_{y^x_{v_1}}^{y^x_\beta}&=\frac{\pm\Delta t}{2}R_0\cdot C_1(s_0)\cdot \left( R_0^t\cdot \frac{2\Delta^{01}r-\Delta^{02}r}{\pm \Delta s}-e(s_0)\right)
	\end{aligned}
	\end{equation*}
We may use that $e^t\cdot (v\times w)=-v^t\cdot (e\times w)$ on $\RR^3$ to obtain $e^t\cdot R\cdot \widehat{v}=-v^t\cdot \widehat{R^te}$, for any pair of vectors $e,v\in\RR^3$, and also $\widehat{R^t e}=R^t\cdot \widehat{e}\cdot R$. Hence:
	\begin{equation*}
	\begin{aligned}
	\dd^{Sk}_{y_\beta^x}E^x_{lin}\circ i_{y^x_{v_0}}^{y^x_\beta}&=\frac{\mp\Delta t}{4} R_0^t\cdot \widehat{(2\Delta^{01}r-\Delta^{02}r)}\cdot R_0\cdot C_1(s_0)\cdot \left( R_0^t\cdot \frac{2\Delta^{01}r-\Delta^{02}r}{\pm \Delta s}-e(s_0)\right)\\
	\dd^{Sk}_{y_\beta^x}E^x_{lin}\circ i_{y^x_{v_1}}^{y^x_\beta}&=\dd^{Sk}_{y_\beta^x}E^x_{lin}\circ i_{y^x_{v_2}}^{y^x_\beta}=0
	\end{aligned}
	\end{equation*}

For the angular kinetic and internal energies $K^x_{ang}$, $E^x_{ang}$, which don't depend on the $\RR^3$-valued component, and where the dependence on the $SO(3)$-valued component is given in terms of $\Delta^{01}R$ and $\Delta^{02}R$, the differential may be expressed as $\RR^3$-valued function as indicated in (\ref{lema65apend}) using the identifications of $T_RSO(3)$ with $\RR^3$ (a different characterization can be found in \cite{Moakher02}, using the metric to identify linear transformations in $\RR^3$ with vectors in $\RR^3$):

Taking expression (\ref{dlog}) for $d_{(R_i,R_j)}\Delta R$, we may now derive the differential of the discrete angular kinetic lagrangian:
	\begin{equation*}
	\begin{aligned}
	\dd^{Sk}_{y_\beta^x}		K^x_{ang}\circ i_{y^x_{v_0}}^{y^x_\beta}&=
	 \frac{-\Delta s}{4}\left((\dlog \Delta^{02}R)\cdot J(s_0)\cdot \frac{\Delta^{02}R}{\Delta t}\right),\quad 
	\dd^{Sk}_{y_\beta^x}		K^x_{ang}\circ i_{y^x_{v_1}}^{y^x_\beta}=0\\
	\dd^{Sk}_{y_\beta^x}		K^x_{ang}\circ i_{y^x_{v_2}}^{y^x_\beta}&=
	\frac{\Delta s}{4}\left((\dlog \Delta^{02}R)^t\cdot J(s_0)\cdot \frac{\Delta^{02}R}{\Delta t}\right) \\
	\dd^{\RR^3}_{y_\beta^x}		K^x_{ang}\circ i_{y^x_{v_0}}^{y^x_\beta}&=\dd^{\RR^3}_{y_\beta^x}		K^x_{ang}\circ i_{y^x_{v_1}}^{y^x_\beta}=\dd^{\RR^3}_{y_\beta^x}		K^x_{ang}\circ i_{y^x_{v_2}}^{y^x_\beta}=0\end{aligned}
	\end{equation*}
We obtain using againg (\ref{dlog}) the following expression for the differential of the discrete internal energy associated to angular strain:
	\begin{equation*}
	\begin{aligned}
	\dd^{Sk}_{y_\beta^x}		E^x_{ang}\circ i_{y^x_{v_0}}^{y^x_\beta}&=\frac{\mp \Delta t}{4} \cdot \left(2\dlog \Delta^{01}R-\dlog\Delta^{02}R\right)\cdot C_2(s_0)\cdot \left(\frac{2\Delta^{01}R-\Delta^{02}R}{\pm \Delta s}\right) \\
	\dd^{Sk}_{y_\beta^x}		E^x_{ang}\circ i_{y^x_{v_1}}^{y^x_\beta}&=\frac{\pm \Delta t}{2} \cdot \left(\dlog \Delta^{01}R\right)^t\cdot C_2(s_0)\cdot \left(\frac{2\Delta^{01}R-\Delta^{02}R}{\pm \Delta s}\right) \\
	\dd^{Sk}_{y_\beta^x}		E^x_{ang}\circ i_{y^x_{v_2}}^{y^x_\beta}&=\frac{\mp \Delta t}{4} \cdot \left(\dlog \Delta^{02}R\right)^t\cdot C_2(s_0)\cdot \left(\frac{2\Delta^{01}R-\Delta^{02}R}{\pm \Delta s}\right) \\
		\dd^{\RR^3}_{y_\beta^x}		E^x_{ang}\circ i_{y^x_{v_0}}^{y^x_\beta}&=\dd^{\RR^3}_{y_\beta^x}		E^x_{ang}\circ i_{y^x_{v_1}}^{y^x_\beta}=\dd^{\RR^3}_{y_\beta^x}		E^x_{ang}\circ i_{y^x_{v_2}}^{y^x_\beta}=0
		\end{aligned}
	\end{equation*}
These particular expressions allow to derive discrete Euler-Lagrange equations (\ref{ELbis}), whose solutions satisfy the discrete Noether conservation laws (\ref{NoetConsLaw}), associated to any infinitesimal euclidean movement on $\mathbb{R}^3$ (which is always a symmetry of the smooth Lagrangian (\ref{lagrelastica})).

Without explicitly giving the system of discrete Euler-Lagrange equations (with a large number of components), its numerical integration can be executed in a simple way. For the direction $v\to w$ given by $v=v_0=(i_0,j_0)$, $w=v_2=(i_0+1,j_0+1)$ the value $\mu_{v,w}$ in (\ref{defmomentum}) can be computed, from the explicit expressions given above for the differential of the Lagrangian, if we know the configuration at each vertex $(i,j)$ with $i_0+j_0-2\leq i+j\leq i_0+j_0+1$. Equation $\mathrm{Leg}(y_{v_0},y_{v_2},y_{v_1^+},y_{v_1^-})=\mu_{v_0,v_2}$ determines then a system of equations implicitly defining the unknown $y_{v_2}$. Solving this system of equations is equivalent to giving the integrator $\phi_{v,w}$, as described in theorem \ref{integradorteorema}. We must observe from remark \ref{remark46} that the Legendre mapping uses the differential of the Lagrangian densities on the $v_0$-component, for two different simplices $\beta^+,\beta^-\in V_2$. These components are explicit computed along this section. The existence of an integrator (right-inverse of the Legendre mapping given in Remark \ref{remark46}) represents the center of the whole scheme that allows to reconstruct the discrete field, solution of the discrete  variational problem, when we know its values on some initial band $k-2\leq i+j\leq k+1$.

More precisely, to obtain the integrator, we must solve a equation $\mathrm{Leg}(y_{v_0},y_{v_2},y_{v_1^+},y_{v_1^-})=\mu_{v_0,v_2}$. Taking into account the dependence of $\mathrm{Leg}$ on $y_{v_2}=(r_2,R_2)$ is only in the terms $\Delta^{02}r$, $\Delta^{02}R$, we may first solve in these components $\Delta^{02} r$, $\Delta^{02}R$. We may observe from the expressions obtained in the differential of Lagrangian densities that the system of equations splits into a system for the $\RR^3$-component and another for the $Skew(3)$-component. The first one is a system of linear equations on $\Delta^{02}r$ that doesn't depend explicitly on $\Delta^{02}R$. Except in particular degenerate cases, this system leads to a unique explicit solution for $\Delta^{02}r$. The component on $Skew(3)$ is then a system of 3 equations for $\Delta^{02}R\in\RR^3$.
The main difficulty here is the non-linear behavior of the function $\dlog$, hence some linear or quadratic approximation of $\dlog$ may be applied. After obtaining the values $\Delta^{02}r$, $\Delta^{02}R$, the definition of these two components leads to $R_2=R_0\cdot \expp \Delta^{02}R$, $r_2=r_0+\Delta^{02}r$. We obtain then the configuration $y_{v_2}=(r_2,R_2)\in Y_{v_2}$. Application of this integration scheme on all vertices with $i+j=i_0+j_0$, we deduce the explicit configuration for the critical field on all vertices $(i,j)\in V_0$ with $i+j\leq i_0+j_0+2$. 

\section*{Appendix: Riemannian geometry on SO(3)}
We summarize next some useful results for the Lie group $SO(3)=\left\{ R\in \Hom(\RR^3,\RR^3)\,\colon\, R^t\cdot R=\Id_3\right\}$  (see, for example \cite{Moakher02,SimoVuQuoc}).

Consider the left product $l_R\colon S\in SO(3)\mapsto RS\in SO(3)$. Any tangent vector $A_R\in T_RSO(3)$ is determined by a unique left-invariant vector field $A\in\X^l(SO(3))$, whose value at $\Id$ is $A_{\Id}=\dd l_{R^t} A_R\in T_{Id}SO(3)\subset\Hom(\RR^3,\RR^3)$, fulfilling $A_{Id}^t+A_{Id}=0$. The space of left-invariant vector fields on $SO(3)$ and also each tangent space $T_RSO(3)$ are then naturally identified with the space $Skew(3)$ of skew-adjoint operators on $\RR^3$. The product mapping $pr\colon (R,S)\mapsto R^tS\in SO(3)$ has at any point $(R,S)\in SO(3)\times SO(3)$ the differential $\dd_{(R,S)}pr(e_1,e_2)=e_2-\ad_{S^tR}e_1$, if we use the identifications $T_RSO(3)=Skew(3)$, $T_SSO(3)=Skew(3)$, $T_{R^tS}SO(3)=Skew(3)$, and consider the linear mapping $\ad_R\colon Skew(3)\mapsto Skew(3)$, differential at $\Id$ of the internal group automorphism $\Ad_R\colon S\in SO(3)\mapsto RSR^t\in SO(3)$. Consequently (case $S=\Id$) the inversion mapping $\inv\colon R\mapsto R^t$ has at any point $R\in SO(3)$ the differential $\dd_R\inv=-\ad_R$, with the same identifications. 

Fix the standard euclidean metric and orientation choice on $\RR^3$. Fix the (halved) Frobenius scalar product $\langle A,B\rangle=\frac12\Trace A^tB$ on $\Hom(\RR^3,\RR^3)$. The linear mapping $v\in\RR^3\mapsto \widehat{v}\in Hom(\RR^3,\RR^3)$ defined by the exterior product $\widehat{v}(e)=v\times e$  is injective, it is further an isometry, with image the space $Skew(3)$, hence defining a linear isomorphism $\RR^3\simeq T_{\Id}SO(3)$. Further, considering the Lie algebra structure given by $\times$ on $\RR^3$ and given by $[A,B]=AB-BA$ on $Skew(3)\subset\Hom(\RR^3,\RR^3)$, the identification $v\leftrightarrow \widehat{v}$ is also a Lie-algebra isomorphism. The mapping $A\in\X(SO(3))\mapsto A_{\Id}\in Skew(3)$ also determines a Lie algebra isomorphism of $(Skew(3),\times)$ with the Lie algebra of left-invariant vector fields, with the commutator of fields as Lie bracket.

As a consequence we have a linear isomorphism of Lie algebras between $(\RR^3,\times)$ and the algebra of left-invariant vector fields on $SO(3)$. On each point $R\in SO(3)$, this is given as $v\in\RR^3\mapsto \dd l_R (\widehat{v})\in T_RSO(3)$. The adjoint mapping $\ad_R\colon Skew(3)\mapsto Skew(3)$ is identified then with $R\colon v\in \RR^3\mapsto R\cdot v\in\RR^3$, the differential of the mapping $(R,S)\mapsto R^tS$ at $(R,S)$ is then identified with $(v_1,v_2)\in\RR^3\oplus\RR^3 \mapsto v_2-S^t\cdot R\cdot v_1\in\RR^3$ and the differential of $R\mapsto R^t$ at $R$ is then identified with $v\in\RR^3\mapsto -R(v)\in\RR^3$

Moreover, the exponential mapping $\expp\colon \RR^3\rightarrow SO(3)$ can be expressed as $\expp 0=\Id$ and for non-vanishing vectors by Rodrigues' formula:
	\begin{equation}\label{expR3}
	\expp v=\Id+\frac{\sin \alpha}{\alpha} \widehat v+\frac{1-\cos\alpha}{\alpha^2}\widehat v\circ\widehat v\qquad (v=\alpha\cdot n,\, \alpha\neq 0,\, \|n\|=1 )
	\end{equation}
In this regard, we shall also note that $-\widehat{n}\circ\widehat{n}=\proj_n^\bot$ is the orthogonal projector to the plane orthogonal to $n$, and consequently $\widehat{n}^3=-\widehat{n}$, and $\widehat{v}^3=-\|v\|^2\cdot \widehat{v}$. Using this it is easy to see that the definition above makes $\exp sv$ a 1-parameter subgroup of $SO(3)$, with tangent vector $\widehat{v}\in Skew(3)$ at $s=0$. Properties of this mapping are $\expp{}\circ \ad_R=\Ad_R\circ \expp{}$, $\inv\circ\expp{}=\expp{}\circ(-\Id)$. It is not injective, namely $\expp v=\expp w$ precisely if $v,w$ are linearly dependent, and the norm of the difference $v-w$ belongs to $2\pi\ZZ$.

Elements $R\in SO(3)$ always have one eigenvector, and can be seen as a rotation with angle $\alpha\in[0,\pi]$ with respect to this vector. The trace is therefore $\Trace(R)=1+2\cos\alpha\in[-1,2]$. The two limiting cases correspond to the identity $R=\Id$, $\Trace(R)=3$, and half-turn rotations (that is, symmetries with respect to some axis $n$), with $\Trace(R)=-1$. These elements are precisely the involutive elements ($R^2=R$) on $SO(3)$.

From Rodrigues' formula (\ref{expR3}) we have $\Trace(\expp v)=3+\frac{1-\cos\|v\|}{\|v\|^2}\Trace(\widehat{v}^2)=3-2(1-\cos\|v\|)=1+2\cos\|v\|$ and $\expp (v)-\expp(-v)=2\frac{\sin \|v\|}{\|v\|}\widehat{v}$. Therefore, if we consider a noninvolutive element $R\in SO(3)$ with $R^2\neq R$, and hence $-1<\Trace(R)<3$, we may use $\Trace(R)=1+2\cos\|v\|$, $R-R^t=2\frac{\sin\|v\|}{\|v\|}\widehat{v}$ to determine the unique vector $v\in\RR^3$ with $\|v\|\in ]0,\pi[$ and such that $\expp v=R$.

For involutive elements, the unique element with $\Trace(R)=3$ is precisely $R=\Id_3$, which coincides with $\expp v$ in the case $v=0$. In the case of half-turn rotations, we have $R=\Id-2\proj_n^\bot=\expp \pi n=\expp -\pi n$, for some unit vector $n$. In this way, using the exponential mapping, the manifold $SO(3)$ can be identified with the closed ball with radius $\pi$ on $\RR^3$, gluing together antipodal points of its spherical boundary. This representation can be computationally simpler than the classical $3\times 3$ matrices used when working with this group elements

\begin{lemma}
The exponential mapping is a local diffeomorphism at each point, with injectivity radius $\rho=\pi$. It allows to identify the open neighborhood $U_{\Id}=\{R\in SO(3)\,\colon\, tr(R)>-1\}$ of $\Id\in SO(3)$ with the open ball $B_\pi(0)$  of $0\in \RR^3$.

The inverse mapping $\log\colon U_{\Id}\rightarrow B_{\rho}(0)$, called the group logarithm, is characterized for any pair $R\in U_{Id}$, $v\in B_\pi(0)$ by: 
	\begin{equation*}
	v=\log R \Leftrightarrow \Trace(R)=1+2\cos\|v\|,\quad R-R^t=2\frac{\sin\|v\|}{\|v\|}\widehat{v}
	\end{equation*}
and has the following properties:
	\begin{equation*}
	\log\circ \Ad_R=\ad_R\circ\log,\qquad \log\circ \inv=-\log
	\end{equation*}
	\begin{equation*}
	\dd_{\expp v} \log=-(\dd_{\expp -v}\log)\circ (\dd_{\expp v} \inv),\qquad (\dd_R\log)(e)=(\dd_{R^t}\log)(R\cdot e)
	\end{equation*}
\end{lemma}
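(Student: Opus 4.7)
The plan is to construct $\log\colon U_{\Id}\to B_\pi(0)$ directly from the two characterizing equations, verify it is a smooth two-sided inverse of $\expp{}|_{B_\pi(0)}$, and then read the equivariance and derivative identities off from the corresponding identities for $\expp{}$ that are recalled just above the lemma. For $R\in U_{\Id}$ one has $\Trace R\in(-1,3]$, so the equation $\Trace R=1+2\cos\alpha$ has a unique solution $\alpha\in[0,\pi)$; setting $\|v\|=\alpha$ and using $R-R^t=2\tfrac{\sin\alpha}{\alpha}\widehat v$ (with the convention $v=0$ when $\alpha=0$) to read off $\widehat v$ determines $v$ uniquely. That $\expp{}\circ\log=\Id$ and $\log\circ\expp{}=\Id$ then follow by matching traces and skew-symmetric parts against the consequences $\Trace(\expp v)=1+2\cos\|v\|$ and $\expp v-\expp(-v)=2\tfrac{\sin\|v\|}{\|v\|}\widehat v$ of Rodrigues' formula established in the preamble.

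Smoothness of $\log$ away from $R=\Id$ is immediate from its explicit formula; the genuine obstacle, and the main technical point to check, is smoothness at $R=\Id$, where the formula presents a $0/0$ indeterminate prefactor. This is handled by noting that $\alpha^2$ is a smooth function of $3-\Trace R=2(1-\cos\alpha)$ near the origin (apply the one-variable inverse function theorem to the analytic even function $\alpha\mapsto 2(1-\cos\alpha)$), and that $\alpha/(2\sin\alpha)$ extends to a smooth even function of $\alpha$, hence to a smooth function of $R$ through $\alpha^2$. Combined, this exhibits $\log$ as a smooth inverse to $\expp{}|_{B_\pi(0)}$, so $\expp{}$ is a local diffeomorphism at every $v\in B_\pi(0)$. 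The injectivity radius at $\Id$ is exactly $\pi$: inside $B_\pi(0)$ distinct vectors have distinct images by the bijection just established, while $\pm\pi n$ on the boundary both map to the common half-turn rotation $\Id-2\proj_n^\bot$, so the radius cannot be larger.

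For the equivariance identities, the preamble supplies $\expp{}\circ\ad_R=\Ad_R\circ\expp{}$ and $\inv\circ\expp{}=\expp{}\circ(-\Id)$. Since $\ad_R$ is an isometry of $\RR^3$ it preserves $B_\pi(0)$, and $\Ad_R$ preserves $U_{\Id}$ because the trace is a conjugation invariant; similarly $\inv$ preserves $U_{\Id}$ (trace is transpose-invariant) and $-\Id$ preserves $B_\pi(0)$. Composing with $\log$ on both sides of each identity therefore yields $\log\circ\Ad_R=\ad_R\circ\log$ and $\log\circ\inv=-\log$ on all of $U_{\Id}$. Finally, differentiating $\log\circ\inv=-\log$ at $R=\expp v$ immediately gives the first derivative identity $\dd_{\expp v}\log=-(\dd_{\expp(-v)}\log)\circ(\dd_{\expp v}\inv)$. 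Substituting $\dd_R\inv=-\ad_R$, identified with $e\mapsto -R\cdot e$ after the standard identifications of the tangent spaces with $\RR^3$ recalled in the preamble, converts this into $(\dd_R\log)(e)=(\dd_{R^t}\log)(R\cdot e)$, completing the list of identities in the lemma.
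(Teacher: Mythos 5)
Your proposal is correct and follows the same route the paper itself takes: the lemma is stated there without a formal proof, being justified by the immediately preceding paragraphs in which Rodrigues' formula gives $\Trace(\expp v)=1+2\cos\|v\|$ and $\expp v-\expp{(-v)}=2\frac{\sin\|v\|}{\|v\|}\widehat v$, and the vector $v$ is recovered from $R$ exactly as you describe. The points you add beyond that discussion --- the smoothness of $\log$ at $\Id$ via the even-function/inverse-function argument, and the derivation of the equivariance and derivative identities by composing with, and differentiating, the corresponding identities for $\expp{}$ --- are details the paper leaves implicit, and your treatment of them is sound.
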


On any Lie group it is known that the exponential mapping $\exp\colon T_{Id}G\mapsto G$ transforms 1-D linear subspaces into Riemannian geodesics, for any choice of a bi-invariant Riemannian metric. Taking the radius $\rho=\pi$ such that the exponential mapping is injective on the geodesic ball $B_{\rho}\subset \RR^3=Skew(3)$, for any element $A\in B_{\rho}$ the curve $(\exp sA)_{s\in[0,1]}$ is then a minimal geodesic with respect to the halved Frobenius metric. Consequently also $(R\exp sA)_{s\in[0,1]}$ are minimal geodesics, with respect to this metric. 

\begin{lemma}
For any $R_i,R_j\in SO(3)$ if $R_i^tR_j$ is not an involution (its square is not $\Id$), the shortest geodesic joininig $R_i$ at $s=0$ to $R_j$ at $s=1$ is given by:
	\begin{equation}\label{geodesicasSO3}
	\gamma(s)=R_i\exp (s\ac \log (R_i^tR_j))=\exp(s\ac \log(R_jR_i^t))R_i
	\end{equation}
The geodesic distance between $R_i,R_j$ is then given as $d(R_i,R_j)=\|\log(R_i^tR_j)\|=d_{ij}$, where $1+2\cos d_{ij}=tr(R_i^tR_j)$ holds.
\end{lemma}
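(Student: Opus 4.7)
The plan is to exploit the bi-invariance of the halved Frobenius metric, which has already been established in the appendix, so that left-translation by $R_i$ is an isometry and geodesics through the identity are exactly the one-parameter subgroups of $SO(3)$. First I would verify the endpoint conditions: at $s=0$ one has $\gamma(0)=R_i\exp(0)=R_i$, and at $s=1$ one has $\gamma(1)=R_i\exp(\log(R_i^tR_j))=R_i\cdot(R_i^tR_j)=R_j$, using the hypothesis $R_i^tR_j\in U_{\Id}$ (which is exactly the non-involution hypothesis) so that $\exp\circ\log=\Id$.

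Second, I would show $\gamma$ is a geodesic. For a bi-invariant metric on a Lie group, the geodesics issuing from $\Id$ are precisely the curves $s\mapsto\exp(sA)$ with $A\in T_{\Id}SO(3)$; setting $A=\log(R_i^tR_j)\in Skew(3)$ yields a geodesic from $\Id$ to $R_i^tR_j$. Since $l_{R_i}$ is an isometry, its composition with $s\mapsto\exp(sA)$ is again a geodesic, namely $\gamma$. The equivalence $R_i\exp(sA)=\exp(s\ac\Ad_{R_i}A)R_i$ follows from $\Ad_{R_i}\circ\exp=\exp\circ\ad_{R_i}$, and $\ad_{R_i}\log(R_i^tR_j)=\log(\Ad_{R_i}(R_i^tR_j))=\log(R_jR_i^t)$ (using the equivariance $\log\circ\Ad_R=\ad_R\circ\log$ recalled in the previous lemma), giving the second form.

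Third, I would establish minimality. Because $R_i^tR_j\in U_{\Id}$, the vector $A=\log(R_i^tR_j)$ lies strictly inside the open ball $B_\pi(0)\subset Skew(3)$ on which $\exp$ is a diffeomorphism onto $U_{\Id}$. The injectivity radius at $\Id$ is $\pi$, and by left-invariance the same holds at $R_i$; hence the geodesic $\gamma$, whose length equals $\|A\|<\pi$, is the unique minimizer in its homotopy class and realizes $\dist(R_i,R_j)$. The length computation uses that $s\mapsto\exp(sA)$ has velocity $\dd l_{\exp(sA)}(A)$, whose halved-Frobenius norm is $\|A\|$ by bi-invariance; so $d(R_i,R_j)=\|\log(R_i^tR_j)\|$. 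Finally $\Trace(R_i^tR_j)=1+2\cos\|\log(R_i^tR_j)\|$ is a direct consequence of Rodrigues' formula (\ref{expR3}), already used in the appendix.

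The only slightly delicate point is the minimality claim: one must cite (or re-derive) that for a bi-invariant metric, $\exp$ is a radial isometry on the ball of injectivity and that geodesics shorter than the injectivity radius are minimizing. Both facts are standard for bi-invariant Riemannian metrics on compact Lie groups, so I would invoke them rather than reprove them; the rest is algebraic verification.
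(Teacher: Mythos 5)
Your proposal is correct and follows essentially the same route as the paper, which proves this lemma implicitly in the paragraph immediately preceding it: one-parameter subgroups are geodesics for the bi-invariant metric, left translation by $R_i$ is an isometry, and minimality holds because $\|\log(R_i^tR_j)\|<\pi$ lies within the injectivity radius. Your additional verifications (endpoints, the $\Ad$-equivariance giving the second form, the trace identity via Rodrigues' formula) are exactly the routine steps the paper leaves to the reader; the only nitpick is that ``not an involution'' is $U_{\Id}\setminus\{\Id\}$ rather than all of $U_{\Id}$, since $\Id$ itself is involutive, but this affects nothing.
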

\begin{lemma}\label{lema65apend}
Consider the mapping $\Delta R\colon SO(3)\times SO(3)\rightarrow \RR^3$, taking any pair $(R_i,R_j)$ into $\Delta^{ij}R=\log R_i^tR_j$ (and defined only when $\Trace(R_i^tR_j)>-1$).

The induced linear mapping $\dd_{(R_i,R_j)}\Delta R\colon \RR^3\oplus\RR^3\rightarrow \RR^3$, determined using the identifications $T_{R_i}SO(3)\simeq \RR^3$, $T_{R_j}SO(3)\simeq \RR^3$, $T_e\RR^3\simeq \RR^3$ has the explicit expression:
	\begin{equation*}
	\begin{aligned}
	\dd_{(R_i,R_j)}\left(\Delta R\right)\colon (e_i,e_j)&\mapsto \left(\Id+\frac{d_{ij}}{2}\widehat{n}_{ij} +(1-\frac{d_{ij}\sin d_{ij}}{2-2\cos d_{ij}})\widehat{n}_{ij}^2\right)(e_j-R_j^tR_ie_i)=\\
	&=\left(\left((1-\frac{d_{ij}\sin d_{ij}}{2-2\cos d_{ij}})\widehat{n}_{ij}+ \frac{d_{ij}}{2}\Id\right)\widehat{n}_{ij}+\Id\right)(e_j-R_j^tR_ie_i)
	\end{aligned}
	\end{equation*}
where 
	\begin{equation*}
	1+2\cos d_{ij}=tr(R_i^tR_j),\quad \widehat{n}_{ij}=(R_i^tR_j-R_j^tR_i)/tr(\Id-(R_i^tR_j)^2)
	\end{equation*}
In the particular case $R_i=R_j=R\in SO(3)$, the linear mapping $d_{(R,R)}\left(\Delta R\right)$ is just $(e_i,e_j)\mapsto e_j-e_i$.
\end{lemma}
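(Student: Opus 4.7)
The strategy is to write $\Delta R$ as the composition $\log \circ \mathrm{pr}$, where $\mathrm{pr}\colon SO(3)\times SO(3)\to SO(3)$ sends $(R_i,R_j)\mapsto R_i^tR_j$, and apply the chain rule. The appendix already supplies $\dd\,\mathrm{pr}$ in the left-invariant trivializations $T_{R_i}SO(3)\simeq T_{R_j}SO(3)\simeq \RR^3$: one reads off $\dd_{(R_i,R_j)}\mathrm{pr}(e_i,e_j) = e_j - R_j^tR_i\,e_i$ as an element of $T_{R_i^tR_j}SO(3)\simeq\RR^3$. Hence the content of the lemma reduces to computing $\dd_Q\log$ at $Q=R_i^tR_j\in SO(3)$ and multiplying.

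For $\dd_Q\log$ I would use the left-trivialized differential of $\expp{}$ given by the standard Baker--Campbell--Hausdorff--Dynkin formula: at $v\in\mathrm{Lie}(SO(3))\simeq\RR^3$ the operator is
\begin{equation*}
\Psi(\widehat v):=\tfrac{1-e^{-\widehat v}}{\widehat v}=\sum_{k\geq 0}\tfrac{(-\widehat v)^k}{(k+1)!}
\end{equation*}
acting on $\RR^3$, so that $\dd_Q\log = \Psi(\widehat v)^{-1}$ with $v=\log Q$. The explicit closed form in $\{\Id,\widehat n,\widehat n^{\,2}\}$ follows by writing $v=d_{ij}\,n_{ij}$ with $\|n_{ij}\|=1$, exploiting Rodrigues' identity $\widehat n_{ij}^{\,3}=-\widehat n_{ij}$, and decomposing $\RR^3=\mathrm{span}(n_{ij})\oplus n_{ij}^\perp$. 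On $\mathrm{span}(n_{ij})$ the operator $\Psi(\widehat v)$ acts by the identity; on $n_{ij}^\perp$ Rodrigues' formula for $e^{-\widehat v}$ yields $\Psi(\widehat v)|_{n_{ij}^\perp} = \frac{\sin d_{ij}}{d_{ij}}\Id - \frac{1-\cos d_{ij}}{d_{ij}}\widehat n_{ij}|_{n_{ij}^\perp}$, an operator of the form $a\Id + bJ$ with $J^2=-\Id$ that is inverted by $(a^2+b^2)^{-1}(a\Id - bJ)$; the denominator simplifies to $(2-2\cos d_{ij})/d_{ij}^2$. Recombining the two summands via $\Id_{\mathrm{span}(n_{ij})}=\Id+\widehat n_{ij}^{\,2}$ and $\Id_{n_{ij}^\perp}=-\widehat n_{ij}^{\,2}$ produces the claimed coefficients
\begin{equation*}
\dd_Q\log = \Id + \tfrac{d_{ij}}{2}\widehat n_{ij} + \Bigl(1-\tfrac{d_{ij}\sin d_{ij}}{2-2\cos d_{ij}}\Bigr)\widehat n_{ij}^{\,2}.
\end{equation*}

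Composing $\dd_Q\log$ with $\dd\,\mathrm{pr}$ immediately yields the formula in the statement, the unit axis $n_{ij}$ and angle $d_{ij}$ being identified from $Q=R_i^tR_j$ via $1+2\cos d_{ij}=\Trace(Q)$ and $Q-Q^t=2\sin d_{ij}\,\widehat n_{ij}$. The degenerate case $R_i=R_j$ follows by passing to the limit $d_{ij}\to 0$: Taylor expansion gives $\frac{d_{ij}\sin d_{ij}}{2-2\cos d_{ij}}\to 1$, so both non-identity coefficients vanish and $\dd_Q\log\to\Id$, reducing the formula to $(e_i,e_j)\mapsto e_j-e_i$ since then $R_j^tR_i=\Id$.

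The main technical obstacle is obtaining the closed-form inverse of $\Psi(\widehat v)$ inside the three-dimensional algebra generated by $\{\Id,\widehat n_{ij},\widehat n_{ij}^{\,2}\}$; once the spectral decomposition $\RR^3=\mathrm{span}(n_{ij})\oplus n_{ij}^\perp$ is invoked this reduces to inverting a single $2\times 2$ block of the form $a\Id+bJ$ with $J^2=-\Id$, and the remaining bookkeeping of the chain rule and the two left-invariant trivializations is routine.
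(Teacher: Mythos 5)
Your proposal is correct and follows essentially the same route as the paper: factor $\Delta R=\log\circ\mathrm{pr}$, take $\dd\,\mathrm{pr}(e_i,e_j)=e_j-R_j^tR_ie_i$ from the left-trivialization, and obtain $\dd_Q\log$ as the inverse of the Baker--Campbell--Hausdorff series $\sum_k(-\widehat v)^k/(k+1)!$ for $\dd_v\expp{}$. The only (immaterial) difference is that the paper sums the series into the algebra generated by $\{\Id,\widehat n,\widehat n^2\}$ and inverts there using $\widehat n^3=-\widehat n$, whereas you invert blockwise on $\mathrm{span}(n_{ij})\oplus n_{ij}^\perp$; both yield the same closed form.
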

\begin{proof}
A differentiated version of Campbell-Baker-Hausdorff formula (see \cite{Hall03} for the proof) shows that the exponential mapping $\expp\colon Lie(G)\rightarrow G$, for any Lie group G, has a differential at any arbitrary point $v\in Lie(G)$ satisfying:
	\begin{equation*}
	\left(d_{\Id}l_{\expp v}\right)^{-1}\left(\frac{d}{dt}\right)_{t=0}\expp (v+tw)=\left(\phi(-\ads v)\right)(w)\in T_{\Id}G=Lie(G) 
	\end{equation*}
where $l_{\expp v}$ is left multiplication with $\expp v$, where $\ads\colon Lie(G)\rightarrow Hom(Lie(G),Lie(G))$ is the linear mapping induced by $\ad\colon G\rightarrow Aut(Lie(G))$, given as $(\ads v)(e)=[v,e]$, and where $\phi(z)=\sum_{k=0}^\infty \frac{1}{(k+1)!} z^k$, for any linear mapping $z\colon Lie(G)\rightarrow Lie(G)$.

For the particular case $G=SO(3)$, we know that $Lie(G)=(\RR^3,\times)$, hence $\ads v=\widehat{v}$. Taking into account that $\widehat{v}^3=-\|v\|^2\cdot \widehat{v}$ we also have $\widehat{v}^{2m+j}=(-\|v\|^2)^m\cdot \widehat{v}^j$, for $j=1,2$ and for any $m\in\mathbb{N}$. We may conclude that $\dd_0\expp=\Id$, and for $\|v\|\neq 0$ (using the identification $\left(d_{\Id}l_{\expp v}\right)^{-1}\colon T_{\expp v}SO(3)\rightarrow T_{Id}SO(3)=Skew(3)$):
	\begin{equation*}
	\begin{aligned}
	\dd_v\expp=&\sum_{k=0}^\infty \frac{1}{(k+1)!} (-\widehat{v})^k=\Id+\left(\sum_{m=1}^\infty \frac{1}{(2m)!}(-1)^{2m-1}  (-\|v\|^2)^{m-1}\right)\cdot \widehat{v} +\\
	&+ \left(\sum_{m=1}^\infty \frac{1}{(2m+1)!}(-1)^{2m}  (-\|v\|^2)^{m-1}\right)\cdot \widehat{v}^2=\\
	&=\Id-\frac{1-\cos\|v\|}{\|v\|^2}\widehat{v}+\frac{\|v\|-\sin\|v\|}{\|v\|^3}\widehat{v}^2
	\end{aligned}
	\end{equation*} 
The mapping $\expp\colon \RR^3\rightarrow SO(3)$, defined in (\ref{expR3}), has the following associated differential at any given point $v\in\RR^3$:
	\begin{equation*}
	\dd_v\expp=\Id-\frac{1-\cos\alpha}{\alpha}\widehat{n}+(1-\frac{\sin\alpha}{\alpha})\widehat{n}^2
	\end{equation*}
where $\alpha=\|v\|>0$, $n=\frac{1}{\alpha}v$. In the particular case $v=0$ we have $\dd_0\expp=\Id$.

Using $\widehat{n}^3=-\widehat{n}$ for any unitary vector $n\in\mathbb{R}^3$ we easily conclude that the inverse of this mapping is:
	\begin{equation}\label{derivalogaritmo}
	\dd_{\expp v}\log=(\dd_v\expp)^{-1}=\Id+\frac{\alpha}{2}\widehat{n}+\left(1-\frac{\alpha\sin\alpha}{2-2\cos\alpha}\right)\widehat{n}^2
	\end{equation}
or $\dd_{Id}\log=(\dd_0\expp)^{-1}=\Id$ for the particular case $v=0$.

It was already indicated that $(R,S)\in SO(3)\times SO(3)\mapsto R^tS\in SO(3)$ induces at any point $(R,S)$ the following differential
(Using the identifications $T_RSO(3)\simeq \RR^3$, $T_SSO(3)\simeq \RR^3$, $T_{R^tS} SO(3)\simeq \RR^3$): $(e_1,e_2)\mapsto e_2-\ad_{S^tR}e_1$. Taking into account that $\ad_R\colon e\in\RR^3\mapsto R(e)\in\RR^3$, we conclude that our mapping has the following associated differential: $(e_1,e_2)\mapsto e_2-S^t\cdot R\cdot e_1$,

From the definition of $\Delta R$ and taking $v=\log R_i^tR_j$ we get now:
	\begin{equation}\label{dRR}
	\left(\dd_{(R_i,R_j)}\left(\Delta R\right)\right) (e_i,e_j)=(\dd_{\expp v}\log )(e_j-R_j^tR_ie_i)
	\end{equation}
and combining this expression with (\ref{derivalogaritmo}) yields the formula in the lemma.\qed
\end{proof}
In the previous lemma we should mention that $0<d_{ij}=d(R_i,R_j)=\|\log R_i^tR_j\|=\|\Delta^{ij}R\|<\pi$ (with the halved Frobenius norm on $Skew(3)$, or equivalently, the canonical norm in $\RR^3$),  $d_{ij}\cdot n_{ij}=\log R_i^tR_j$. Observe that $d_{(R_i,R_j)}\Delta^{ij}R$ doesn't depend on $R_i$ or $R_j$, but only on $R_i^tR_j=\expp \Delta^{ij}R$, therefore we may define:
\begin{define}
We call $\dlog\colon \RR^3\rightarrow \Hom(\RR^3,\RR^3)$ the following mapping:
	\begin{equation*}
	\dlog v=\left( \frac{2-2\cos\|v\|-\|v\|\sin\|v\|}{\|v\|^2(2-2\cos\|v\|)}\widehat{v}+\frac12 \Id\right)\circ \widehat{v}+\Id
	\end{equation*}
\end{define}
(Observe that, for numerical reasons, the well-defined fraction accompanying $\widehat{v}^2$ should not be computed  in floating point arithmetic applying the substractions, that are ill-conditioned for small $\|v\|$. An equivalent form for this fraction would be $\frac{2\sin\|v\|-\|v\|(1+\cos\|v\|)}{2\|v\|^2\sin\|v\|}$, where only the difference appearing on the numerator is ill-conditioned and has to be treated with care)
\begin{lemma}
There holds:
	\begin{equation}\label{dlog}
	(\dd_{(R_i,R_j)}(\Delta R))(e_i,e_j)=\left( \dlog \Delta^{ij} R\right)(e_j)-\left( \dlog \Delta^{ij}R\right)^t(e_i)
	\end{equation}
where $(\Delta R)(R_i,R_j)=\log R_i^tR_j=\Delta^{ij}R$, and we are using the standard identifications $e\in\mathbb{R}^3\simeq Skew(3)\simeq T_R SO(3)$.
\end{lemma}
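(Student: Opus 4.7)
The plan is to start from the closed-form expression for $\dd_{(R_i,R_j)}\Delta R$ obtained in the previous lemma, recognize the scalar polynomial in $\widehat{n}_{ij}$ appearing there as precisely $\dlog \Delta^{ij}R$, and then reduce the statement to an algebraic identity relating $\dlog v$, its transpose, and $\expp(-v)$.

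First I would observe that, setting $v=\Delta^{ij}R=d_{ij}\, n_{ij}$, one has $\widehat{v}=d_{ij}\widehat{n}_{ij}$ and $\widehat{v}^{2}=d_{ij}^{2}\widehat{n}_{ij}^{2}$, so the definition of $\dlog$ simplifies to
\[
\dlog v \;=\; \Id + \tfrac{d_{ij}}{2}\,\widehat{n}_{ij} + \Bigl(1-\tfrac{d_{ij}\sin d_{ij}}{2-2\cos d_{ij}}\Bigr)\widehat{n}_{ij}^{\,2},
\]
which is exactly the operator appearing in the closed form of the previous lemma. Therefore
\[
(\dd_{(R_i,R_j)}\Delta R)(e_i,e_j) \;=\; (\dlog \Delta^{ij}R)\bigl(e_j - R_j^{t}R_i\,e_i\bigr).
\]
Splitting the argument gives $(\dlog \Delta^{ij}R)(e_j)-(\dlog \Delta^{ij}R)(R_j^{t}R_i\,e_i)$, so the statement of the lemma reduces to the operator identity
\[
(\dlog v)\circ (R_j^{t}R_i) \;=\; (\dlog v)^{t},
\]
and since $R_j^{t}R_i=(R_i^{t}R_j)^{-1}=\expp(-v)$ on $SO(3)$, this becomes
\[
(\dlog v)\circ \expp(-v) \;=\; (\dlog v)^{t}. \tag{$\star$}
\]

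The main technical step is to verify $(\star)$. Writing $A=\widehat{n}_{ij}$ and $\alpha=d_{ij}$, note that $A^{t}=-A$, $(A^{2})^{t}=A^{2}$, and the cubic identity $A^{3}=-A$ holds since $n_{ij}$ is a unit vector. Hence $(\dlog v)^{t}=\Id-\tfrac{\alpha}{2}A+B\,A^{2}$ with $B=1-\tfrac{\alpha\sin\alpha}{2-2\cos\alpha}$, while Rodrigues' formula gives $\expp(-v)=\Id-\sin\alpha\,A+(1-\cos\alpha)A^{2}$. Expanding the product $(\Id+\tfrac{\alpha}{2}A+BA^{2})\bigl(\Id-\sin\alpha\,A+(1-\cos\alpha)A^{2}\bigr)$ and repeatedly applying $A^{3}=-A$, $A^{4}=-A^{2}$ reduces the result to a linear combination of $\Id$, $A$, $A^{2}$. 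I would then check that the $\Id$-coefficient is $1$, and collapse the $A$-coefficient using the identity $\sin^{2}\alpha/(1-\cos\alpha)=1+\cos\alpha$ to obtain $-\tfrac{\alpha}{2}$, and collapse the $A^{2}$-coefficient using $\tfrac{2}{2-2\cos\alpha}$ to recover exactly $B$. These three verifications establish $(\star)$ and finish the proof.

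The main obstacle is purely computational: carrying out the expansion in $(\star)$ without error, since the $A$- and $A^{2}$-coefficients each require a trigonometric simplification that only succeeds because of the very specific form of $B$. All other ingredients (the explicit formula of the previous lemma, the identification $R_j^{t}R_i=\expp(-v)$, and the skew/symmetry behavior of $A$ and $A^{2}$ under transposition) are immediate. The special case $R_i=R_j$ is recovered by setting $v=0$, since then $\dlog 0=\Id$ and the right-hand side collapses to $e_j-e_i$.\qed
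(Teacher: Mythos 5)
Your proposal is correct, and I verified that the operator identity $(\star)$ you reduce to does hold: expanding $(\Id+\tfrac{\alpha}{2}A+BA^{2})(\Id-\sin\alpha\,A+(1-\cos\alpha)A^{2})$ with $A^{3}=-A$ and the simplification $\sin^{2}\alpha/(1-\cos\alpha)=1+\cos\alpha$ does yield $\Id-\tfrac{\alpha}{2}A+BA^{2}=(\dlog v)^{t}$. However, your route is genuinely different from the paper's. The paper also starts from the closed form of the preceding lemma, but it only uses it for the partial evaluation $(\dd_{(R_i,R_j)}(\Delta R))(0,e_j)=(\dlog \Delta^{ij}R)(e_j)$; the $e_i$-slot is then handled structurally rather than computationally: since $\log\circ\inv=-\log$ gives $(\Delta R)(R_i,R_j)=-(\Delta R)(R_j,R_i)$, one gets $(\dd_{(R_i,R_j)}(\Delta R))(e_i,0)=-(\dd_{(R_j,R_i)}(\Delta R))(0,e_i)=-(\dlog\Delta^{ji}R)(e_i)$, and the elementary observation $\dlog(-v)=(\dlog v)^{t}$ (immediate because $\widehat{n}\mapsto-\widehat{n}$ under $v\mapsto-v$ while $\widehat{n}^{2}$ is fixed) finishes the argument with no trigonometry at all. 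What the paper's argument buys is brevity and robustness: it never touches Rodrigues' formula and would survive a change in the explicit form of $\dlog$ as long as the swap-antisymmetry of $\Delta R$ persists. What your argument buys is a self-contained matrix verification that, as a by-product, establishes the identity $(\dd_{\expp v}\log)\circ\expp(-v)=(\dd_{\expp v}\log)^{t}$, which is essentially the relation $(\dd_R\log)(e)=(\dd_{R^t}\log)(R\cdot e)$ already recorded (without proof) in the first appendix lemma; so your computation in effect supplies the proof of that earlier claim as well. The only caveat is that you describe rather than execute the expansion, so the burden of the three coefficient checks remains, but each one does close as you predict.
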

\begin{proof}
Using the definition of $\dlog$ and lemma \ref{lema65apend} we get:
	\begin{equation*}
	(\dd_{(R_i,R_j)}(\Delta R))(0,e_j)=\left( \dlog \Delta^{ij}R\right)(e_j)
	\end{equation*}
Taking into account $\log\circ\inv=-\log$, we conclude that $\Delta^{ij}R=(\Delta R)(R_i,R_j)=-(\Delta R)(R_j,R_i)=-\Delta^{ji}R$. Therefore 
	\begin{equation*}
	(\dd_{(R_i,R_j)}(\Delta R))(e_i,0)=-(\dd_{(R_j,R_i)}(\Delta R))(0,e_i)=-\left( \dlog \Delta^{ji}R\right)(e_i)
	\end{equation*}
We may observe that $\dlog(-v)=(\dlog v)^t$. There also holds $\Delta^{ji}R=-\Delta^{ij}R$. Therefore:
	\begin{equation*}
	(\dd_{(R_i,R_j)}(\Delta R))(e_i,e_j)=( \dlog \Delta^{ij}R)(e_j)-(\dlog \Delta^{ji}R)(e_i)=( \dlog \Delta^{ij}R)(e_j)-(\dlog \Delta^{ij}R)^t(e_i)
	\end{equation*}
concluding our proof.\qed
\end{proof}

\section*{Acknowledgements} The first Author  has been partially supported by Funda\c{c}\~ao para a Ci\^encia e a Tecnologia (Portugal) through projects PTDC/MAT-GEO/0675/2012 and PTDC/MAT/120411/2010

The second Author  has been partially supported by Funda\c{c}\~ao para a Ci\^encia e a Tecnologia (Portugal), UID/MAT/04561/2013 and Ministerio de Econom\'{\i}a y Competitividad (Spain), MTM2010-19111.

\section*{bibliography}

\end{document}